\newtheorem{proposition}{Proposition}
\newtheorem{lemma}{Lemma}
\newtheorem{theorem}{Theorem}
\title{A logic of judgmental existence and its relation to proof irrelevance}
\author{Ivo Pezlar\footnote{Czech Academy of Sciences, Institute of Philosophy, Jilska 1, Prague 110 00.}}
\date{} % March 2024
\DeclareRobustCommand
\newcommand{\mirroredE}{%
  \mathrel{%
      \ensuremath{\equiv\kern-0.88ex\raisebox{0.1ex}{\rule{0.4pt}{0.93ex}}}%
  }%
}
\newcommand{\EXI}{\triangle} %lozenge \Diamond  \triangle
\newcommand{\TRU}{\Box} % \blacksquare  \boxdot \boxbox \boxplus \boxtimes
\newcommand{\existsJ}{%
  \mathrel{%
      %\raisebox{0.1ex}{\rotatebox[origin=c]{180}{\tiny \EXI}}%
      %\cdot
      %\uptherefore
      \therefore
      %\diamonddots
  }%
}
\newcommand{\starJ}{%
  \mathrel{%
      %*%
      \star  %
  }%
}
\newcommand{\implyT}{%
  \mathrel{%
      \twoheadrightarrow
  }%
}
\newcommand{\Sopen}{\mathopen{\lsem}}
\newcommand{\Sclose}{\mathclose{\rsem}}
\begin{document}

\maketitle

\paragraph{Abstract.} 
We introduce a simple natural deduction system for reasoning with judgments of the form ``there exists a proof of $\varphi$'' to explore the notion of judgmental existence following Martin-L\"{o}f's methodology of distinguishing between judgments and propositions. In this system, the existential judgment can be internalized into a modal notion of propositional existence that is closely related to truncation modality, a key tool for obtaining proof irrelevance, and lax modality. We provide a computational interpretation in the style of the Curry-Howard isomorphism for the existence modality and show that the corresponding system has some desirable properties such as strong normalization or subject reduction.

\paragraph{Keywords.} judgmental existence, constructive existence, proof irrelevance, truncation, lax modality, Curry-Howard correspondence, homotopy type theory

%blinded
\paragraph{Funding.} This paper is an outcome of the project Logical Structure of Information Channels, no. 21-23610M, supported by the Czech Science Foundation and realized at the Institute of Philosophy of the Czech Academy of Sciences.

\section{Introduction}
\label{sec:intro}

In the constructive and intuitionist tradition, we can identify two closely related but distinct notions of a proposition $\varphi$ being true, let us call them \textit{true}$_1$ and \textit{true}$_2$, informally:

\begin{enumerate}
    \item[(1)] $\varphi$ is $\textit{true}_1$ $\; \Leftrightarrow \;$ we have a proof $a$ of $\varphi$\footnote{Other common variants: ``we have found a proof $a$ of $\varphi$'', ``we have constructed $a$ of $\varphi$'', ``we know a proof $a$ of $\varphi$'' etc. See, e.g., \cite{nordstrom1990}, p. 29 (``[t]o know that the proposition $A$ is true is to have an element $a$ in $A$''), \cite{troelstra1988}, p. 4 (``[a] statement is \textit{true} if we have proof of it''), \cite{hottbook2013}, p. 41 (``to show that a proposition is true [\ldots] corresponds to exhibiting an element of the type corresponding to that proposition'').}
    
    \item[(2)] $\varphi$ is $\textit{true}_2$ $\; \Leftrightarrow \;$ there exists a proof of $\varphi$\footnote{Other common variants: ``$\varphi$ has a proof'', ``$\varphi$ has been proved''. See, e.g., \cite{dummett1975}, p. 31 (``a mathematical statement is intuitionistically true if there exists an (intuitionistic) proof of it''), \cite{martinlof1991}, p. 141 (``[i]ntuitionistically, truth of a proposition is analyzed as existence of a proof; a proposition is true if there exists a proof of it''), \cite{coquand1989}, p. 4 (``we say that $\varphi$ is true [\ldots] iff $\varphi$ has a proof''). Notice also the subtle difference between ``we have a proof of $\varphi$'' vs. ``$\varphi$ has a proof'' reflecting the difference between (1) and (2).}
\end{enumerate}

\noindent At first glance, these notions might seem interchangeable but there is a subtle and important difference between them. To call something true according to (1) we need to have the actual proof $a$, i.e., to know its construction. However, to call something true according to (2), we just need to know that there exists some proof but what exactly its construction looks like is not important. Also, note that there is a clear difference in order of conceptual priority between (1) and (2): first, we have to have a proof $a$ of $\varphi$ to be able to claim that it exists, i.e., \textit{true}$_2$ is a secondary notion to \textit{true}$_1$.

The notion of being true in the sense of (1) is tightly associated with the idea of proof-relevant approaches to logic and mathematics (which came into dominance especially with the discovery of the Curry-Howard correspondence between propositions and types, \cite{curry1952b}, \cite{howard1980}), while the notion of being true in the sense of (2) is connected with proof-irrelevant approaches:\footnote{The idea of proof irrelevance goes back at least to \cite{debruijn1974}, reprinted in \cite{nederpelt1994}.}

\begin{itemize}
    \item \textit{proof-relevant approaches}: proofs are treated as proper mathematical objects: it is not enough to know that a proposition is true, we also need to have its proof, i.e., to know its structure. In practice, this means that we can have different proofs for the same proposition.

    \item \textit{proof-irrelevant approaches}: proofs are not treated as proper mathematical objects: it is sufficient to know that a proposition is true, the proofs themselves do not matter beyond the fact that they exist, i.e., their structure is unimportant. In practice, this means any two proofs of the same proposition are considered equal (= the proof irrelevance principle).\footnote{In this paper, we will focus on propositional proof irrelevance, not a judgmental one. In other words, any two proofs of the same will be considered extensionally equal, not intensionally.}    
\end{itemize}

\noindent From this perspective, we can view \textit{true}$_1$ and \textit{true}$_2$ as capturing proof-relevant and proof-irrelevant notions of truth, respectively: \textit{true}$_1$ cares about the structure of proofs, \textit{true}$_2$ does not. This also means that $\varphi$ is $\textit{true}_1$ entails $\varphi$ is  $\textit{true}_2$ (we can always ignore structure if there is some assumed) but not vice versa (if there was no structure assumed, we cannot summon it out of nowhere). So, let us call a proposition $\varphi$ that is $\textit{true}_1$ as \textit{proof-relevantly true} or simply \textit{true} and a proposition $\varphi$ that is $\textit{true}_2$ as \textit{proof-irrelevantly true} or simply \textit{just true}:\footnote{Distinguishing between proof relevance and proof irrelevance with no explicit proof terms might seem odd at first but it is still applicable. For example, $\varphi \textit{ true}$ can be informally interpreted as ``we know there is a proof for $\varphi $ and we know how it looks like'' (i.e., the judgment $\varphi \textit{ true}$ is effectively an abbreviation or coding of $a : \varphi$) while $\varphi \textit{ just true}$ says ``we know there is a proof for $\varphi $ but we don't know how it looks like'' (i.e., we cannot ``decode'' the judgment $\varphi \textit{ just true}$ back to $a : \varphi$).}

\begin{enumerate}
    \item[] $\varphi$ is $\textit{true}_1$ $\; \Leftrightarrow \;$ $\varphi$ is $\textit{proof-relevantly true}$   $\; \Leftrightarrow \;$   $\varphi \textit{ true}$
    
    \item[] $\varphi$ is $\textit{true}_2$ $\; \Leftrightarrow \;$ $\varphi$ is $\textit{proof-irrelevantly true}$   $\; \Leftrightarrow \;$   $\varphi \textit{ just true}$
\end{enumerate}

Now, in a type-theoretic setting, the right-hand side of $\; \Leftrightarrow \;$ in (1), i.e., the judgment ``we have a proof $a$ of $\varphi$'' is formalized as the judgment $a : \varphi$:

\begin{itemize}
    \item[(1$'$)] we have a proof $a$ of $\varphi$ $\; \Leftrightarrow \;$ $a : \varphi$
\end{itemize}

\noindent It is the basic judgment of modern type theories utilizing the Curry-Howard correspondence (e.g., Martin-L\"{o}f's constructive type theory (CTT) \cite{martin-lof1984}, Coquand and Huet's calculus of constructions (CoC) \cite{coquand1988}, Luo's unified theory of dependent types (UTT) \cite{luo1994}, homotopy type theory (HoTT), \cite{hottbook2013}).

Surprisingly, however, the right-hand side of $\; \Leftrightarrow \;$ in (2), i.e., the judgment ``there exists a proof of $\varphi$'' is effectively absent. To our knowledge, there are no type theories explicitly working with basic judgments corresponding to ``there exists a proof of $\varphi$'' (in Section \ref{sec:related} we discuss some related systems, including Martin-L\"{o}f's constructive type theory \cite{martin-lof1984} which includes the judgment $\varphi \textit{ true}$ similar in its informal meaning to our $\varphi \textit{ just true}$). In this paper, we will fill this gap and develop and explore a system that directly deals with a new formal judgment corresponding to the right-hand side of $\; \Leftrightarrow \;$ in (2), i.e., to the informal judgment ``there exists a proof of $\varphi$''. We will denote it as $a \existsJ \varphi$ and call it \emph{judgmental existence} or existential judgment.

Thus, we obtain:

\begin{itemize}
    \item[(2$'$)] there exists a proof of $\varphi$ $\; \Leftrightarrow \;$ $a \existsJ \varphi$
\end{itemize}

\noindent However, first, we consider a purely logical variant of the system, i.e., a non-computational variant without explicit proof expressions where $a \existsJ \varphi$ will be replaced with the judgment of the form $\varphi \textit{ just true}$ with the same meaning ``there exists a proof of $\varphi$''.

Note that from (2) and (2$'$) we also obtain:

\begin{itemize}
    \item[(2$''$)] $\varphi$ is $\textit{true}_2$ $\; \Leftrightarrow \;$ $a \existsJ \varphi$
\end{itemize}

\noindent however, we will prefer the reading of $a \existsJ \varphi$ as ``there exists a proof of $\varphi$'' to ``$\varphi$ is just true'' even though they are interchangeable.

\medskip

The main contribution of this paper is that we devise a simple calculus for reasoning with existential judgments of general form ``there exists a proof of $\varphi$''. More specifically, we introduce two variants of the calculus: a purely logical one for reasoning with existential judgments of the form $\varphi \textit{ just true}$ and a computational one for reasoning with existential judgments of the form $a \existsJ \varphi$. In its present form, the calculus deals only with a fragment of propositional logic containing two operators, implication ($\to$) and propositional existence modality ($\EXI$) used for internalization of the existential judgment.\footnote{The term ``propositional existence'' for propositions of the form $\EXI \varphi$ may not be ideal, as it could be mistaken for ``existential proposition'' which refers to propositions of the form $\exists x : \varphi , \ldots$. However, we prefer this terminology as it aligns well with the term ``judgmental existence''.} We show that the resulting calculus is strongly normalizing and has the subject reduction property. It also turns out that the existence modality operator $\EXI$ is closely related to the notion of propositional truncation $\TRU$, a crucial tool for capturing proof irrelevance in otherwise proof-relevant systems. We do not consider dependent types and proof irrelevance is assumed to be propositional (not judgmental). This is sufficient to illustrate all the key concepts presented in the paper.

Earlier, we have said that there are no type theories with a basic judgment corresponding to ``there exists a proof of $\varphi$''. That is, of course, not to say that the importance of disregarding proofs, i.e., the idea of proof irrelevance, has gone unnoticed. On the contrary, it is well well-recognized and studied issue in the literature.  However, many of the approaches for obtaining proof irrelevance do not explicitly introduce judgmental existence, i.e., judgments of the form $\varphi \textit{ just true}$/$a \existsJ \varphi$, and treat it only indirectly via a propositional modality (most notably the truncation modality\footnote{Also known as squash type, bracket type or subsingleton type. We move away from the usual notation for truncation ``$\Vert \varphi \Vert$'' towards the simpler notation ``$\TRU\varphi$''.} where, e.g., $\TRU\varphi$ can be understood as ``$\varphi$ is inhabited'', see, e.g., \cite{awodey2004}, \cite{gilbert2019}, \cite{hottbook2013}). And even those approaches that take into account the judgmental existence (see, e.g., \cite{valentini1998}, \cite{pfenning2001b}, \cite{reed2002}, \cite{abel2012}), treat the corresponding judgments $\varphi \textit{ just true}$/$a \existsJ \varphi$ only as a notational abbreviation for other judgments (we discuss the related approaches more in Section \ref{sec:related}). To our knowledge, there are currently no approaches that would treat $\varphi \textit{ just true}$/$a \existsJ \varphi$ as a basic judgment.

Finally, it is important to note that the judgment $\varphi \textit{ just true}$ expresses a notion of existence that is distinct from the notion of existence typically associated with existential propositions, i.e., propositions formed via the existential quantifier $\exists$. To see this, consider the following observation by \cite{martinlof1993} (originally made in the context of a related judgment $\varphi \textit{ exists}$, see Section \ref{sec:related}). If we were to attempt to express the notion of existence behind $\varphi \textit{ just true}$ via $\exists$ we would be inadvertently forming a new proposition of the form $\exists x : \varphi \ldots$ and since it would be a proposition, we should also be able to explain what it means for it to be true in the sense of (2). This would get us ``a proposition $\exists x : \varphi \ldots$ is $\textit{true}_2$ $\Leftrightarrow$ there exists a proof of it'' but here the notion of judgmental existence appears once more. And, of course, any further attempts to explain it via $\exists$ again would lead to an infinite regress. Thus, propositions of the form $\exists x : \varphi \ldots$ (as do all other propositions) presuppose this notion of judgmental existence.

\medskip

\noindent \textit{Structure}. In Section \ref{sec:logical}, we introduce a purely logical variant of the calculus for reasoning with existential judgments of the form $\varphi \textit{ just true}$. In Section \ref{sec:computational}, we introduce its computational variant in the style of the Curry-Howard correspondence for reasoning with existential judgments of the form $a \existsJ \varphi$ with explicit proof expressions. Furthermore, we show that the system has some desirable properties such as strong normalization and subject reduction. In Section \ref{sec:truncation}, we explore the connection between existential judgment and propositional truncation modality. In Section \ref{sec:related}, we survey some related work, including the relation of existence modality to lax modality.

\section{A logical variant}
\label{sec:logical}

In this section, we introduce a natural deduction calculus for reasoning with judgments of the form $\varphi \textit{ true}$ (= $\textit{true}_1$) and $\varphi \textit{ just true}$ (= $\textit{true}_2$). The former tells us that $\varphi$ is proof-relevantly true (i.e., the structure of the corresponding proof of $\varphi$ is known/relevant to us), while the latter tells us that $\varphi$ is proof-irrelevantly just true (i.e., the structure of the proof of $\varphi$ is unknown/irrelevant to us). In other words, it only tells us that there exists a proof of $\varphi$ and nothing more. We consider a fragment of propositional logic containing only implication $\to$ and existence modality $\EXI$.

We show that the system satisfies the structural properties of exchange, weakening, and contraction for both judgments $\varphi \textit{ true}$ and $\varphi \textit{ just true}$ and we prove the substitution lemma for all possible combinations of these judgments. Furthermore, we prove some basic propositions about the relationship of $\to$ and $\EXI$.

\subsection{Formal system}

\noindent Language:

\begin{align*}
\text{Propositions } \varphi, \psi & \;::=\; p \mid \varphi \to \psi \mid \EXI\varphi \\
\text{Hypotheses } \Gamma & \;::=\; \cdot \mid \Gamma, \varphi \textit{ true} \\
\end{align*}

\noindent Categorical judgments:
$$\varphi \textit{ true} \qquad \varphi \textit{ just true} $$

\noindent Hypothetical judgments:\footnote{Note that existential judgments can appear only in the consequent of the hypothetical judgments. In other words, we do not allow assumptions of the form $\varphi \textit{ just true}$.}

$$\Gamma \vdash \varphi \textit{ true} \qquad \Gamma \vdash \varphi \textit{ just true}$$

\noindent These judgments are governed by the following rules and rules schemata:

\begin{center}
\AxiomC{}
\RightLabel{\footnotesize \footnotesize \textsc{\textsc{hyp}}}
\UnaryInfC{$\Gamma, \varphi \textit{ true} \vdash \varphi \textit{ true}  $}
\DisplayProof
\quad
\AxiomC{$\Gamma \vdash \varphi \textit{ true}$}
\RightLabel{\footnotesize \footnotesize \textsc{\textsc{just}}}
\UnaryInfC{$\Gamma \vdash \varphi \textit{ just true}$}
\DisplayProof
\end{center}

\begin{center}
\AxiomC{$\Gamma, \varphi \textit{ true} \vdash \psi \textit{ star}$}
\RightLabel{\footnotesize \footnotesize $\to$I$\starJ$}
\UnaryInfC{$\Gamma \vdash \varphi \to \psi \textit{ star}  $}
\DisplayProof
\quad
\AxiomC{$\Gamma \vdash \varphi \to \psi \textit{ star}  $}
\AxiomC{$\Gamma \vdash \varphi \textit{ true}  $}
\RightLabel{\footnotesize $\to$E$\starJ$}
\BinaryInfC{$\Gamma \vdash \psi \textit{ star}  $}
\DisplayProof
\end{center}

\begin{center}
\AxiomC{$\Gamma \vdash \varphi \textit{ just true}$}
\RightLabel{\footnotesize $\EXI$I$\starJ$}
\UnaryInfC{$\Gamma \vdash \EXI\varphi \textit{ star}  $}
\DisplayProof
\quad
\AxiomC{$\Gamma \vdash \EXI\varphi \textit{ star}$}
\AxiomC{$\Gamma , \varphi \textit{ true} \vdash \gamma \textit{ just true}$}
\RightLabel{\footnotesize $\EXI$E$\starJ$}
\BinaryInfC{$\Gamma \vdash \gamma \textit{ just true}  $}
\DisplayProof
\end{center}

\noindent where $\textit{star}$ stands for either ``$\textit{true}$'' or ``$\textit{just true}$'' (all occurencens must be consistent). Thus, e.g., $\to$I and $\to$E will indicate instances of the rules $\to$I$\starJ$ and $\to$E$\starJ$ with $\textit{true}$, while $\to$Ij and $\to$Ej with $\textit{just true}$. Analogously, $\EXI$I and $\EXI$E will indicate instances of  $\EXI$I$\starJ$ and $\EXI$E$\starJ$ with $\textit{true}$, while $\EXI$Ij and $\EXI$Ej with $\textit{just true}$.

Finally, depending on the occurrences of existential judgments, we will have the following two substitution rules:

\begin{center}
\AxiomC{$\Gamma \vdash \varphi \textit{ star}$}
\AxiomC{$\Gamma, \varphi \textit{ true} \vdash \gamma \textit{ star}$}
\RightLabel{\footnotesize \textsc{sub1}}
\BinaryInfC{$\Gamma \vdash \gamma \textit{ star}$}
\DisplayProof

\medskip
\medskip

\AxiomC{$\Gamma \vdash \varphi \textit{ true}$}
\AxiomC{$\Gamma, \varphi \textit{ true} \vdash \gamma \textit{ just true}$}
\RightLabel{\footnotesize \textsc{sub2}}
\BinaryInfC{$\Gamma \vdash \gamma \textit{ just true}$}
\DisplayProof
\end{center}

\medskip 

\noindent \textit{Comments on} \textsc{\textsc{just}} \textit{rule}. This rule specifies the meaning of the judgment $\varphi \textit{ just true}$ (together with the appropriate instance of the \textsc{sub1} rule, see below). As discussed in Section \ref{sec:intro}, it captures the idea that if we have the actual proof of $\varphi$, i.e., if we know that $\varphi$ is (proof-relevantly) true, then we can derive that there exists a proof of $\varphi$, i.e., that $\varphi$ is (proof-irrelevantly) true. In other words, once we know the structure of a proof, we can always choose to disregard it.

\medskip

\noindent \textit{Comments on} $\to$I$\starJ$ and $\to$E$\starJ$. Note that the true variants $\to$I and $\to$E, where $\textit{star}$ is replaced with $\textit{true}$, are just ordinary implication introduction and elimination rules for defining the meaning of the connective $\to$ as known from natural deduction (\cite{prawitz1965}). The just true variants $\to$Ij and $\to$Ej, where $\textit{star}$ is replaced with $\textit{just true}$, are more interesting. $\to$Ij states that if we can derive that $\psi$ is just true under the assumption that $\varphi$ is true, then we can derive that $\varphi \to \psi$ is just true. So, it is a proof-irrelevant variant of the standard implication introduction rule. Analogously, $\to$Ej is a proof-irrelevant variant of the standard implication elimination rule: it states that if $\varphi \to \psi$ is just true and if $\varphi$ is true, then we can derive that $\psi$ is just true. Later, we will show that $\to$Ej can be actually considered as a derived rule, however, this is not the case for $\to$Ij, unless we adopt an additional principle (Proposition \ref{prop:lax_derivability} in Section \ref{sec:related}). Both true and just true variants of $\to$E$\starJ$ are locally sound and complete (see Section \ref{sec:reductions}). 

\medskip

\noindent \textit{Comments on} $\EXI$I$\starJ$ and $\EXI$E$\starJ$. Let us start with the true variants $\EXI$I and $\EXI$E. The introduction rule $\EXI$I internalizes the judgment $\varphi \textit{ just true}$ by turning it into a proposition, specifically into the true proposition $\EXI\varphi$. In other words, it defines the meaning of the proof existence modality $\EXI$.\footnote{\cite{klev2023} offers an interesting account of modalities understood as illocutionary forces of speech acts. From this perspective, it would be interesting to explore to what kind of illocutionary force the modality $\EXI$ might intuitively correspond. A natural candidate seems to be a commissive force or some variant of it: $\EXI \varphi$ can be regarded as a promise that we have a proof of $\varphi$ but we are not willing to show it.} The elimination rule $\EXI$E then tells us how can we ``unpack'' the definition of $\EXI$ and return back to the existential judgment level with $\gamma \textit{ just true}$. Now, let us consider the just true variants $\EXI$Ij and $\EXI$Ej. $\EXI$Ij states that if we can derive that $\varphi$ is just true, then we can also derive that $\EXI \varphi$ is just true. So, it is a proof-irrelevant variant of $\EXI$I. Analogously, $\EXI$Ej is a proof-irrelevant variant of $\EXI$E: it states that if $\EXI \varphi$ is just true and if $\gamma$ is just true under the assumption that $\varphi$ is true, then we can derive that $\gamma$ is just true as well. Later, we will show that both $\EXI$Ij and $\EXI$Ej can be actually considered as derived rules (Proposition \ref{prop:lax_derivability} in Section \ref{sec:related}). Both true and just true variants of $\EXI$E$\starJ$ are locally sound and complete (see Section \ref{sec:reductions}). 

\medskip

Now, to simplify some upcoming proofs, let us introduce a new connective $\varphi \implyT \psi$ called existential implication and defined as $\varphi \to \EXI\psi$. Alternatively, we could also specify it directly via the following I/E rules:

\begin{center}
\AxiomC{$\Gamma, \varphi \textit{ true} \vdash \psi \textit{ just true}$}
\RightLabel{\footnotesize $\implyT$I$\starJ$}
\UnaryInfC{$\Gamma \vdash \varphi \implyT \psi \textit{ star}  $}
\DisplayProof
\quad
\AxiomC{$\Gamma \vdash \varphi \implyT \psi \textit{ star}  $}
\AxiomC{$\Gamma \vdash \varphi \textit{ true}  $}
\RightLabel{\footnotesize $\implyT$E$\starJ$}
\BinaryInfC{$\Gamma \vdash \psi \textit{ just true}  $}
\DisplayProof
\end{center}

\medskip

\noindent Note that the rule $\implyT$I$\starJ$ essentially internalizes the hypothetical judgment $\Gamma, \varphi \textit{ true} \vdash \psi \textit{ just true}$ as a proposition $\varphi \implyT \psi$ that can be either $\textit{true}$ or $\textit{just true}$.

Also note that these I/E rules become derivable when we consider $\varphi \implyT \psi$ to be defined as $\varphi \to \EXI\psi$:

\begin{center}
    \AxiomC{$\Gamma , \varphi \textit{ true} \vdash \psi \textit{ just true}$}
    \RightLabel{\footnotesize $\EXI$I$\starJ$}
    \UnaryInfC{$\Gamma , \varphi \textit{ true} \vdash \EXI \psi \textit{ star}$}
    \RightLabel{\footnotesize $\to$I$\starJ$}
    \UnaryInfC{$\Gamma \vdash  \varphi \to \EXI \psi \textit{ star}$}
    \DisplayProof \quad
    
    \medskip
    \medskip
    
    \AxiomC{$\Gamma \vdash \varphi \to \EXI \psi \textit{ star}  $}
    \AxiomC{$\Gamma \vdash \varphi \textit{ true}  $}
    \RightLabel{\footnotesize $\to$E$\starJ$}
    \BinaryInfC{$\Gamma \vdash \EXI \psi \textit{ star}  $}
    \AxiomC{}
    \RightLabel{\footnotesize \textsc{hyp}}
    \UnaryInfC{$\Gamma , \psi \textit{ true } \vdash \psi \textit{ true} $}
    \RightLabel{\footnotesize \textsc{just}}
    \UnaryInfC{$\Gamma , \psi \textit{ true } \vdash \psi \textit{ just true} $}
    \RightLabel{\footnotesize $\EXI$E$\starJ$}
    \BinaryInfC{$\Gamma \vdash \psi \textit{ just true}  $}
    \DisplayProof
\end{center}

\subsection{Basic properties}

\begin{lemma} (Exchange) If $\Gamma, \varphi \textit{ true}, \psi \textit{ true}  \vdash \gamma \textit{ star}$, then $\Gamma, \psi \textit{ true} , \varphi \textit{ true} \vdash \gamma \textit{ star}$.
\end{lemma}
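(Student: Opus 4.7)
The plan is a standard structural induction on the derivation of $\Gamma, \varphi \textit{ true}, \psi \textit{ true} \vdash \gamma \textit{ star}$, showing in each case that swapping the two assumptions preserves derivability. The statement is formulated with \textit{star} left open, so the induction treats both the $\textit{true}$ and $\textit{just true}$ versions of each rule uniformly.

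For the base case, if the derivation is an instance of \textsc{hyp}, then $\gamma \textit{ star}$ is $\alpha \textit{ true}$ for some $\alpha \textit{ true}$ appearing in $\Gamma, \varphi \textit{ true}, \psi \textit{ true}$. Since the multiset of assumptions does not change under exchange, \textsc{hyp} applies again to $\Gamma, \psi \textit{ true}, \varphi \textit{ true}$. The \textsc{just} rule is immediate: apply the induction hypothesis to the single premise, then reapply \textsc{just}.

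For the connective rules, I would handle each case by applying the IH to every premise and then reapplying the same rule. The only subtlety occurs in rules that extend the context with a fresh assumption, namely $\to\mathrm{I}\starJ$, $\implyT\mathrm{I}\starJ$ (derivable), and the right premise of $\EXI\mathrm{E}\starJ$. For example, in the $\to\mathrm{I}\starJ$ case, the premise is $\Gamma, \varphi \textit{ true}, \psi \textit{ true}, \alpha \textit{ true} \vdash \beta \textit{ star}$, and the two assumptions we want to swap remain adjacent in the extended context; applying IH yields $\Gamma, \psi \textit{ true}, \varphi \textit{ true}, \alpha \textit{ true} \vdash \beta \textit{ star}$, and reapplying $\to\mathrm{I}\starJ$ gives the desired conclusion $\Gamma, \psi \textit{ true}, \varphi \textit{ true} \vdash \alpha \to \beta \textit{ star}$. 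The right premise of $\EXI\mathrm{E}\starJ$ is treated analogously. The elimination rules $\to\mathrm{E}\starJ$ and $\EXI\mathrm{E}\starJ$ (on the left premise) just require applying IH to each premise before reassembling.

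There is no real obstacle here: the induction is routine because the rules in this system are formulated uniformly relative to the context and never inspect its order. The only thing to be careful about is that the lemma as stated swaps two adjacent assumptions, so when the context is extended by a fresh assumption in a premise the IH still applies verbatim; if one wanted exchange for non-adjacent assumptions this would follow by iterating the lemma. I would make this remark after the proof but the inductive argument itself needs no new ideas.
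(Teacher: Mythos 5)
Your proposal is correct and matches the paper's own argument, which is likewise a routine structural induction on the derivation (done separately for the \textit{true} and \textit{just true} forms), reapplying each rule after invoking the induction hypothesis on its premises; the paper only spells out the $\EXI\mathrm{E}$ case as an illustration. Your extra attention to the context-extending rules ($\to\mathrm{I}\starJ$ and the right premise of $\EXI\mathrm{E}\starJ$) is exactly the point the paper's worked example also addresses, so no further comment is needed.
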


\begin{proof}
By structural induction on the derivations of $\Gamma, \varphi \textit{ true}, \psi \textit{ true}  \vdash \gamma \textit{ true}$ and $\Gamma, \varphi \textit{ true}, \psi \textit{ true}  \vdash \gamma \textit{ just true}$, respectively.
\end{proof}

\noindent For example, let us consider $\Gamma, \varphi \textit{ true}, \psi \textit{ true}  \vdash \gamma \textit{ just true}$ for the case of $\EXI$E:

\begin{center}
$\mathcal{D}$ = $\AxiomC{$\mathcal{D}'$}\noLine\UnaryInfC{$\Gamma , \tau \textit{ true} , \theta \textit{ true} \vdash \EXI\psi \textit{ true}$}
\AxiomC{$\mathcal{D}''$}\noLine\UnaryInfC{$\Gamma , \tau \textit{ true} , \theta \textit{ true} , \psi \textit{ true} \vdash \sigma \textit{ just true}$}
\RightLabel{\footnotesize $\EXI$E}
\BinaryInfC{$\Gamma, \tau \textit{ true} , \theta \textit{ true} \vdash \sigma \textit{ just true}  $}\DisplayProof$
\end{center}

\begin{enumerate}
    \item $\Gamma , \theta \textit{ true} , \tau \textit{ true} \vdash \EXI\psi \textit{ true}$ (by exchange for judgments of the form $\gamma \textit{ true}$ on $\mathcal{D}'$)
    
    \item $\Gamma , \theta \textit{ true} , \tau \textit{ true} , \psi \textit{ true} \vdash \sigma \textit{ just true}$ (by induction hypothesis, IH, on $\mathcal{D}''$)

    \item $\Gamma, \theta \textit{ true} , \tau \textit{ true} \vdash \sigma \textit{ just true}  $ (by $\EXI$E on 1. and 2.)
\end{enumerate}

\begin{lemma} (Weakening)  If $\Gamma \vdash \gamma \textit{ star}$, then $\Gamma, \varphi \textit{ true} \vdash \gamma \textit{ star} $.
\end{lemma}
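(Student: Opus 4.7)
The plan is to prove this by structural induction on the derivation of $\Gamma \vdash \gamma \textit{ star}$, mirroring the pattern already illustrated for the Exchange lemma. Since hypotheses only ever take the form $\psi \textit{ true}$, and since $\textit{star}$ ranges over both $\textit{true}$ and $\textit{just true}$, the induction simultaneously covers both shapes of the target judgment.

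The base case is \textsc{hyp}: if $\Gamma = \Gamma', \psi \textit{ true}$ and $\gamma = \psi$, then $\Gamma', \psi \textit{ true}, \varphi \textit{ true} \vdash \psi \textit{ true}$ follows by a single application of \textsc{hyp} (after an appeal to Exchange to place $\varphi \textit{ true}$ at the end of the context, should one prefer to think of it that way; but in fact \textsc{hyp} applies directly since the schema only requires $\varphi \textit{ true}$ to occur somewhere in the antecedent). For \textsc{just}, apply the induction hypothesis to the premise to obtain $\Gamma, \varphi \textit{ true} \vdash \gamma \textit{ true}$, then reapply \textsc{just}. For the elimination rules $\to$E$\starJ$ and $\EXI$E$\starJ$ whose premises share the same context $\Gamma$, apply the induction hypothesis to each premise independently and reassemble with the same rule.

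The only mildly subtle cases are the introduction rules $\to$I$\starJ$ and $\EXI$E$\starJ$ whose premises extend $\Gamma$ with an extra hypothesis $\psi \textit{ true}$. For $\to$I$\starJ$, the premise $\Gamma, \psi \textit{ true} \vdash \chi \textit{ star}$ yields, by induction hypothesis, $\Gamma, \psi \textit{ true}, \varphi \textit{ true} \vdash \chi \textit{ star}$; one application of Exchange produces $\Gamma, \varphi \textit{ true}, \psi \textit{ true} \vdash \chi \textit{ star}$, and then $\to$I$\starJ$ gives the required $\Gamma, \varphi \textit{ true} \vdash \psi \to \chi \textit{ star}$. The second premise of $\EXI$E$\starJ$ is handled identically (using Exchange to move $\varphi \textit{ true}$ past the newly bound hypothesis).

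There is no real obstacle here; the proof is entirely routine once Exchange is available, and the potential pitfall — that applying the induction hypothesis in a binding rule places $\varphi \textit{ true}$ in the wrong position relative to the bound variable — is resolved by a single appeal to the already-established Exchange lemma at each such step.
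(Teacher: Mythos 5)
Your proposal is correct and follows essentially the same route as the paper: structural induction on the derivation (covering both the \textit{true} and \textit{just true} forms), with the already-established Exchange lemma used to reposition $\varphi \textit{ true}$ after applying the induction hypothesis in the rules that extend the context, exactly as in the paper's worked $\EXI$E case. No gaps.
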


\noindent \textit{Note}. Since we disallow just true assumptions, we need to consider only two cases.

\begin{proof}
By structural induction on the derivations of $\Gamma \vdash \gamma \textit{ true}$ and $\Gamma \vdash \gamma \textit{ just true}$, respectively.
\end{proof}

\noindent For example, let us consider $\Gamma \vdash \gamma \textit{ just true}$ for the case of $\EXI$E:

\begin{center}
$\mathcal{D}$ = $\AxiomC{$\mathcal{D}'$}\noLine\UnaryInfC{$\Gamma \vdash \EXI\psi \textit{ true}$}
\AxiomC{$\mathcal{D}''$}\noLine\UnaryInfC{$\Gamma , \psi \textit{ true} \vdash \sigma \textit{ just true}$}
\RightLabel{\footnotesize $\EXI$E}
\BinaryInfC{$\Gamma \vdash \sigma \textit{ just true}  $}\DisplayProof$
\end{center}

\begin{enumerate}
    \item $\Gamma, \varphi \textit{ true} \vdash \EXI\psi \textit{ true}$ (by weakening for judgments of the form $\gamma \textit{ true}$ on $\mathcal{D}'$)
    
    \item $\Gamma, \psi \textit{ true}, \varphi \textit{ true} \vdash \sigma \textit{ just true}$ (by IH on $\mathcal{D}''$)
    
    \item $\Gamma , \varphi \textit{ true} \vdash \sigma \textit{ just true}$ (by exchange on 2. and $\EXI$E on 1. and 2.)
\end{enumerate}

\begin{lemma} (Contraction) If $\Gamma , \varphi \textit{ true} , \varphi \textit{ true} \vdash \gamma \textit{ star}$, then $\Gamma, \varphi \textit{ true} \vdash \gamma \textit{ star} $.
\end{lemma}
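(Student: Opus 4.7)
The plan is to follow exactly the same template used for Exchange and Weakening: structural induction on the derivation of the hypothesis, carried out twice, once for $\Gamma, \varphi \textit{ true}, \varphi \textit{ true} \vdash \gamma \textit{ true}$ and once for $\Gamma, \varphi \textit{ true}, \varphi \textit{ true} \vdash \gamma \textit{ just true}$. Since the two inductions share their rule-set and the IH of each can be invoked by the other (as in the Weakening proof), they may be run simultaneously.

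In the base case (\textsc{hyp}), the final step is either $\Gamma, \varphi \textit{ true}, \varphi \textit{ true} \vdash \varphi \textit{ true}$ (last occurrence) or $\Gamma, \varphi \textit{ true}, \varphi \textit{ true} \vdash \psi \textit{ true}$ for some $\psi \textit{ true}$ already lying in $\Gamma$. In each subcase, a single application of \textsc{hyp} yields the corresponding judgment over the contracted context $\Gamma, \varphi \textit{ true}$. The \textsc{just} case is immediate: apply the IH to the premise and re-apply \textsc{just}.

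For the logical rules, each case proceeds by applying the IH to the premise(s) and then reinstating the rule over the contracted context. The rules with no context extension ($\to$E$\starJ$ and $\EXI$I$\starJ$) are entirely routine. The interesting cases are $\to$I$\starJ$ and $\EXI$E$\starJ$, where a new hypothesis, say $\psi \textit{ true}$, is added on top of the two copies of $\varphi \textit{ true}$. To apply the IH I first need the two copies of $\varphi \textit{ true}$ to be adjacent, so I invoke the Exchange lemma to bring the context into the form $\Gamma, \varphi \textit{ true}, \varphi \textit{ true}, \psi \textit{ true}$, apply the IH to obtain a derivation over $\Gamma, \varphi \textit{ true}, \psi \textit{ true}$, then conclude with the corresponding introduction/elimination rule. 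The \textsc{sub1}/\textsc{sub2} rules, if they need to be treated as primitive here, are handled analogously by applying the IH to both premises (using Exchange to align contexts on the right premise).

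The only real obstacle is bookkeeping for context positions: the two $\varphi \textit{ true}$ occurrences in the induction hypothesis must be adjacent, but the derivation may have introduced intervening assumptions via $\to$I$\starJ$ or $\EXI$E$\starJ$, so Exchange has to be threaded through those cases. Since Exchange has already been established for both forms of $\gamma \textit{ star}$, this presents no real difficulty, and the induction goes through uniformly.
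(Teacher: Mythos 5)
Your proposal is correct and takes essentially the same route as the paper, whose entire proof of this lemma is the single line ``By structural induction'' (following the same template it uses for Exchange and Weakening). Your fleshed-out case analysis, including threading Exchange through the context-extending rules $\to$I$\starJ$ and $\EXI$E$\starJ$, supplies exactly the detail the paper leaves implicit.
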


\begin{proof}
By structural induction.
\end{proof}

\begin{lemma} (Substitution lemma)
\begin{enumerate}

    \item[](\textsc{sub1}) If $\Gamma \vdash \varphi \textit{ star}$ and $\Gamma, \varphi \textit{ true} \vdash \gamma  \textit{ star}$ then $\Gamma \vdash \gamma \textit{ star}$.

    \item[](\textsc{sub2}) If $\Gamma \vdash \varphi \textit{ true}$ and $\Gamma, \varphi \textit{ true} \vdash \gamma  \textit{ just true}$ then $\Gamma \vdash \gamma \textit{ just true}$.

\end{enumerate}
\end{lemma}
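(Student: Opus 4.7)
The plan is to split the statement into three subcases: (sub1) with $\textit{star} = \textit{true}$, (sub2), and (sub1) with $\textit{star} = \textit{just true}$. I would prove the first two simultaneously by mutual structural induction on the derivation $\mathcal{D}$ of the second premise, and dispatch the third by a short direct derivation using $\EXI$I and $\EXI$E.

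For the mutual induction, the base case \textsc{hyp} in either statement splits on whether the projected hypothesis coincides with the freshly-added $\varphi \textit{ true}$: if so, return (a copy of) the derivation of $\Gamma \vdash \varphi \textit{ true}$; otherwise, reapply \textsc{hyp} in the shorter context $\Gamma$. The remaining inductive cases are $\to$I$\starJ$, $\to$E$\starJ$, $\EXI$I$\starJ$, $\EXI$E$\starJ$, and \textsc{just}, each handled by applying the appropriate inductive hypothesis to the subderivation(s) and reapplying the same rule. Whenever a rule extends the context (in $\to$I$\starJ$ and in the right premise of $\EXI$E$\starJ$), I first lift $\Gamma \vdash \varphi \textit{ true}$ into the enlarged context by the preceding Weakening and Exchange lemmas so that the inductive hypothesis still applies. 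Two cross-calls between the statements occur: (i) the \textsc{just} rule in sub2, whose premise concludes $\gamma \textit{ true}$, prompting an invocation of sub1-true followed by \textsc{just}; and (ii) the $\EXI$I rule in the sub1-true case, whose premise concludes $\psi \textit{ just true}$, prompting an invocation of sub2 followed by $\EXI$I.

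For (sub1) with $\textit{star} = \textit{just true}$, I bypass induction entirely: from $\Gamma \vdash \varphi \textit{ just true}$, an application of $\EXI$I yields $\Gamma \vdash \EXI\varphi \textit{ true}$, and then $\EXI$E with the second premise $\Gamma, \varphi \textit{ true} \vdash \gamma \textit{ just true}$ immediately gives $\Gamma \vdash \gamma \textit{ just true}$. This neatly sidesteps the otherwise awkward \textsc{just} sub-case of an inductive approach, which would require substituting a $\textit{just true}$ premise into a subderivation ending with $\gamma \textit{ true}$ --- a combination that the lemma explicitly does not cover.

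The main obstacle will be orchestrating the mutual induction cleanly: one must verify that the $\EXI$I case of sub1-true hands off to sub2 (and not to itself) and that the \textsc{just} case of sub2 hands off to sub1-true, each time on a strictly smaller derivation, so that the induction is well-founded. Once that scaffolding is in place, every individual rule case reduces to the routine pattern ``peel off the last rule, apply the inductive hypothesis, reapply the rule'', with Weakening and Exchange providing the only context-management overhead.
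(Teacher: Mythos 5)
Your proposal is correct, and for the two inductive parts --- (\textsc{sub1}) with $\textit{star}=\textit{true}$ and (\textsc{sub2}) --- it coincides with the paper's proof: structural induction on the derivation of the second premise, with Weakening and Exchange lifting the first premise into extended contexts, and with exactly the cross-calls you identify (the paper's worked $\EXI$E case for \textsc{sub2} invokes \textsc{sub1} on the $\textit{true}$ premise in just the way you describe; your enumeration of ``two'' cross-calls slightly undercounts, since $\to$E$\starJ$ and $\EXI$E$\starJ$ in the \textsc{sub2} case also have $\textit{true}$ premises requiring \textsc{sub1}-true, but your general recipe covers them). Where you genuinely diverge is (\textsc{sub1}) with $\textit{star}=\textit{just true}$: the paper proves this by structural induction on the derivation of the \emph{first} premise $\Gamma \vdash \varphi \textit{ just true}$ (precisely because, as you correctly diagnose, induction on the second premise breaks at the \textsc{just} case), whereas you dispatch it by the direct two-step derivation $\EXI$I followed by $\EXI$E. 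Your shortcut is valid and arguably preferable: it exhibits this instance of substitution as a \emph{derived} rule rather than a merely admissible one, and indeed the paper's own worked example for that case terminates with exactly your $\EXI$I-then-$\EXI$E combination after some induction scaffolding that your argument shows to be dispensable.
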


\begin{proof}
(\textsc{sub1}, \textit{star} = \textit{true}) By structural induction on the derivation of $\Gamma , \varphi \textit{ true} \vdash \gamma \textit{ true}$.   
\end{proof}

\begin{proof} (\textsc{sub1}, \textit{star} = \textit{just true}) By structural induction on the derivation of $\Gamma \vdash \varphi \textit{ just true}$.
\end{proof}

\noindent For example, let us consider again the case of $\EXI$E:

\begin{center}
$\mathcal{D}$ = $\AxiomC{$\mathcal{D}'$}\noLine\UnaryInfC{$\Gamma \vdash \EXI\psi \textit{ true}$}
\AxiomC{$\mathcal{D}''$}\noLine\UnaryInfC{$\Gamma , \psi \textit{ true} \vdash \varphi \textit{ just true}$}
\RightLabel{\footnotesize $\EXI$E}
\BinaryInfC{$\Gamma \vdash \varphi \textit{ just true}  $}\DisplayProof$
\end{center}

\begin{enumerate}

    \item $\Gamma , \varphi \textit{ true} \vdash \gamma \textit{ just true}$ (by assumption)

    \item $\Gamma \vdash \psi \textit{ just true}$ (by inversion\footnote{Inversion is a metalevel reasoning principle for the level of judgments that essentially states: if the conclusion of a rule holds, so must its premise(s).} on $\mathcal{D}'$)

    \item $\Gamma \vdash \varphi \textit{ just true}$ (by IH on 2. and $\mathcal{D}''$)

    \item $\Gamma \vdash \EXI \varphi \textit{ true}$ (by $\EXI$I on 3.) 

    \item $\Gamma \vdash \gamma \textit{ just true}$ (by $\EXI$E on 1. and 4.) 

\end{enumerate}

\begin{proof}
(\textsc{sub2}) By structural induction on the derivation of $\Gamma, \varphi \textit{ true} \vdash \gamma \textit{ just true}$.   
\end{proof}

\noindent For example, let us consider the case of $\EXI$E:

\begin{center}
$\mathcal{D}$ = $\AxiomC{$\mathcal{D}'$}\noLine\UnaryInfC{$\Gamma , \varphi \textit{ true} \vdash \EXI\psi \textit{ true}$}
\AxiomC{$\mathcal{D}''$}\noLine\UnaryInfC{$\Gamma , \varphi \textit{ true} , \psi \textit{ true} \vdash \sigma \textit{ just true}$}
\RightLabel{\footnotesize $\EXI$E}
\BinaryInfC{$\Gamma , \varphi \textit{ true} \vdash \sigma \textit{ just true}  $}\DisplayProof$
\end{center}

\begin{enumerate}
        
    \item $\Gamma \vdash \varphi \textit{ true}$ (by assumption)

    \item $\Gamma \vdash \EXI\psi \textit{ true} $ (by \textsc{sub1} on 1. and $\mathcal{D}'$)

    \item $\Gamma , \psi \textit{ true} \vdash \varphi \textit{ true}$ (by weakening on $\mathcal{D}'$)

    \item $\Gamma , \psi \textit{ true}, \varphi \textit{ true}  \vdash \sigma \textit{ just true}$ (by exchange on $\mathcal{D}''$)

    \item $\Gamma , \psi \textit{ true} \vdash \sigma \textit{ just true}$ (by IH on 3. and 4.)

    \item $\Gamma \vdash \sigma \textit{ just true}$ (by $\EXI$E on 2. and 5.)
   
\end{enumerate}

\subsection{Reduction and expansion rules}
\label{sec:reductions}

\noindent Reduction and expansion for $\EXI\varphi \textit{ star}$:

\begin{center}
\AxiomC{$\mathcal{D}$}
\noLine
\UnaryInfC{$\Gamma \vdash \varphi \textit{ just true}$}
\RightLabel{\footnotesize $\EXI$I$\starJ$}
\UnaryInfC{$\Gamma \vdash \EXI\varphi \textit{ star}$}
\AxiomC{$\mathcal{D'}$}
\noLine
\UnaryInfC{$\Gamma ,  \varphi  \textit{ true} \vdash \gamma \textit{ just true}$}
\RightLabel{\footnotesize $\EXI$E$\starJ$}
\BinaryInfC{$\Gamma \vdash \gamma \textit{ just true}$}
\DisplayProof
$\Rightarrow_R$
\AxiomC{$\mathcal{D''}$}
\noLine
\UnaryInfC{$\Gamma \vdash \gamma \textit{ just true}$}
\DisplayProof
\end{center}

\noindent where $\mathcal{D}''$ is constructed from $\mathcal{D'}$ by replacing instances of the hypothesis $\varphi \textit{ true}$ with $\mathcal{D}$ (via \textsc{sub1}).\footnote{Note that these reductions do not operate on natural deduction derivations with formulas but on demonstrations with judgments. Thus our natural-deduction-like phrasing ``replacing instances of the hypothesis $\varphi \textit{ true}$ true with $\mathcal{D}$'' is somewhat an abuse of language (this can also be observed in \cite{pfenning2001}). We thank Ansten Klev for this note. A proper judgmental explanation of these reduction metarules will come later in Section \ref{sec:computational} where we introduce reducibility judgments of the form $\Gamma \vdash e \Longrightarrow_R e'$ corresponding to these reductions.}
% Ansten Klev

\medskip

\begin{center}
\AxiomC{$\mathcal{D}$}
\noLine
\UnaryInfC{$\Gamma \vdash \EXI\varphi \textit{ star}$}
\DisplayProof
$\Rightarrow_E$
\AxiomC{$\mathcal{D}$}
\noLine
\UnaryInfC{$\Gamma \vdash \EXI\varphi \textit{ star}$}
\AxiomC{}
\RightLabel{\footnotesize \textsc{hyp}}
\UnaryInfC{$\Gamma, \varphi \textit{ true} \vdash \varphi \textit{ true}$}
\RightLabel{\footnotesize \textsc{just}}
\UnaryInfC{$\Gamma, \varphi \textit{ true} \vdash \varphi \textit{ just true}$}
\RightLabel{\footnotesize $\EXI$E$\starJ$}
\BinaryInfC{$\Gamma \vdash \varphi \textit{ just true}$}
\RightLabel{\footnotesize $\EXI$I$\starJ$}
\UnaryInfC{$\Gamma \vdash \EXI\varphi \textit{ star}$}
\DisplayProof
\end{center}

\noindent Reduction and expansion for $\varphi \to \psi \textit{ star}$:

\begin{center}
\AxiomC{$\mathcal{D}$}
\noLine
\UnaryInfC{$\Gamma , \varphi \textit{ true} \vdash \psi \textit{ star}$}
\RightLabel{\footnotesize $\to$I$\starJ$}
\UnaryInfC{$\Gamma \vdash \varphi \to \psi \textit{ star}$}
\AxiomC{$\mathcal{D'}$}
\noLine
\UnaryInfC{$\Gamma \vdash \varphi \textit{ true}$}
\RightLabel{\footnotesize $\to$E$\starJ$}
\BinaryInfC{$\Gamma \vdash \psi \textit{ star}$}
\DisplayProof
$\Rightarrow_R$
\AxiomC{$\mathcal{D''}$}
\noLine
\UnaryInfC{$\Gamma \vdash \psi \textit{ star}$}
\DisplayProof
\end{center}

\noindent where $\mathcal{D}''$ is constructed from $\mathcal{D}$ by replacing instances of the hypothesis $\varphi \textit{ true}$ with $\mathcal{D'}$ (via \textsc{sub1}) if $\textit{star = true}$, otherwise, i.e., if $\textit{star = exist}$, $\mathcal{D}''$ is constructed from $\mathcal{D}$ by replacing instances of the hypothesis $\varphi \textit{ true}$ with $\mathcal{D'}$ (via \textsc{sub2}).

\begin{center}
\small
\AxiomC{$\mathcal{D}$}
\noLine
\UnaryInfC{$\Gamma \vdash \varphi \to \psi \textit{ star}$}
\DisplayProof
$\Rightarrow_E$
\AxiomC{$\mathcal{D}$}
\noLine
\UnaryInfC{$\Gamma \vdash \varphi \to \psi \textit{ star}$}
\RightLabel{\footnotesize \textsc{w}}
\UnaryInfC{$\Gamma, \varphi \textit{ true} \vdash \varphi \to \psi \textit{ star}$}
\AxiomC{}
\RightLabel{\footnotesize \textsc{hyp}}
\UnaryInfC{$\Gamma, \varphi \textit{ true} \vdash \varphi \textit{ true}$}
\RightLabel{\footnotesize $\to$E$\starJ$}
\BinaryInfC{$\Gamma, \varphi \textit{ true} \vdash \psi\textit{ star}$}
\RightLabel{\footnotesize $\to$I$\starJ$}
\UnaryInfC{$\Gamma \vdash \varphi \to \psi \textit{ star}$}
\DisplayProof
\end{center}

\noindent Reduction and expansion for $\varphi \implyT \psi \textit{ star}$\,:

\begin{center}
\AxiomC{$\mathcal{D}$}
\noLine
\UnaryInfC{$\Gamma , \varphi \textit{ true} \vdash \psi \textit{ just true}$}
\RightLabel{\footnotesize $\implyT$I$\starJ$}
\UnaryInfC{$\Gamma \vdash \varphi \implyT \psi \textit{ star}$}
\AxiomC{$\mathcal{D'}$}
\noLine
\UnaryInfC{$\Gamma \vdash \varphi \textit{ true}$}
\RightLabel{\footnotesize $\implyT$E$\starJ$}
\BinaryInfC{$\Gamma \vdash \psi \textit{ just true}$}
\DisplayProof
$\Rightarrow_R$
\AxiomC{$\mathcal{D''}$}
\noLine
\UnaryInfC{$\Gamma \vdash \psi \textit{ just true}$}
\DisplayProof
\end{center}

\noindent where $\mathcal{D}''$ is constructed from $\mathcal{D}$ by replacing instances of the hypothesis $\varphi \textit{ true}$ with $\mathcal{D'}$ (via \textsc{sub2}).

\begin{center}
\small
\AxiomC{$\mathcal{D}$}
\noLine
\UnaryInfC{$\Gamma \vdash \varphi \implyT \psi \textit{ star}$}
\DisplayProof
$\Rightarrow_E$
\AxiomC{$\mathcal{D}$}
\noLine
\UnaryInfC{$\Gamma \vdash \varphi \implyT \psi \textit{ star}$}
\RightLabel{\footnotesize \textsc{w}}
\UnaryInfC{$\Gamma , \varphi \textit{ true} \vdash \varphi \implyT \psi \textit{ star}$}
\AxiomC{}
\RightLabel{\footnotesize \textsc{hyp}}
\UnaryInfC{$\Gamma, \varphi \textit{ true} \vdash \varphi \textit{ true}$}
\RightLabel{\footnotesize $\implyT$E$\starJ$}
\BinaryInfC{$\Gamma, \varphi \textit{ true} \vdash \psi\textit{ just true}$}
\RightLabel{\footnotesize $\implyT$I$\starJ$}
\UnaryInfC{$\Gamma \vdash \varphi \implyT \psi \textit{ star}$}
\DisplayProof
\end{center}

\paragraph{Some propositions.} \phantom{} 

\begin{proposition} \label{prop:1}
$\Gamma \vdash \varphi \to \psi \textit{ just true}$ \,iff\,  $\Gamma \vdash  \varphi \to \EXI \psi \textit{ true}$.    
\end{proposition}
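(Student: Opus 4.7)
The plan is to give two short derivations, one for each direction of the biconditional, using only the introduction and elimination rules for $\to$ and $\EXI$ in their \textit{true}/\textit{just true} variants, together with weakening.

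For the forward direction ($\Rightarrow$), assume $\Gamma \vdash \varphi \to \psi \textit{ just true}$. First I would weaken to $\Gamma, \varphi \textit{ true} \vdash \varphi \to \psi \textit{ just true}$, then combine this with the hypothesis $\Gamma, \varphi \textit{ true} \vdash \varphi \textit{ true}$ (from \textsc{hyp}) and apply $\to$Ej to obtain $\Gamma, \varphi \textit{ true} \vdash \psi \textit{ just true}$. Internalizing with $\EXI$I yields $\Gamma, \varphi \textit{ true} \vdash \EXI\psi \textit{ true}$, and a final application of $\to$I discharges the assumption to give $\Gamma \vdash \varphi \to \EXI\psi \textit{ true}$.

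For the backward direction ($\Leftarrow$), assume $\Gamma \vdash \varphi \to \EXI\psi \textit{ true}$. After weakening and using \textsc{hyp}, I would apply $\to$E to get $\Gamma, \varphi \textit{ true} \vdash \EXI\psi \textit{ true}$. The key move is then to eliminate the modality via $\EXI$E with $\gamma = \psi$: the required second premise $\Gamma, \varphi \textit{ true}, \psi \textit{ true} \vdash \psi \textit{ just true}$ is obtained by composing \textsc{hyp} with \textsc{just}. This produces $\Gamma, \varphi \textit{ true} \vdash \psi \textit{ just true}$, and $\to$Ij then yields $\Gamma \vdash \varphi \to \psi \textit{ just true}$.

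Both directions are entirely syntactic manipulations; the only step requiring a small amount of care is the backward direction, where one must recognize that the correct way to ``unpack'' $\EXI\psi$ while staying at the \textit{just true} level is to instantiate the conclusion of $\EXI$E with $\gamma = \psi$ itself, using the \textsc{just} rule to bridge from the hypothesis $\psi \textit{ true}$ to the required $\psi \textit{ just true}$. I do not anticipate any real obstacle; this proposition is essentially a sanity check that the rules for $\implyT$ (viewed as $\varphi \to \EXI\psi$) correctly correspond to existential implication at the \textit{just true} level.
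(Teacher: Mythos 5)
Your proof is correct and follows essentially the same route as the paper's: the paper abbreviates $\varphi \to \EXI\psi$ as $\varphi \implyT \psi$ and invokes the derived rules $\implyT$I and $\implyT$E, which unpack to exactly your explicit combinations of $\EXI$I with $\to$I (forward direction) and $\to$E with $\EXI$E instantiated at $\gamma = \psi$ via \textsc{hyp} and \textsc{just} (backward direction). No gaps.
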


\noindent Informally, it tells us that if $\varphi \to \psi$ is just true, i.e., if there exists a proof of it, then we can derive that $\varphi \to \EXI\psi$ is true, i.e., that we have a proof for it, and vice versa. In other words, a proof of $\varphi \to \EXI\psi$ carries no more information than the existence of a proof of $\varphi \to \psi$.

\begin{proof}
To simplify the proof, we use $\varphi \implyT \psi$ instead of $\varphi \to \EXI \psi$ (recall that $\implyT$ is defined via $\to$ and $\EXI$).

\medskip

\begin{prooftree}
\AxiomC{$\Gamma \vdash \varphi \implyT \psi \textit{ true}$}
\RightLabel{\footnotesize \textsc{w}}
\UnaryInfC{$\Gamma  , \varphi \textit{ true}  \vdash \varphi \implyT \psi \textit{ true}$}
\AxiomC{}
\RightLabel{\footnotesize \textsc{hyp}}
\UnaryInfC{$\Gamma , \varphi \textit{ true} \vdash \varphi \textit{ true}$}
\RightLabel{\footnotesize $\implyT$E}
\BinaryInfC{$\Gamma , \varphi \textit{ true} \vdash \psi \textit{ just true}$}
\RightLabel{\footnotesize $\to$Ij }
\UnaryInfC{$\Gamma \vdash \varphi \to \psi \textit{ just true}$}
\end{prooftree}

\begin{prooftree}
    \AxiomC{$\Gamma \vdash \varphi \to \psi \textit{ just true}$}
    \RightLabel{\footnotesize \textsc{w}}
    \UnaryInfC{$\Gamma , \varphi \textit{ true} \vdash \varphi \to \psi \textit{ just true}$}
    \AxiomC{}
    \RightLabel{\footnotesize \textsc{hyp}}
    \UnaryInfC{$\Gamma , \varphi \textit{ true} \vdash \varphi \textit{ true}$}
    \RightLabel{\footnotesize $\to$Ej }
    \BinaryInfC{$\Gamma , \varphi \textit{ true} \vdash \psi \textit{ just true}$}
    \RightLabel{\footnotesize $\implyT$I}
    \UnaryInfC{$\Gamma \vdash \varphi \implyT \psi \textit{ true}$}
\end{prooftree}

\end{proof}

\medskip

\noindent \textit{Note}. If we were interested only in the notion of proof-relevant $\textit{true}$, we could simply transform Proposition 1 into:

    \begin{enumerate}
    \item[] $\Gamma \vdash \EXI (\varphi \to \psi) \textit{ true}$ \,iff\,  $\Gamma \vdash  \varphi \to \EXI \psi \textit{ true}$.
    \end{enumerate}
    
\noindent via the application of $\EXI$I on the first judgment (analogously for the following propositions). Also, note that we can prove that if $\Gamma \vdash \varphi \to \psi \textit{ true}$ then $\Gamma \vdash  \varphi \to \EXI \psi \textit{ true}$ but not vice versa.

\begin{prooftree}
    \AxiomC{$\Gamma \vdash \varphi \to \psi \textit{ true}$}
    \RightLabel{\footnotesize \textsc{w}}
    \UnaryInfC{$\Gamma , \varphi \textit{ true} \vdash \varphi \to \psi \textit{ true}$}
    \AxiomC{}
    \RightLabel{\footnotesize \textsc{hyp}}
    \UnaryInfC{$\Gamma , \varphi \textit{ true} \vdash \varphi \textit{ true}$}
    \RightLabel{\footnotesize $\to$E}
    \BinaryInfC{$\Gamma , \varphi \textit{ true} \vdash \psi \textit{ true}$}
    \RightLabel{\footnotesize \textsc{just}}
    \UnaryInfC{$\Gamma , \varphi \textit{ true} \vdash \psi \textit{ just true}$}
    \RightLabel{\footnotesize $\EXI$I}
    \UnaryInfC{$\Gamma , \varphi \textit{ true} \vdash \EXI\psi \textit{ true}$}
    \RightLabel{\footnotesize $\to$I}
    \UnaryInfC{$\Gamma \vdash \varphi \to \EXI \psi \textit{ true}$}
\end{prooftree}

\noindent However, from $\Gamma \vdash  \varphi \to \EXI \psi \textit{ true}$ we can derive $\Gamma \vdash   \EXI \varphi \to \EXI \psi \textit{ true}$ and vice versa. 

\begin{proposition} \label{prop:2}
$\Gamma \vdash  \varphi \to \EXI \psi \textit{ true}$ \,iff\,  $\Gamma \vdash \EXI \varphi \to \EXI \psi \textit{ true}$.   
\end{proposition}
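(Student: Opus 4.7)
The plan is to prove both directions by constructing explicit derivations, taking advantage of the interplay between $\EXI\mathrm{I}$, $\EXI\mathrm{E}$, and the \textsc{just} rule to move between the $\textit{true}$ and $\textit{just true}$ layers.

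For the easy (right-to-left) direction, I would assume $\Gamma \vdash \EXI\varphi \to \EXI\psi \textit{ true}$ and aim to derive $\Gamma \vdash \varphi \to \EXI\psi \textit{ true}$ by $\to$I. Working under the added hypothesis $\varphi \textit{ true}$, I apply \textsc{just} to get $\varphi \textit{ just true}$, then $\EXI$I to get $\EXI\varphi \textit{ true}$. Weakening the main assumption into the extended context and applying $\to$E then yields $\EXI\psi \textit{ true}$, and discharging the hypothesis by $\to$I closes the derivation.

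For the forward direction, the goal is $\Gamma \vdash \EXI\varphi \to \EXI\psi \textit{ true}$ starting from $\Gamma \vdash \varphi \to \EXI\psi \textit{ true}$. After applying $\to$I to add $\EXI\varphi \textit{ true}$ to the context, the tricky point is that $\EXI$E can only produce a $\textit{just true}$ conclusion, whereas my eventual target is $\EXI\psi \textit{ true}$. My plan is therefore to apply $\EXI$I first, reducing the goal to $\Gamma, \EXI\varphi \textit{ true} \vdash \psi \textit{ just true}$, and only then invoke $\EXI$E on the hypothesis $\EXI\varphi$. The remaining subgoal under the additional hypothesis $\varphi \textit{ true}$ is $\psi \textit{ just true}$: using weakening and $\to$E on $\varphi \to \EXI\psi \textit{ true}$ I get $\EXI\psi \textit{ true}$, then apply $\EXI$E a second time, with the trivial second premise $\psi \textit{ true} \vdash \psi \textit{ just true}$ obtained by \textsc{hyp} followed by \textsc{just}.

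The substantive step is recognising that one cannot turn $\EXI\psi \textit{ just true}$ back into $\EXI\psi \textit{ true}$ directly, so the derivation has to be organised so that $\EXI$I is applied last at the outer level, with all the $\EXI$E manipulations happening inside a $\textit{just true}$ conclusion. Once this structural choice is made, the rest is routine bookkeeping with \textsc{hyp} and weakening, and no additional principles beyond those already in the calculus are needed. Note also that the argument does not appeal to Proposition \ref{prop:1}, though the same pattern of swapping between layers underlies both results.
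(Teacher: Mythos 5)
Your proposal is correct and matches the paper's own derivation essentially step for step: in both directions you build the same derivation trees, with the forward direction organised so that $\EXI$I is the last rule before $\to$I and the two nested applications of $\EXI$E (first on the hypothesis $\EXI\varphi$, then on the $\EXI\psi$ obtained by $\to$E) occur inside a \textit{just true} conclusion. The bookkeeping with \textsc{hyp}, weakening, and exchange is exactly as in the paper, so there is nothing to add.
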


\begin{proof}
To shorten this proof, we will write ``$\varphi$'' instead of ``$\varphi \textit{ true}$'' and ``$\varphi \textit{ just}$'' instead of ``$\varphi \textit{ just true}$'' (we will adopt this notation for all upcoming longer proofs):
\end{proof}

\begin{prooftree}
\def\defaultHypSeparation{\hskip 0.05in}
\small
    \AxiomC{}
    \RightLabel{\footnotesize \textsc{hyp}}
    \UnaryInfC{$\Gamma , \EXI \varphi \vdash \EXI \varphi$}

    \AxiomC{$\Gamma \vdash \varphi \to \EXI \psi$}
    \RightLabel{\footnotesize \textsc{w}}
    \UnaryInfC{$\Gamma , \varphi \vdash \varphi \to \EXI \psi$}

    \AxiomC{}
    \RightLabel{\footnotesize \textsc{hyp}}
    \UnaryInfC{$\Gamma , \varphi \vdash \varphi$}

    \RightLabel{\footnotesize $\to$E}
    \BinaryInfC{$\Gamma , \varphi \vdash \EXI \psi$}

    \AxiomC{}
    \RightLabel{\footnotesize \textsc{hyp}}
    \UnaryInfC{$\Gamma , \psi \vdash \psi$}
    \RightLabel{\footnotesize \textsc{w}}
    \UnaryInfC{$\Gamma , \psi , \varphi \vdash \psi$}
    \RightLabel{\footnotesize \textsc{ex}}
    \UnaryInfC{$\Gamma , \varphi, \psi  \vdash \psi$}
    \RightLabel{\footnotesize \textsc{just}}
    \UnaryInfC{$\Gamma , \varphi, \psi  \vdash \psi \textit{ just}$}

    \RightLabel{\footnotesize $\EXI$E}
    \BinaryInfC{$\Gamma , \varphi \vdash \psi \textit{ just}$}

     \RightLabel{\footnotesize \textsc{w}}
    \UnaryInfC{$\Gamma , \varphi , \EXI \varphi \vdash \psi \textit{ just}$}

     \RightLabel{\footnotesize \textsc{ex}}
    \UnaryInfC{$\Gamma , \EXI \varphi , \varphi \vdash \psi \textit{ just}$}

    \RightLabel{\footnotesize $\EXI$E}
    \BinaryInfC{$\Gamma , \EXI \varphi \vdash \psi \textit{ just}$}

    \RightLabel{\footnotesize $\EXI$I}
    \UnaryInfC{$\Gamma , \EXI \varphi \vdash \EXI \psi $}

    \RightLabel{\footnotesize $\to$I}
    \UnaryInfC{$\Gamma \vdash \EXI \varphi \to \EXI \psi $}
\end{prooftree}

\begin{prooftree}
    \AxiomC{$\Gamma \vdash \EXI \varphi \to \EXI \psi \textit{ true}$}
    \RightLabel{\footnotesize \textsc{w}}
    \UnaryInfC{$\Gamma , \varphi \vdash \EXI \varphi \to \EXI \psi \textit{ true}$}

    \AxiomC{}
    \RightLabel{\footnotesize \textsc{hyp}}
    \UnaryInfC{$\Gamma , \varphi \vdash \varphi \textit{ true}$}
    \RightLabel{\footnotesize \textsc{just}}
    \UnaryInfC{$\Gamma , \varphi \vdash \varphi \textit{ just true}$}
    \RightLabel{\footnotesize $\EXI$I}
    \UnaryInfC{$\Gamma , \varphi \vdash \EXI \varphi \textit{ true}$}

    \RightLabel{\footnotesize $\to$E}
    
    \BinaryInfC{$\Gamma , \varphi \vdash \EXI \psi \textit{ true}$}

    \RightLabel{\footnotesize $\to$I}
    \UnaryInfC{$\Gamma \vdash \varphi \to \EXI \psi \textit{ true}$}
    
\end{prooftree}

\begin{proposition} \label{prop:3}
$\Gamma \vdash \varphi \to \EXI\psi \textit{ true}$ \,iff\, $\Gamma \vdash  \EXI\varphi \to \psi \textit{ just true}$.   
\end{proposition}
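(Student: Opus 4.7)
The plan is to derive Proposition \ref{prop:3} as an immediate consequence of Propositions \ref{prop:1} and \ref{prop:2}, rather than by constructing fresh derivations. The key observation is that the two existing propositions together already pin down a three-way equivalence in which $\EXI\varphi \to \EXI\psi \textit{ true}$ is the pivot.

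First, I would apply Proposition \ref{prop:2} to obtain $\Gamma \vdash \varphi \to \EXI\psi \textit{ true}$ iff $\Gamma \vdash \EXI\varphi \to \EXI\psi \textit{ true}$. Next, I would instantiate Proposition \ref{prop:1} with $\EXI\varphi$ in place of $\varphi$ and $\psi$ unchanged, yielding $\Gamma \vdash \EXI\varphi \to \psi \textit{ just true}$ iff $\Gamma \vdash \EXI\varphi \to \EXI\psi \textit{ true}$. Chaining these two biconditionals through the common pivot $\EXI\varphi \to \EXI\psi \textit{ true}$ gives the desired equivalence.

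Because both auxiliary propositions are already proved, the argument is purely propositional-logic bookkeeping, and there is no genuine obstacle. The only subtlety to keep in mind is that Proposition \ref{prop:1} is a rule schema in $\varphi$, so one has to verify that the instance with $\EXI\varphi$ is allowed; this is immediate since $\EXI\varphi$ is a well-formed proposition of the fragment.

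Alternatively, if the author prefers a self-contained presentation matching the style of the previous two propositions, I would spell out two natural-deduction derivations: in one direction, assume $\Gamma \vdash \varphi \to \EXI\psi \textit{ true}$, weaken into $\Gamma, \EXI\varphi \textit{ true}$, use $\EXI$E with the hypothesis $\EXI\varphi \textit{ true}$ to peel off the modality (landing in a $\textit{just true}$ conclusion), apply the implication to get $\EXI\psi \textit{ just true}$, then use $\EXI$E once more with $\textsc{hyp}$/$\textsc{just}$ to reach $\psi \textit{ just true}$, and finally discharge with $\to$Ij to obtain $\Gamma \vdash \EXI\varphi \to \psi \textit{ just true}$; in the reverse direction, assume $\Gamma \vdash \EXI\varphi \to \psi \textit{ just true}$, weaken into $\Gamma, \varphi \textit{ true}$, use \textsc{just} and $\EXI$I to lift $\varphi$ to $\EXI\varphi \textit{ true}$, apply $\to$Ej to derive $\psi \textit{ just true}$, promote to $\EXI\psi \textit{ true}$ by $\EXI$I, and discharge by $\to$I. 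Either route works, but the first is cleaner and better reflects the modular structure of the previous results.
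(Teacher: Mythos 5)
Your primary route is correct but genuinely different from the paper's. The paper proves Proposition \ref{prop:3} by exhibiting two explicit natural-deduction derivations (phrased via the abbreviation $\varphi \implyT \psi$ for $\varphi \to \EXI\psi$): one using \textsc{hyp}, \textsc{w}, $\implyT$E, $\EXI$E and $\to$Ij for the forward direction, and one using \textsc{w}, \textsc{just}, $\EXI$I, $\to$Ej and $\implyT$I for the converse. You instead chain Proposition \ref{prop:2} with the instance of Proposition \ref{prop:1} at $\EXI\varphi$, using $\Gamma \vdash \EXI\varphi \to \EXI\psi \textit{ true}$ as the pivot; the instantiation is legitimate since both propositions are schematic and $\EXI\varphi$ is a well-formed proposition of the fragment, and there is no circularity because neither Proposition \ref{prop:1} nor \ref{prop:2} depends on Proposition \ref{prop:3}. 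Your modular argument is shorter and is exactly in the spirit of how the paper itself handles Proposition 4 (``Follows from Propositions \ref{prop:1} and \ref{prop:3}''), so it arguably improves the economy of the presentation; what the paper's direct derivations buy is a self-contained exhibition of the proof objects, which matters later when the logical system is given its computational reading. Your fallback sketch with explicit derivations essentially reproduces the paper's proof (modulo reordering the uses of $\EXI$E and the implication elimination, and a slip where you write ``$\EXI\psi$ just true'' for what is actually $\EXI\psi \textit{ true}$ after $\to$E), so either version would be acceptable.
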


\noindent \textit{Proof}.

\begin{prooftree}
\def\defaultHypSeparation{\hskip 0.05in}
\small 
    \AxiomC{}
    \RightLabel{\footnotesize \textsc{hyp}}
    \UnaryInfC{$\Gamma , \EXI\varphi \textit{ true} \vdash \EXI\varphi \textit{ true}$}

    \AxiomC{$\Gamma \vdash \varphi \implyT \psi \textit{ true}$}
    \RightLabel{\footnotesize \textsc{w}}
    \UnaryInfC{$\Gamma , \varphi \textit{ true} \vdash \varphi \implyT \psi \textit{ true}$}
    
    \AxiomC{}
    \RightLabel{\footnotesize \textsc{hyp}}
    \UnaryInfC{$\Gamma , \varphi \textit{ true} \vdash \varphi \textit{ true}$}

    \RightLabel{\footnotesize $\implyT$E}
    \BinaryInfC{$\Gamma , \varphi \textit{ true} \vdash \psi \textit{ just true}$}

    \RightLabel{\footnotesize \textsc{w}}
    \UnaryInfC{$\Gamma , \varphi \textit{ true} , \EXI\varphi \textit{ true} \vdash \psi \textit{ just true}$}

    \RightLabel{\footnotesize \textsc{ex}}
    \UnaryInfC{$\Gamma , \EXI\varphi \textit{ true}, \varphi \textit{ true}  \vdash \psi \textit{ just true} $}
    \RightLabel{\footnotesize $\EXI$E}
    \BinaryInfC{$\Gamma , \EXI\varphi \textit{ true} \vdash \psi \textit{ just true}$}
    \RightLabel{\footnotesize $\to$Ij}
    \UnaryInfC{$\Gamma \vdash \EXI\varphi \to \psi \textit{ just true}$}
\end{prooftree}

\begin{prooftree}
    \AxiomC{$\Gamma \vdash \EXI\varphi \to \psi \textit{ just true}$}
    \RightLabel{\footnotesize \textsc{w}}
    \UnaryInfC{$\Gamma , \varphi \textit{ true} \vdash \EXI\varphi \to \psi \textit{ just true}$}
    \AxiomC{}
    \RightLabel{\footnotesize \textsc{hyp}}
    \UnaryInfC{$\Gamma , \varphi \textit{ true} \vdash \varphi \textit{ true}$}
    \RightLabel{\footnotesize \textsc{just}}
    \UnaryInfC{$\Gamma , \varphi \textit{ true} \vdash \varphi \textit{ just true}$}
    \RightLabel{\footnotesize $\EXI$I}
    \UnaryInfC{$\Gamma , \varphi \textit{ true} \vdash \EXI\varphi \textit{ true}$}
    \RightLabel{\footnotesize $\to$Ej}
    \BinaryInfC{$\Gamma , \varphi \textit{ true} \vdash \psi \textit{ just true}$}
    \RightLabel{\footnotesize $\implyT$I}
    \UnaryInfC{$\Gamma \vdash \varphi \implyT \psi \textit{ true}$}
\end{prooftree}

\begin{proposition}
$\Gamma \vdash \varphi \to \psi \textit{ just true}$ \,iff\, $\Gamma \vdash  \EXI\varphi \to \psi \textit{ just true}$.   
\end{proposition}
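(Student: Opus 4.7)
The cleanest route is to notice that this proposition is an immediate consequence of the two preceding ones, which together chain through the intermediate form $\varphi \to \EXI\psi \textit{ true}$. Specifically, Proposition~\ref{prop:1} gives $\Gamma \vdash \varphi \to \psi \textit{ just true}$ iff $\Gamma \vdash \varphi \to \EXI\psi \textit{ true}$, and Proposition~\ref{prop:3} gives $\Gamma \vdash \varphi \to \EXI\psi \textit{ true}$ iff $\Gamma \vdash \EXI\varphi \to \psi \textit{ just true}$. Chaining the two equivalences yields exactly $\Gamma \vdash \varphi \to \psi \textit{ just true}$ iff $\Gamma \vdash \EXI\varphi \to \psi \textit{ just true}$, so the proof is essentially a one-liner once the preceding results are cited.

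If one prefers a direct syntactic argument, both directions are short. For the left-to-right direction, assume $\Gamma \vdash \varphi \to \psi \textit{ just true}$, weaken to $\Gamma, \EXI\varphi \textit{ true} \vdash \EXI\varphi \textit{ true}$ via \textsc{hyp}, and then under the further hypothesis $\varphi \textit{ true}$ apply $\to$Ej to the weakened assumption paired with \textsc{hyp} to produce $\psi \textit{ just true}$; an application of $\EXI$E over $\EXI\varphi \textit{ true}$ gives $\psi \textit{ just true}$, and $\to$Ij then closes to $\EXI\varphi \to \psi \textit{ just true}$.

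For the right-to-left direction, assume $\Gamma \vdash \EXI\varphi \to \psi \textit{ just true}$. Under the hypothesis $\varphi \textit{ true}$, use \textsc{just} followed by $\EXI$I to obtain $\EXI\varphi \textit{ true}$; pair this with the weakened assumption and apply $\to$Ej to get $\psi \textit{ just true}$; finally close with $\to$Ij to derive $\varphi \to \psi \textit{ just true}$.

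There is essentially no obstacle here: the derivation boils down to a reflection pattern relating judgmental existence on the antecedent of an implication to propositional truncation of the consequent. Since both Propositions~\ref{prop:1} and~\ref{prop:3} are already available, the efficient presentation is to simply cite them and note transitivity of the iff; this avoids re-deriving essentially the same trees. I would present the proof in that form, with the direct derivations relegated to a remark if more transparency is desired.
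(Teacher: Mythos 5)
Your primary route is exactly the paper's proof: the paper disposes of this proposition with the single line ``Follows from Propositions \ref{prop:1} and \ref{prop:3},'' i.e.\ chaining the two equivalences through the intermediate judgment $\Gamma \vdash \varphi \to \EXI\psi \textit{ true}$. The supplementary direct derivations you sketch are also correct (and essentially recombine the same rule applications used in the proofs of those two propositions), but they are not needed.
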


\noindent \textit{Proof}. Follows from Propositions \ref{prop:1} and \ref{prop:3}.

\begin{proposition} ($\EXI$ distributes over $\to$). 
If $\Gamma \vdash \EXI (\varphi \to \psi) \textit{ true}$ \,then\, $\Gamma \vdash  \EXI\varphi \to \EXI\psi \textit{ true}$.   
\end{proposition}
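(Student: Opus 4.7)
\textit{Plan}. The strategy is to build a derivation by stripping away the outer $\EXI$s on the antecedent, using the hypothetical reasoning enabled by $\EXI$E, and repackaging the result via $\EXI$I. Concretely, to derive $\Gamma \vdash \EXI\varphi \to \EXI\psi \textit{ true}$ I apply $\to$I, so it suffices to produce a derivation of $\Gamma, \EXI\varphi \textit{ true} \vdash \EXI\psi \textit{ true}$. For this I apply $\EXI$I, reducing the goal to $\Gamma, \EXI\varphi \textit{ true} \vdash \psi \textit{ just true}$.

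The remainder proceeds by two nested uses of the elimination rule $\EXI$E, whose conclusion is always a just-true judgment, making it the natural way to chain existence eliminations. First, to eliminate the $\EXI\varphi$ assumption I use $\EXI$E with major premise $\Gamma, \EXI\varphi \textit{ true} \vdash \EXI\varphi \textit{ true}$ (by \textsc{hyp}) and minor premise $\Gamma, \EXI\varphi \textit{ true}, \varphi \textit{ true} \vdash \psi \textit{ just true}$. To build the minor premise I apply $\EXI$E once more, this time to the weakened hypothesis $\Gamma, \EXI\varphi \textit{ true}, \varphi \textit{ true} \vdash \EXI(\varphi \to \psi) \textit{ true}$ (obtained by two weakenings from the assumed $\Gamma \vdash \EXI(\varphi \to \psi) \textit{ true}$), paired with $\Gamma, \EXI\varphi \textit{ true}, \varphi \textit{ true}, \varphi \to \psi \textit{ true} \vdash \psi \textit{ just true}$. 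This last judgment is derived straightforwardly: apply $\to$E to the assumed $\varphi \to \psi$ and $\varphi$ (both available as hypotheses) to get $\psi \textit{ true}$, then convert via the \textsc{just} rule to $\psi \textit{ just true}$.

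There is no real obstacle here; the only thing to keep track of is that $\EXI$E forces the conclusion to be of the form $\gamma \textit{ just true}$, so the chain of eliminations must be carried out at the just-true level and only converted back to a true judgment at the very end via $\EXI$I (before the outer $\to$I). Several applications of exchange and weakening may be needed to align contexts, but these are guaranteed by the structural lemmas proved earlier. The overall shape of the derivation is: two weakenings on the assumption, inner $\EXI$E to pull out $\varphi \to \psi \textit{ true}$, $\to$E followed by \textsc{just} to obtain $\psi \textit{ just true}$, outer $\EXI$E to pull out $\varphi \textit{ true}$, then $\EXI$I and $\to$I to assemble the final conclusion.
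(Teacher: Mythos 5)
Your derivation is correct and follows essentially the same route as the paper's: two nested applications of $\EXI$E (outer one discharging $\EXI\varphi$, inner one discharging $\EXI(\varphi\to\psi)$), a core of $\to$E followed by \textsc{just} to obtain $\psi \textit{ just true}$, and a final $\EXI$I and $\to$I. The only difference is bookkeeping — you weaken the assumption into the full context $\Gamma, \EXI\varphi, \varphi$ up front, whereas the paper builds the inner derivation in $\Gamma, \varphi$ and adds $\EXI\varphi$ by weakening and exchange afterward — which is immaterial.
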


\noindent \textit{Proof}.

\begin{center}
\footnotesize
\def\defaultHypSeparation{\hskip 0.05in}
    \AxiomC{}
    \RightLabel{\footnotesize \textsc{hyp}}
    \UnaryInfC{$\Gamma , \EXI \varphi \vdash \EXI \varphi $}
    \AxiomC{$\Gamma \vdash \EXI (\varphi \to \psi) $}
    \RightLabel{\footnotesize \textsc{w}}
    \UnaryInfC{$\Gamma , \varphi \vdash \EXI (\varphi \to \psi) $}
    \AxiomC{}
    \RightLabel{\footnotesize \textsc{hyp}}
    \UnaryInfC{$\Gamma , \varphi \to \psi \vdash \varphi \to \psi  $}
    \RightLabel{\footnotesize \textsc{w}}
    \UnaryInfC{$\Gamma , \varphi \to \psi  , \varphi \vdash \varphi \to \psi  $}
    \RightLabel{\footnotesize \textsc{ex}}
    \UnaryInfC{$\Gamma , \varphi , \varphi \to \psi  \vdash \varphi \to \psi  $}
    \AxiomC{}
    \RightLabel{\footnotesize \textsc{hyp}}
    \UnaryInfC{$\Gamma , \varphi \vdash \varphi $}
    \RightLabel{\footnotesize \textsc{w}}
    \UnaryInfC{$\Gamma , \varphi , \varphi \to \psi \vdash \varphi $}
    \RightLabel{\footnotesize $\to$E}
    \BinaryInfC{$\Gamma , \varphi ,  \varphi \to \psi \vdash \psi $}
    \RightLabel{\footnotesize \textsc{just}}
    \UnaryInfC{$\Gamma , \varphi ,  \varphi \to \psi \vdash \psi \textit{ just}$}
    \RightLabel{\footnotesize $\EXI$E}
    \BinaryInfC{$\Gamma  , \varphi \vdash \psi \textit{ just}$}
    \RightLabel{\footnotesize \textsc{w}}
    \UnaryInfC{$\Gamma , \varphi , \EXI \varphi  \vdash \psi \textit{ just} $}
    \RightLabel{\footnotesize \textsc{ex}}
    \UnaryInfC{$\Gamma , \EXI \varphi, \varphi   \vdash \psi \textit{ just}$}
    \RightLabel{\footnotesize $\EXI$E}
    \BinaryInfC{$\Gamma , \EXI\varphi \vdash \psi \textit{ just}$}
    \RightLabel{\footnotesize $\EXI$I}
    \UnaryInfC{$\Gamma , \EXI\varphi \vdash \EXI \psi $}
    \RightLabel{\footnotesize $\to$I}
    \UnaryInfC{$\Gamma \vdash \EXI \varphi \to \EXI \psi $}
    \DisplayProof
\end{center}

\begin{proposition} \label{prop:6} 
$\Gamma \vdash \EXI \EXI \varphi \to \EXI \varphi  \textit{ true}$.    
\end{proposition}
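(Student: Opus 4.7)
The plan is to peel off both existence modalities by two nested applications of $\EXI$E and then repackage the result with $\EXI$I. First, I would apply $\to$I, reducing the goal to $\Gamma, \EXI\EXI\varphi \textit{ true} \vdash \EXI\varphi \textit{ true}$. Next, since $\EXI$I demands a $\textit{just true}$ premise, I would reduce further to $\Gamma, \EXI\EXI\varphi \textit{ true} \vdash \varphi \textit{ just true}$ and then apply $\EXI$I to obtain the desired $\EXI\varphi \textit{ true}$.

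The key observation driving the shape of the derivation is that the conclusion of $\EXI$E is always a $\textit{just true}$ judgment, so there is no way to eliminate $\EXI\EXI\varphi$ directly into a $\textit{true}$ conclusion; one has to detour through the $\textit{just true}$ world and climb back up with $\EXI$I. Concretely, to prove $\Gamma, \EXI\EXI\varphi \textit{ true} \vdash \varphi \textit{ just true}$, I would apply $\EXI$E to the hypothesis $\EXI\EXI\varphi \textit{ true}$, which leaves the obligation of deriving $\varphi \textit{ just true}$ under the extra assumption $\EXI\varphi \textit{ true}$. A second application of $\EXI$E to this fresh assumption reduces the remaining goal to $\Gamma, \EXI\EXI\varphi \textit{ true}, \EXI\varphi \textit{ true}, \varphi \textit{ true} \vdash \varphi \textit{ just true}$, which is immediate by \textsc{hyp} followed by \textsc{just}.

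There is no real obstacle here: once one notices that $\EXI$E cannot escape the $\textit{just true}$ level on its own, the overall pattern (two nested $\EXI$E's sandwiched between a single $\EXI$I and a concluding $\to$I) is essentially forced, and no induction, substitution lemma, or appeal to earlier propositions is required. The proposition is really a short judgmental witness to the idempotence of the existence modality.
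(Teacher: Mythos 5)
Your proof is correct and matches the paper's own derivation: the paper builds exactly the same tree ($\to$I at the root, then $\EXI$I, then two nested applications of $\EXI$E discharging $\EXI\EXI\varphi$ and $\EXI\varphi$ in turn, closing with \textsc{hyp} and \textsc{just}), differing only in that it derives the minor premises in smaller contexts and inserts explicit weakening and exchange steps where you simply carry the full context throughout. Your observation that $\EXI$E forces a detour through the \textit{just true} level before $\EXI$I can restore \textit{true} is exactly the point of the construction.
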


\noindent \textit{Proof}.

\begin{prooftree}
\small
    \AxiomC{}
    \RightLabel{\footnotesize \textsc{hyp}}
    \UnaryInfC{$\Gamma , \EXI \EXI \varphi \vdash \EXI \EXI \varphi $}

    \AxiomC{}
    \RightLabel{\footnotesize \textsc{hyp}}
    \UnaryInfC{$\Gamma , \EXI \varphi \vdash \EXI \varphi $}
    
    \AxiomC{}
    \RightLabel{\footnotesize \textsc{hyp}}
    \UnaryInfC{$\Gamma , \varphi  \vdash \varphi $}
    \RightLabel{\footnotesize \textsc{just}}
    \UnaryInfC{$\Gamma , \varphi \vdash \varphi \textit{ just} $}
    \RightLabel{\footnotesize \textsc{w}}
    \UnaryInfC{$\Gamma , \varphi , \EXI\varphi \vdash \varphi \textit{ just} $}
    \RightLabel{\footnotesize \textsc{ex}}
    \UnaryInfC{$\Gamma , \EXI\varphi , \varphi  \vdash \varphi \textit{ just} $}
    
    \RightLabel{\footnotesize $\EXI$E}
    \BinaryInfC{$\Gamma , \EXI\varphi  \vdash \varphi \textit{ just}$}

    \RightLabel{\footnotesize \textsc{w}}
    \UnaryInfC{$\Gamma , \EXI\varphi , \EXI \EXI \varphi \vdash \varphi \textit{ just}$}

    \RightLabel{\footnotesize \textsc{ex}}
    \UnaryInfC{$\Gamma , \EXI \EXI \varphi , \EXI\varphi \vdash \varphi \textit{ just}$}

    \RightLabel{\footnotesize $\EXI$E}
    \BinaryInfC{$\Gamma , \EXI \EXI \varphi \vdash \varphi \textit{ just}$}

    \RightLabel{\footnotesize $\EXI$I}
    \UnaryInfC{$\Gamma , \EXI \EXI \varphi \vdash \EXI\varphi $}
    \RightLabel{\footnotesize $\to$I}
    \UnaryInfC{$\Gamma \vdash \EXI \EXI \varphi \to \EXI\varphi $}
\end{prooftree}

\section{A computational variant}
\label{sec:computational}

In this section, we introduce a computational interpretation of the system from Section \ref{sec:logical} with explicit proof expressions in the style of the Curry-Howard isomorphism. Specifically, instead of purely logical judgments $\varphi \textit{ true}$ and $\varphi \textit{ just true}$ we will work with judgments $e : \varphi$ and $e \existsJ \varphi$ where $e$ is a proof expression representing the proof of $\varphi$. The informal meaning of $e : \varphi$ and $e \existsJ \varphi$ remains the same, i.e., ``$\varphi$ is true'' and ``there exists a proof of $\varphi$'', respectively. Additionally, we introduce new forms of judgment for expressing the notions of reducibility $\Longrightarrow$ and computability $\mathbf{R}$ of expressions $e$ of type $\varphi$.

The computational variant satisfies all the discussed properties of the logical variant, including the substitution lemma. Furthermore, we show that the computational variant has the subject reduction property and other useful properties such as forward and backward preservation, which then allows us to finally show that it also has the strong normalization property.

\subsection{Formal system}
\label{sec:formal_comp}

Language:

\begin{align*}
\text{Propositions } \varphi, \psi & \;::=\; p \mid \varphi \to \psi \mid\EXI\varphi \\
\text{Terms } t, t' & \;::=\; x \mid \lambda x.t \mid \textsf{ap}(t, t') \mid \langle e \rangle  \\
\text{Expressions } e, e' & \;::=\; t \mid \textsf{let } \langle x \rangle \textsf{ be } t \textsf{ in } e \mid  \lambda_{\textrm{j}} x.e \mid \textsf{ap}_{\textrm{j}}(e, t') \mid \langle e \rangle_{\textrm{j}} \mid  \textsf{let } \langle x \rangle_{\textrm{j}} \textsf{ be } e' \textsf{ in } e  \\ 
\text{Hypotheses } \Gamma & \;::=\; \cdot \mid \Gamma, x : \varphi \\
\end{align*}

\noindent \textit{Note}. All terms are expressions but not vice versa. The main difference between terms and expressions is that any two expressions $e$ and $e'$ of type $\varphi$ are treated as (propositionally) equal, which is not the case for terms. This is a consequence of the fact existential judgments capture the notion of proof-irrelevant truth.

\medskip

\noindent Categorical judgments:
$$t : \varphi \qquad e \existsJ \varphi $$

\noindent Hypothetical judgments:

$$\Gamma \vdash t : \varphi \qquad \Gamma \vdash e \existsJ \varphi $$

\medskip
\noindent \textit{Note}.
\noindent The traditional logical reading of the symbol ``$\existsJ$'' as ``therefore'' can still be applied here, i.e., it still makes sense to read $e \existsJ \varphi$ as ``$e$, therefore $\varphi$'' -- there exists a proof $e$ for $\varphi$, \emph{therefore} we can claim that $\varphi$ is true. Also, from a type-theoretic view, one reads $t : \varphi$ as ``$t$ has type $\varphi$'' or ``$t$ is typed to $\varphi$''. Analogously, $t \existsJ \varphi$ can be read as ``there exists a term of type $\varphi$'' or ``$t$ is irrelevantly typed to $\varphi$''. Or, from a proof-theoretic view, ``there exists a proof of proposition $\varphi$''.

\medskip

Similarly to the metavariable $\textit{star}$ standing either for $\textit{true}$ or $\textit{just true}$ from Section \ref{sec:logical}, we use the metavariable $\starJ$ standing either for ``$:$'' or ``$\existsJ$'', i.e., either proof-relevant or proof-irrelevant typing (again, all occurrences in rules, theorems, etc. must be consistent): $$\Gamma \vdash e \starJ \varphi$$ Also, note that since all terms $t$ are expressions $e$ but not vice versa, instantiating judgment $e \starJ \varphi$ with ``$:$'' leads to $t : \varphi$, not $e : \varphi$. In other words, in cases where $\starJ$ is replaced with ``$:$'', expressions $e$ are restricted to terms $t$.

\medskip

Additionally, to the previous section, we add two new forms of judgments:

$$\Gamma \vdash e \Longrightarrow_R e' \qquad \Gamma \vdash e \Longrightarrow_E e'$$

\noindent The first judgment, $\Gamma \vdash e \Longrightarrow e'$, tells us that in the context $\Gamma$, the expression $e$ reduces to the expression $e'$. The second judgment, $\Gamma \vdash e \Longrightarrow_E e'$, tells us that in the context $\Gamma$, the expression $e$ expands to the expression $e'$. In a computational setting, thanks to the Curry-Howard correspondence, these new judgments on proof expressions correspond to the local reduction and expansion metarules $\Rightarrow_R$ and $\Rightarrow_E$ from previous Section \ref{sec:logical}. To simplify the notation a little, we will omit the subscript ``$R$'' with reductions and write simply $\Longrightarrow$, while keeping the subscript ``$E$'' for expansions.

\medskip

The above judgments are governed by the following rules:

\begin{center}
\AxiomC{}
\RightLabel{\footnotesize \textsc{hyp}}
\UnaryInfC{$\Gamma, x : \varphi \vdash x : \varphi   $}
\DisplayProof
\quad
\AxiomC{$\Gamma \vdash t : \varphi $}
\RightLabel{\footnotesize \textsc{just}}
\UnaryInfC{$\Gamma \vdash t \existsJ \varphi$}
\DisplayProof
\end{center}

\medskip

\noindent To make the presentation of logical rules more surveyable, we skip the schematic  $\textit{star}$ variants $\to$I$\starJ$, $\to$E$\starJ$, $\EXI$I$\starJ$, $\EXI$E$\starJ$ and present directly their typed variants instantiated by ``$:$'' and ``$\existsJ$'', respectively (we will write this as $\starJ$ = $:$ and $\starJ$ = $\existsJ$), i.e., $\to$I, $\to$E, $\EXI$I, $\EXI$E and $\to$Ij, $\to$Ej, $\EXI$Ij, $\EXI$Ej. We also add corresponding computations (C) and expansion (CE) rules. 

First, let us present the $\textit{true}$ variants (i.e., $\starJ$ = $:$):

\medskip

\begin{center}
\AxiomC{$\Gamma, x : \varphi \vdash t : \psi $}
\RightLabel{\footnotesize $\to$I}
\UnaryInfC{$\Gamma \vdash \lambda x . t : \varphi \to \psi $}
\DisplayProof
\quad
\AxiomC{$\Gamma \vdash t : \varphi \to \psi $}
\AxiomC{$\Gamma \vdash t' : \varphi  $}
\RightLabel{\footnotesize $\to$E}
\BinaryInfC{$\Gamma \vdash  \textsf{ap}(t, t') : \psi $}
\DisplayProof
\end{center}

\begin{center}
\AxiomC{$\Gamma , x : \varphi \vdash t : \psi$}
\AxiomC{$\Gamma \vdash t' : \varphi$}
\RightLabel{\footnotesize $\to$C}
\BinaryInfC{$\Gamma \vdash \textsf{ap}(\lambda x . t , t') \Longrightarrow [t'/x] t $}
\DisplayProof
\quad
\AxiomC{$\Gamma \vdash t : \varphi \to \psi $}
\RightLabel{\footnotesize $\to$CE}
\UnaryInfC{$\Gamma \vdash t \Longrightarrow_E \lambda x . \textsf{ap}(t, x)$}
\DisplayProof
\end{center}

\begin{prooftree}
    \AxiomC{$\Gamma \vdash e \Longrightarrow e' $}
    \RightLabel{\footnotesize \textsf{ap}C}
    \UnaryInfC{$\Gamma \vdash \textsf{ap}(e, e'') \Longrightarrow \textsf{ap}(e', e'') $}
\end{prooftree}

\begin{center}
\AxiomC{$\Gamma \vdash e \existsJ \varphi $}
\RightLabel{\footnotesize $\EXI$I}
\UnaryInfC{$\Gamma \vdash \langle e \rangle : \EXI\varphi  $}
\DisplayProof
\quad
\AxiomC{$\Gamma \vdash t : \EXI\varphi$}
\AxiomC{$\Gamma , x : \varphi \vdash e \existsJ \gamma $}
\RightLabel{\footnotesize $\EXI$E}
\BinaryInfC{$\Gamma \vdash \textsf{let } \langle x \rangle \textsf{ be } t \textsf{ in } e  \existsJ \gamma  $}
\DisplayProof
\end{center}

\begin{center}
\small
\AxiomC{$\Gamma , x : \varphi \vdash e \existsJ \gamma$}
\AxiomC{$\Gamma \vdash t : \EXI\varphi$}
\RightLabel{\footnotesize $\EXI$C}
\BinaryInfC{$\Gamma \vdash \textsf{let } \langle x \rangle \textsf{ be } \langle e' \rangle \textsf{ in } e \Longrightarrow \Sopen e'/x\Sclose  e $}
\DisplayProof
\medskip

\medskip

\quad
\AxiomC{$\Gamma \vdash t : \EXI\varphi$}
\RightLabel{\footnotesize $\EXI$CE}
\UnaryInfC{$\Gamma \vdash t \Longrightarrow_E \langle \textsf{let } \langle x \rangle  \textsf{ be } t \textsf{ in }  x  \rangle $}
\DisplayProof
\end{center}

\medskip 

\noindent Now, let us present the $\textit{just true}$ variant (i.e., $\starJ$ = $\existsJ$):

\medskip

\begin{center}
\AxiomC{$\Gamma, x : \varphi \vdash e \existsJ \psi $}
\RightLabel{\footnotesize $\to$Ij}
\UnaryInfC{$\Gamma \vdash \lambda_{\textrm{j}} x . e \existsJ \varphi \to \psi $}
\DisplayProof
\quad
\AxiomC{$\Gamma \vdash e \existsJ \varphi \to \psi $}
\AxiomC{$\Gamma \vdash t : \varphi  $}
\RightLabel{\footnotesize $\to$Ej}
\BinaryInfC{$\Gamma \vdash  \textsf{ap}_{\textrm{j}}(e, t) \existsJ \psi $}
\DisplayProof
\end{center}

\begin{center}
\AxiomC{$\Gamma , x : \varphi \vdash e \existsJ \psi$}
\AxiomC{$\Gamma \vdash t : \varphi$}
\RightLabel{\footnotesize $\to$Cj}
\BinaryInfC{$\Gamma \vdash \textsf{ap}_{\textrm{j}}(\lambda_{\textrm{j}} x . e , t) \Longrightarrow [t/x] e $}
\DisplayProof
\quad
\AxiomC{$\Gamma \vdash e \existsJ \varphi \to \psi $}
\RightLabel{\footnotesize $\to$CEj}
\UnaryInfC{$\Gamma \vdash e \Longrightarrow_E \lambda_{\textrm{j}} x . \textsf{ap}_{\textrm{j}}(e, x)$}
\DisplayProof
\end{center}

\begin{prooftree}
    \AxiomC{$\Gamma \vdash e \Longrightarrow e' $}
    \RightLabel{\footnotesize \textsf{ap}Cj}
    \UnaryInfC{$\Gamma \vdash \textsf{ap}_{\textrm{j}}(e, e'') \Longrightarrow \textsf{ap}_{\textrm{j}}(e', e'') $}
\end{prooftree}

\begin{center}
\AxiomC{$\Gamma \vdash e \existsJ \varphi $}
\RightLabel{\footnotesize $\EXI$Ij}
\UnaryInfC{$\Gamma \vdash \langle e \rangle_{\textrm{j}} \existsJ \EXI\varphi  $}
\DisplayProof
\quad
\AxiomC{$\Gamma \vdash e' \existsJ \EXI\varphi$}
\AxiomC{$\Gamma , x : \varphi \vdash e \existsJ \gamma $}
\RightLabel{\footnotesize $\EXI$Ej}
\BinaryInfC{$\Gamma \vdash \textsf{let } \langle x \rangle_{\textrm{j}} \textsf{ be } e' \textsf{ in } e  \existsJ \gamma  $}
\DisplayProof
\end{center}

\begin{center}
\small
\AxiomC{$\Gamma , x : \varphi \vdash e \existsJ \gamma$}
\AxiomC{$\Gamma \vdash e'' \existsJ \EXI\varphi$}
\RightLabel{\footnotesize $\EXI$Cj}
\BinaryInfC{$\Gamma \vdash \textsf{let } \langle x \rangle_{\textrm{j}} \textsf{ be } \langle e' \rangle_{\textrm{j}} \textsf{ in } e \Longrightarrow \Sopen e'/x\Sclose  e $}
\DisplayProof
\medskip

\medskip

\quad
\AxiomC{$\Gamma \vdash e \existsJ \EXI\varphi$}
\RightLabel{\footnotesize $\EXI$CEj}
\UnaryInfC{$\Gamma \vdash e \Longrightarrow_E \langle \textsf{let } \langle x \rangle_{\textrm{j}}  \textsf{ be } e \textsf{ in }  x  \rangle $}
\DisplayProof
\end{center}

Note that the newly introduced \textit{just true} variants of canonical and noncanonical proof expressions, namely $\lambda_{\textrm{j}} x.e, \textsf{ap}_{\textrm{j}}(e, e')$, $\langle e \rangle_{\textrm{j}}$, $\textsf{let } \langle x \rangle_{\textrm{j}} \textsf{ be } e' \textsf{ in } e$, behave essentially identically to their \textit{true} variants, only the associated typing relation changes from the proof-relevant $:$ to the proof-irrelevant $\existsJ$\,. For example, in the proof-irrelevant application $\textsf{ap}_{\textrm{j}}(e, t')$ the first argument is an expression, while in the proof-relevant application $\textsf{ap}(t, t')$ the first argument is a term.

\subsection{Basic properties}

\subsubsection{Substitution}

\medskip

\begin{center}
\AxiomC{$\Gamma \vdash e \starJ \varphi $}
\AxiomC{$\Gamma, x : \varphi \vdash e' \starJ  \gamma $}
\RightLabel{\footnotesize \textsc{sub1}}
\BinaryInfC{$\Gamma \vdash \Sopen e/x \Sclose e' \starJ  \gamma $}
\DisplayProof \quad
\AxiomC{$\Gamma \vdash t : \varphi$}
\AxiomC{$\Gamma, x : \varphi \vdash e  \existsJ  \gamma $}
\RightLabel{\footnotesize \textsc{sub2}}
\BinaryInfC{$\Gamma \vdash [ t /x ] e  \existsJ  \gamma $}
\DisplayProof
\end{center}

\noindent where $[t/x]e$ from \textsc{sub2} represents the result of substituting $t$ for $x$ in $e$ while renaming bound variables where necessary to avoid variable capture. The definition of the new kind of substitution $\Sopen e/x \Sclose e'$ from \textsc{sub1} is a bit more tricky. When $\starJ$ = $:$, the substitution $ \Sopen e/x \Sclose e'$ becomes just $[t/x]t'$. However, when $\starJ$ = $\existsJ$, the substitution $[e/x]e' $ behaves differently from the standard substitution because there occurs a mismatch between what is being substituted for what. Specifically, when $\starJ$ = $\existsJ$, we are trying to substitute $e$ that is typed irrelevantly (i.e., $e \existsJ \varphi$) for a variable $x$ that is typed relevantly (i.e., $x : \varphi$). Consequently, to account for this mismatch, the substitution has to be defined inductively on the structure of $e$ not of $e'$ (we base this on \cite{pfenning2001}):

$$ \Sopen e /x \Sclose e' =_{df}
  \begin{cases}
    [t/x]e' & \text{if } e = t \\
    \textsf{let } \langle y \rangle \textsf{ be } t \textsf{ in } \Sopen e'' /x\Sclose e' & \text{if } e = \textsf{let } \langle y \rangle  \textsf{ be } t \textsf{ in } e''\\
    \textsf{let } \langle y \rangle_{\textrm{j}} \textsf{ be } e''' \textsf{ in } \Sopen e'' /x\Sclose e' & \text{if } e = \textsf{let } \langle y \rangle_{\textrm{j}}  \textsf{ be } e''' \textsf{ in } e''\\
     \langle \Sopen e'' / x \Sclose e' \rangle  & \text{if } e = \langle e'' \rangle_{\textrm{j}} \\
    \lambda_{\textrm{j}} x . \Sopen e'' / x \Sclose e'  & \text{if } e = \lambda_{\textrm{j}} x . e'' \\
    \textsf{ap}_{\textrm{j}}( \Sopen e'' / x \Sclose e',t)  & \text{if } e = \textsf{ap}_{\textrm{j}}(e'',t) \\

  \end{cases}
$$

Thus, when expression $e$ is a term $t$, $\Sopen e /x \Sclose e'$ reduces to the standard substitution $[t/x]e'$. Otherwise, i.e., if $e$ is an expression of any of the specified forms, there is a recursive call of the substitution as specified by the clauses.

Hence, in practice, we will have the following three substitution rules:

\begin{prooftree}
\AxiomC{$\Gamma \vdash t : \varphi $}
\AxiomC{$\Gamma, x : \varphi \vdash t': \gamma $}
\RightLabel{\footnotesize \textsc{sub1} ($\starJ$ = $:$)}
\BinaryInfC{$\Gamma \vdash [t/x]t' : \gamma $}
\end{prooftree}

\begin{prooftree}
\AxiomC{$\Gamma \vdash e \existsJ \varphi$}
\AxiomC{$\Gamma, x : \varphi \vdash e' \existsJ \gamma $}
\RightLabel{\footnotesize \textsc{sub1} ($\starJ$ = $\existsJ$)}
\BinaryInfC{$\Gamma \vdash \Sopen e /x \Sclose e'  \existsJ  \gamma $}
\end{prooftree}

\begin{prooftree}
\AxiomC{$\Gamma \vdash t : \varphi$}
\AxiomC{$\Gamma, x : \varphi \vdash e  \existsJ  \gamma $}
\RightLabel{\footnotesize \textsc{sub2}}
\BinaryInfC{$\Gamma \vdash [ t /x ] e  \existsJ  \gamma $}
\end{prooftree}

\medskip

\noindent We can show that the substitution rules \textsc{sub1} and \textsc{sub2} are valid using the following lemma.

\begin{lemma} (Substitution lemma)

\begin{enumerate}

    \item[](\textsc{sub1}) If $\Gamma \vdash e \starJ \varphi $ and $\Gamma, x : \varphi \vdash e' \starJ  \gamma $ then $\Gamma \vdash \Sopen e/x \Sclose e' \starJ  \gamma $.

    \item[](\textsc{sub2}) If $\Gamma \vdash t : \varphi$ and $\Gamma, x : \varphi \vdash e  \existsJ  \gamma $ then $\Gamma \vdash [ t /x ] e  \existsJ  \gamma $.

\end{enumerate}   
\end{lemma}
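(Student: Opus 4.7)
The plan is to prove (\textsc{sub1}) and (\textsc{sub2}) by mutual structural induction, mirroring the strategy of the logical variant but taking into account the asymmetric definition of the mixed substitution $\Sopen e/x\Sclose e'$. For (\textsc{sub1}), the argument splits on whether $\starJ$ is ``$:$'' or ``$\existsJ$'', and the choice of what to induct on differs between the two cases, so these really must be established in parallel.

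First I would handle (\textsc{sub1}) with $\starJ$ = $:$ by induction on the derivation of $\Gamma, x : \varphi \vdash t' : \gamma$. Here $e$ is forced to be a term $t$, so $\Sopen e/x\Sclose e'$ collapses to the ordinary $[t/x]t'$, and the argument is the standard one for simply typed lambda calculus: the \textsc{hyp} case either returns $t$ itself or leaves the judgment intact; each $\to$I, $\to$E, $\EXI$I, $\EXI$E case pushes the substitution into the subexpressions, invoking weakening to handle the freshly bound variable. The $\EXI$I and $\EXI$E cases feed an existential premise $\Gamma, x : \varphi \vdash e'' \existsJ \sigma$, which requires invoking the $\starJ$ = $\existsJ$ clause of the induction hypothesis; this is why the two clauses of (\textsc{sub1}) must be proved simultaneously.

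Next, (\textsc{sub1}) with $\starJ$ = $\existsJ$ is by induction on the derivation of $\Gamma \vdash e \existsJ \varphi$, matching exactly the case split in the recursive definition of $\Sopen e/x\Sclose e'$. The base \textsc{just} case, where $e = t$, reduces directly to the previously established $\starJ$ = $:$ clause applied to $\Gamma \vdash t : \varphi$ and $\Gamma, x : \varphi \vdash e' \existsJ \gamma$. In each remaining case, the head constructor of $e$ coincides with the last rule of the derivation, so IH on the sub-expression underneath the binder (with $e'$ weakened to the extended context) yields the desired judgment, which is then reassembled by the same introduction or elimination rule. For instance, if the last rule is $\EXI$E with premises $\Gamma \vdash t : \EXI\sigma$ and $\Gamma, y : \sigma \vdash e'' \existsJ \varphi$, then weakening $e'$ by $y : \sigma$, applying IH, and re-applying $\EXI$E produces $\textsf{let } \langle y \rangle \textsf{ be } t \textsf{ in } \Sopen e''/x\Sclose e'$, which is precisely $\Sopen e/x\Sclose e'$. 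Finally, (\textsc{sub2}) is by induction on $\Gamma, x : \varphi \vdash e \existsJ \gamma$, following the template of the logical variant but now carrying proof expressions along; the subderivations of the form $\Gamma, x : \varphi \vdash t' : \sigma$ that arise inside $\to$E and $\EXI$E premises are discharged by (\textsc{sub1}) at $\starJ$ = $:$, and exchange plus IH handle the remaining recursive calls.

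The main obstacle is the orientation of the $\starJ$ = $\existsJ$ case of (\textsc{sub1}): since $\Sopen e/x\Sclose e'$ is defined by recursion on $e$ and not on $e'$, the induction must be on the derivation of the first premise, inverting the customary orientation of a substitution lemma. Once this is accepted, the combinatorial work is routine, provided one has weakening available to extend $e'$ to the enlarged contexts introduced by the binders in $e$; the unusual shape of the substitution itself never produces new typing obligations beyond those already handled by the $\EXI$E and \textsc{just} rules.
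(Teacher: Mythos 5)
Your proposal matches the paper's proof exactly: (\textsc{sub1}) with $\starJ$ = ``$:$'' by induction on the derivation of $\Gamma, x : \varphi \vdash t' : \gamma$, (\textsc{sub1}) with $\starJ$ = ``$\existsJ$'' by induction on the derivation of $\Gamma \vdash e \existsJ \varphi$ (the inverted orientation forced by the definition of $\Sopen e/x\Sclose e'$, which you correctly identify), and (\textsc{sub2}) by induction on $\Gamma, x : \varphi \vdash e \existsJ \gamma$. One small mislabeling: in the base case of (\textsc{sub1}) at $\starJ$ = ``$\existsJ$'', where $e = t$, the conclusion $\Gamma \vdash [t/x]e' \existsJ \gamma$ follows from (\textsc{sub2}) (since $e'$ is irrelevantly typed), not from the $\starJ$ = ``$:$'' clause, so (\textsc{sub2}) should be established before that case rather than last; the overall argument is unaffected since (\textsc{sub2}) itself depends only on (\textsc{sub1}) at ``$:$''.
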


\begin{proof}
(\textsc{sub1}, $\star$ = $:$) By structural induction on the derivation of $\Gamma , x : \varphi \vdash t' : \gamma $.   
\end{proof}

\begin{proof}
(\textsc{sub1}, $\star$ = $\existsJ$) By structural induction on the derivation of $\Gamma \vdash e \existsJ \varphi$.
\end{proof}

\begin{proof}
(\textsc{sub2}) By structural induction on the derivation of $\Gamma , x : \varphi \vdash e \existsJ \gamma $.   
\end{proof}

\subsubsection{Subject reduction}

\begin{lemma} (Subject reduction)
If $\Gamma \vdash e \starJ \varphi$ and $\Gamma \vdash e \Longrightarrow e'$ then $\Gamma \vdash e' \starJ \varphi$.
\end{lemma}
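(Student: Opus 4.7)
The plan is to proceed by structural induction on the derivation of the reduction judgment $\Gamma \vdash e \Longrightarrow e'$, with a case split on the last rule used. Since reduction can happen in both the proof-relevant and proof-irrelevant fragments, I will need to handle both $\starJ = {:}$ and $\starJ = {\existsJ}$ uniformly, but the structure of the argument is the same: inversion on the typing derivation of $e$, followed by an application of the appropriate substitution lemma (\textsc{sub1} or the induction hypothesis in the congruence cases).

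For the principal redex cases $\to$C, $\to$Cj, $\EXI$C, $\EXI$Cj, I would argue as follows. Take $\EXI$C as the representative example: we have $e = \textsf{let } \langle x \rangle \textsf{ be } \langle e'' \rangle \textsf{ in } e_0$ reducing to $\Sopen e''/x \Sclose e_0$, and by hypothesis $\Gamma \vdash e \existsJ \gamma$. Inversion on $\EXI$E yields $\Gamma \vdash \langle e'' \rangle : \EXI\varphi$ and $\Gamma, x:\varphi \vdash e_0 \existsJ \gamma$; a further inversion on $\EXI$I extracts $\Gamma \vdash e'' \existsJ \varphi$. Then the substitution lemma \textsc{sub1} with $\starJ = \existsJ$ delivers $\Gamma \vdash \Sopen e''/x \Sclose e_0 \existsJ \gamma$ as required. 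The $\to$C case is analogous but uses \textsc{sub1} with $\starJ = {:}$ (i.e., ordinary substitution $[t'/x]t$), and the $\to$Cj case uses \textsc{sub2} since the argument $t$ is proof-relevantly typed while the body $e$ is proof-irrelevant. The $\EXI$Cj case again invokes \textsc{sub1} in its $\existsJ$-instance after inversion on $\EXI$Ej and $\EXI$Ij.

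For the congruence rules \textsf{ap}C and \textsf{ap}Cj, the argument is the standard one: invert the relevant elimination rule ($\to$E or $\to$Ej) to obtain a typing for the subexpression in function position, apply the induction hypothesis to get that the reduct is typed at the same arrow type, then re-apply the elimination rule with the unchanged argument to retype the whole expression.

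The only delicate point, and what I expect to be the main obstacle, is the $\EXI$C / $\EXI$Cj case, because the ``mismatched'' substitution $\Sopen \cdot/\cdot \Sclose \cdot$ is defined by recursion on the expression being substituted rather than on the target, so one must be careful that the version of \textsc{sub1} being invoked really matches the shape produced by the inversion. Since the substitution lemma is stated in exactly the form required by these redexes, the case nonetheless goes through directly, provided one notes that inversion on $\EXI$I (respectively $\EXI$Ij) gives the premise $e'' \existsJ \varphi$ needed to feed \textsc{sub1} at $\starJ = \existsJ$. Everywhere else, the argument is a routine inversion-then-substitute pattern.
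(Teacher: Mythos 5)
Your proposal is correct and follows essentially the same route as the paper: structural induction on the derivation of $\Gamma \vdash e \Longrightarrow e'$, inversion on the typing judgment, and an appeal to the appropriate instance of the substitution lemma; your treatment of the $\EXI$C case reproduces the paper's own worked example almost verbatim, and your identification of \textsc{sub2} for $\to$Cj and \textsc{sub1} at the right instantiations elsewhere is accurate. No gaps to report.
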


\begin{proof}
By structural induction on the derivation $\Gamma \vdash e \Longrightarrow e'$ and by inversion on $\Gamma \vdash e \starJ \varphi$.   
\end{proof}

\noindent For example, let us consider the case of computation rule for $\EXI$:

\begin{prooftree}
\AxiomC{$\Gamma , x : \varphi \vdash e \existsJ \gamma$}
\AxiomC{$\Gamma \vdash t : \EXI\varphi$}
\RightLabel{\footnotesize $\EXI$C}
\BinaryInfC{$\Gamma \vdash \textsf{let } \langle x \rangle \textsf{ be } \langle e' \rangle \textsf{ in } e \Longrightarrow \Sopen e' /x\Sclose  e $}
\end{prooftree}

\noindent where $e = \textsf{let } \langle x \rangle \textsf{ be } \langle e' \rangle \textsf{ in } e \existsJ \varphi$.

\begin{enumerate}
    \item $\Gamma \vdash \textsf{let } \langle x \rangle \textsf{ be } \langle e' \rangle \textsf{ in } e \existsJ \varphi$ (by assumption)
    \item $\Gamma \vdash \langle e' \rangle : \EXI\psi$ and $\Gamma , x : \psi \vdash e \existsJ \varphi$ (by inversion on 1.)
    \item $\Gamma \vdash e' \existsJ \psi$ (by inversion on 2.)
    \item $\Gamma \vdash \Sopen e' /x\Sclose  e \existsJ \varphi $ (by \textsc{sub1} on 3. and 2.)
\end{enumerate}

\subsubsection{Normalization}

We prove strong normalization, i.e., the property that every reduction of an expression ends with a value, by Tait's reducibility method (\cite{tait1967}), also known as the method of logical relations (\cite{plotkin1973}).

First, let us introduce a new form of judgment: $$\Gamma \vdash e \star \varphi \in \mathbf{R}$$ This judgment means that in the context $\Gamma$ expression $e$ (which is typed either relevantly $e : \varphi$ or irrelevantly $e \existsJ \varphi$) belongs to the set $\mathbf{R}$ of reducible expressions of type $\varphi$.\footnote{Typically, these sets are regarded as unary logical relations, i.e., predicates on terms and written as $\mathbf{R}_\varphi (e)$. However, since our system additionally distinguishes between relevant and irrelevant typing, i.e., between $t : \varphi$ and $e \existsJ \varphi$, we use this nonstandard notation to keep the typing information explicit.}

The set \textbf{R}, or more precisely, the judgment of the form $\Gamma \vdash e \starJ \varphi \in \mathbf{R}$, is defined for $\EXI\varphi$ and $\varphi \to \psi$ as follows (recall that $\varphi \implyT \psi$ is defined as $\varphi \to \EXI\psi$):\footnote{Note that the property $\mathbf{R}$ of reducible expressions is inductively defined over the structure of $\EXI\varphi$, so it is well-founded.}

\begin{itemize}

    \item $\Gamma \vdash e \starJ \EXI\varphi \in \mathbf{R}$ \, $=_{df}$ \, $\Gamma \vdash e \starJ \EXI\varphi$ and $\Gamma \vdash e \Longrightarrow \langle a \rangle$ for some proof term $a$ such that $\Gamma \vdash a : \varphi \in \mathbf{R}$. In a rule form:
    
    \begin{prooftree}
        \AxiomC{$\Gamma \vdash e \starJ \EXI\varphi$}
        \AxiomC{$\Gamma \vdash e \Longrightarrow \langle a \rangle$}
        \AxiomC{$\Gamma \vdash a : \varphi \in \mathbf{R}$}
        \TrinaryInfC{$\Gamma \vdash e \starJ \EXI\varphi \in \mathbf{R}$}
    \end{prooftree}

    \item $\Gamma \vdash e \starJ \varphi \to \psi \in \mathbf{R}$ \, $=_{df}$ \, $\Gamma \vdash e \starJ \varphi \to \psi$ and $\Gamma \vdash e \Longrightarrow \lambda x . t$ and there exists a term $t$ such that for all $a$, $\Gamma \vdash a : \varphi \in \mathbf{R}$ implies $\Gamma \vdash \textsf{ap}(t,a) \starJ \psi \in \mathbf{R}$. In a rule form:

    \begin{prooftree}
        \AxiomC{$\Gamma \vdash e \starJ \varphi \to \psi$}
        \AxiomC{$\Gamma \vdash e \Longrightarrow \lambda x . t$}
        \AxiomC{$\Gamma , x : \varphi \vdash t \starJ \psi $}
        \TrinaryInfC{$\Gamma \vdash e \starJ \varphi \to \psi \in \mathbf{R}$}
    \end{prooftree}

    \noindent with the additional condition on term $t$ as above.
    
\end{itemize}

\noindent The proof of the normalization proceeds in two steps. First, we show that if expression $e$ (typed either relevantly or irrelevantly to $\varphi$) has the property $\textbf{R}$ in context $\Gamma$, then the reduction of $e$ terminates, i.e., ends with a value (Lemma 7). Then, we show that if we substitute for the free variables of $e$ terms that have the property $\textbf{R}$, then the result will have the property $\textbf{R}$ as well (Lemma 8).

\begin{lemma}(Existence of value)
If $\Gamma \vdash e \starJ \varphi \in \mathbf{R}$, then $\Gamma \vdash v \existsJ \varphi$, i.e., there exists a canonical proof expression, i.e., value $v$ of the type $\varphi$, such that $\Gamma \vdash e \Longrightarrow v$.   
\end{lemma}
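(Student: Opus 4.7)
The plan is to prove the lemma by straightforward case analysis on the structure of the type $\varphi$, reading off the required value directly from the definition of $\mathbf{R}$. The key observation is that, in each clause defining $\Gamma \vdash e \starJ \varphi \in \mathbf{R}$, one of the premises has exactly the form $\Gamma \vdash e \Longrightarrow v$ with $v$ a canonical proof expression of type $\varphi$. So the existence of a value is essentially built into the definition of reducibility, and the lemma is obtained by unpacking.

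Concretely, for the case $\varphi = \EXI\psi$, I invert the rule defining $\Gamma \vdash e \starJ \EXI\psi \in \mathbf{R}$ to obtain a proof term $a$ with $\Gamma \vdash e \Longrightarrow \langle a \rangle$ and $\Gamma \vdash a : \psi \in \mathbf{R}$. I set $v = \langle a \rangle$; it is a canonical proof expression of type $\EXI\psi$, with $\Gamma \vdash v : \EXI\psi$ obtained from $a$ by applying $\EXI$I (after recording $a \existsJ \psi$ via \textsc{just}), and then $\Gamma \vdash v \existsJ \EXI\psi$ follows by \textsc{just}. The case $\varphi = \psi \to \gamma$ is analogous: inversion yields a term $t$ with $\Gamma \vdash e \Longrightarrow \lambda x . t$ and $\Gamma, x : \psi \vdash t \starJ \gamma$, so I set $v = \lambda x . t$, which is canonical of type $\psi \to \gamma$, and obtain $\Gamma \vdash v \existsJ \psi \to \gamma$ by $\to$I followed by \textsc{just}.

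The main potential source of friction is the base case of atomic propositions $p$, which is not covered by the definition of $\mathbf{R}$ as presented in the excerpt. The standard Tait-style remedy is to declare $\Gamma \vdash e \starJ p \in \mathbf{R}$ to mean that $e$ is well-typed and strongly normalizing to some canonical form (e.g., a variable or a neutral term standing for the atomic proposition), in which case the lemma is again immediate by definition. Apart from this bookkeeping, I do not expect any obstacle here; the substantive work of the normalization argument lies in Lemma 8, which has to show that reducibility is actually inhabited by every well-typed expression once reducible terms are substituted for its free variables, and where the non-trivial closure properties of $\mathbf{R}$ under reduction, expansion, and substitution will have to be established.
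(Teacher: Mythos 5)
Your proposal is correct and takes essentially the same route as the paper, whose entire proof is the one-line observation that the claim follows from the definition of $\Gamma \vdash e \starJ \varphi \in \mathbf{R}$ (each clause of which already contains a premise of the form $\Gamma \vdash e \Longrightarrow v$ with $v$ canonical). Your additional remark about the uncovered atomic case is a fair point the paper also leaves implicit, and your standard Tait-style patch for it is the right one.
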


\begin{proof}
Follows from the definition of judgments of the form $\Gamma \vdash e \starJ \varphi \in \mathbf{R}$.   
\end{proof}

\begin{lemma} (Fundamental lemma)
If $\Gamma \vdash e(x_1 , \ldots , x_n) \starJ \varphi$, where $\Gamma = x_1 : \gamma_1 , \ldots, x_n : \gamma_n$, and $t_1, \ldots, t_n$ are terms such that $\Gamma \vdash t_i : \gamma_i \in \mathbf{R}$, $0 < i \leq n$, then $\Gamma \vdash e(t_1 , \ldots , t_n) \starJ \varphi \in \mathbf{R}$.   
\end{lemma}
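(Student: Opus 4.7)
The plan is to proceed by structural induction on the derivation of $\Gamma \vdash e(x_1,\ldots,x_n) \starJ \varphi$, with one case per typing rule: \textsc{hyp}, \textsc{just}, $\to$I, $\to$E, $\EXI$I, $\EXI$E together with the four \textit{just true} counterparts. Before attacking the main induction, I would record two auxiliary facts on which the argument leans: \emph{closure under head expansion}, namely that if $\Gamma \vdash e \Longrightarrow e'$ and $\Gamma \vdash e' \starJ \varphi \in \mathbf{R}$, then $\Gamma \vdash e \starJ \varphi \in \mathbf{R}$; and \emph{closure of $\mathbf{R}$ under \textsc{just}}, i.e., that $\Gamma \vdash t : \varphi \in \mathbf{R}$ implies $\Gamma \vdash t \existsJ \varphi \in \mathbf{R}$. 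Both go through by induction on the structure of $\varphi$ from the defining clauses of $\mathbf{R}$ at $\EXI\varphi$ and $\varphi \to \psi$, which are formulated schematically in $\starJ$.

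With these in hand, the base case \textsc{hyp} is immediate: the substituted expression is exactly one of the $t_i$, which is reducible by hypothesis. The \textsc{just} case is handled directly by the second auxiliary fact. For the elimination rules ($\to$E, $\to$Ej, $\EXI$E, $\EXI$Ej) I would apply the induction hypothesis to each premise, unfold the definition of $\mathbf{R}$ at the functional or modal type to extract the appropriate canonical form ($\lambda x . t$ or $\langle a \rangle$) together with its reducibility witness, apply the corresponding computation rule ($\to$C, $\to$Cj, $\EXI$C, $\EXI$Cj) to reduce the eliminated expression one step, and then close the gap via head expansion combined with subject reduction (Lemma 6).

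The main obstacle is the introduction rules, principally $\to$I (and symmetrically $\to$Ij). Starting from $\Gamma, x : \varphi \vdash t : \psi$, I must prove $\lambda x.\widehat{t} \starJ \varphi \to \psi \in \mathbf{R}$, where $\widehat{t}$ denotes the result of substituting the $t_i$ for the $x_i$ in $t$. By the defining clause of $\mathbf{R}$ at the arrow type, this requires showing that for any $a : \varphi \in \mathbf{R}$ the application $\textsf{ap}(\lambda x.\widehat{t}, a)$ is reducible at $\psi$; a single use of $\to$C rewrites it to $[a/x]\widehat{t}$, which is reducible by the induction hypothesis applied to the extended substitution $x \mapsto a,\, x_i \mapsto t_i$, and head expansion then lifts reducibility back to the application. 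The $\EXI$I cases are similar but simpler, since there is no binder to traverse. The genuinely delicate subcase is the interaction with the generalized substitution $\Sopen e/x \Sclose e'$ that appears whenever $\starJ = \existsJ$: here I would argue clause by clause on the inductive definition of $\Sopen \cdot / \cdot \Sclose \cdot$ from the substitution subsection, pushing the outer let/lambda/application constructors past the substitution and reducing each case to an instance of ordinary term substitution already covered by the induction hypothesis. Lemma 5 (substitution) and subject reduction are the two workhorses throughout.
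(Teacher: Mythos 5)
Your proposal follows essentially the same route as the paper: structural induction on the typing derivation, with the two workhorses being closure under head expansion (the paper's Backward preservation, Lemma 9) and forward preservation (Lemma 10), the introduction cases handled by unfolding the definition of $\mathbf{R}$ at $\varphi \to \psi$ or $\EXI\varphi$, applying the computation rule, and lifting reducibility back through the reduction step, and the elimination cases handled by extracting the canonical form from the inductive hypotheses. The only cosmetic differences are that the paper proves head expansion by induction on the reduction judgment rather than on the type, and in the $\to$Ij case it first normalizes the argument $a$ to a value via preservation before substituting, neither of which changes the substance of the argument.
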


\noindent where $\starJ$ stands for either ``$:$'' or ``$\existsJ$'' symbol (all occurencens must be consistent; also, recall that all terms $t$ are expressions $e$ but not vice versa).

\begin{proof}
By structural induction on the derivation $\Gamma \vdash e(x_1 , \ldots , x_n) \star \varphi$ (with the help of preservation and backward preservation lemmas, see below).    
\end{proof}

\noindent For example:

\medskip

\noindent Case $\EXI$I (where $\starJ$ = $:$):

\begin{center}
\AxiomC{$\Gamma \vdash e(x_1 , \ldots , x_n) \existsJ \varphi$}
\RightLabel{\footnotesize $\EXI$I}
\UnaryInfC{$\Gamma \vdash \langle e(x_1 , \ldots , x_n) \rangle : \EXI \varphi$}
\DisplayProof
\end{center}

\begin{enumerate}

    \item $\Gamma \vdash t_1 : \gamma_1  \in \mathbf{R} , \ldots, t_n : \gamma_n  \in \mathbf{R}$ (by assumption)

    \item $\Gamma \vdash  e(t_1, \ldots, t_n) \existsJ \varphi \in \mathbf{R}  $ (by IH)

    \item $\Gamma \vdash \langle e(t_1, \ldots, t_n) \rangle : \EXI\varphi \in \mathbf{R} $ (by def. of $t : \EXI\varphi \in \mathbf{R}$)
    
\end{enumerate}

\noindent Case $\EXI$E (where $\star$ = $:$):

\begin{center}
\AxiomC{$\Gamma \vdash t(x_1 , \ldots , x_n) : \EXI \varphi$}
\AxiomC{$\Gamma , x : \varphi \vdash e(x_1 , \ldots , x_n, x) \existsJ \gamma$}
\RightLabel{\footnotesize $\EXI$E}
\BinaryInfC{$\Gamma \vdash \textsf{let } \langle x \rangle \textsf{ be } t(x_1 , \ldots , x_n) \textsf{ in } e(x_1 , \ldots , x_n) \existsJ \gamma$}
\DisplayProof
\end{center}

\begin{enumerate}

    \item $\Gamma \vdash t(t_1 , \ldots , t_n) : \EXI\varphi \in \mathbf{R}$ (by IH)

    \item $\Gamma \vdash e(t_1 , \ldots , t_n, x) \existsJ \gamma \in \mathbf{R}$ for all $x$ such that $\Gamma \vdash x : \varphi \in \mathbf{R}$ (by IH)

    \item $\Gamma \vdash t(t_1 , \ldots , t_n) \Longrightarrow \langle a \rangle : \EXI\varphi $ for some $a$ such that $\Gamma \vdash a : \varphi \in \mathbf{R}$ (by def. of $t : \EXI\varphi \in \mathbf{R}$ from 1.)

    \item $\Gamma \vdash \textsf{let } \langle x \rangle \textsf{ be } \langle a \rangle \textsf{ in } e(t_1 , \ldots , t_n, x) \Longrightarrow \Sopen a/x\Sclose  e(t_1 , \ldots , t_n, x) $ (by def. of $\textsf{let}$)

    \item $\Gamma \vdash \textsf{let } \langle x \rangle \textsf{ be } \langle a \rangle \textsf{ in } e(t_1 , \ldots , t_n, x) \Longrightarrow e(t_1 , \ldots , t_n, a) $ (by def. of sub1 on 4.)

    \item $\Gamma \vdash e(t_1 , \ldots , t_n, a) \existsJ \gamma \in \mathbf{R}$  (by IH from 2. and 3.)

    \item $\Gamma \vdash \textsf{let } \langle x \rangle \textsf{ be } t(t_1 , \ldots , t_n) \textsf{ in } e(t_1 , \ldots , t_n) \existsJ \gamma \in \mathbf{R} $ (by 3., 4, 5., and backward preservation lemma)
    
\end{enumerate}

\medskip

\noindent Case $\to$Ij (i.e., $\to$I$\starJ$ where $\starJ$ = $\existsJ$):

\begin{prooftree}
\AxiomC{$\Gamma, x : \varphi \vdash e(x_1 , \ldots , x_n , x) \existsJ \psi $}
\RightLabel{\footnotesize $\to$Ij}
\UnaryInfC{$\Gamma \vdash \lambda_{\textrm{j}} x . e(x_1 , \ldots , x_n , x) \existsJ \varphi \to \psi $}
\end{prooftree}

\begin{enumerate}

    \item $\Gamma \vdash t_1 : \gamma_1  \in \mathbf{R} , \ldots, t_n : \gamma_n  \in \mathbf{R}$ (by assumption)    
    
    \item $\Gamma \vdash e(t_1 , \ldots , t_n , x) \existsJ \psi  \in \mathbf{R}$  for all $x$ such that $\Gamma \vdash x : \varphi \in \mathbf{R}$. (by IH)
    
    \item  $\Gamma \vdash a : \varphi \in \mathbf{R}$ for some arbitrary $a$ (by assumption)
    
    \item $\Gamma \vdash a \Longrightarrow v$ for some $v$ (by def. of $\mathbf{R}$ from 3.) 
    
    \item $\Gamma \vdash  v : \varphi \in \mathbf{R} $ (by preservation lemma on 4.)
    
    \item $\Gamma \vdash e(t_1 , \ldots , t_n , v) \existsJ \psi  \in \mathbf{R}$  (by IH from 2. and 5.)
    
    \item $\Gamma \vdash \textsf{ap}_{\textrm{j}} (\lambda_{\textrm{j}} x . e(t_1 , \ldots , t_n , x) , v) \Longrightarrow e(t_1 , \ldots , t_n , v) $ (by def. of $\to$Cj)
    
    \item $\Gamma \vdash \textsf{ap}_{\textrm{j}} (\lambda_{\textrm{j}} x . e(t_1 , \ldots , t_n , x) , v) \existsJ \psi \in \mathbf{R} $ (by backward preservation lemma from 7.)
    
    \item $\Gamma \vdash \lambda_{\textrm{j}} x . e(t_1 , \ldots , t_n , x)  \existsJ \varphi \to \psi \in \mathbf{R} $ (by def. of $\varphi \to \psi \in \mathbf{R}$  from 8. since $a$ was chosen arbitrarily)

\end{enumerate}

\noindent Case $\to$Ej (i.e., $\to$E$\starJ$ where $\starJ$ = $\existsJ$): 

\begin{prooftree}
\AxiomC{$\Gamma \vdash e(x_1 , \ldots , x_n) \existsJ \varphi \to \psi $}
\AxiomC{$\Gamma \vdash t(x_1 , \ldots , x_n) : \varphi  $}
\RightLabel{\footnotesize $\to$Ej}
\BinaryInfC{$\Gamma \vdash  \textsf{ap}_{\textrm{j}}(e(x_1 , \ldots , x_n), t(x_1 , \ldots , x_n)) \existsJ \psi $}
\end{prooftree}

\begin{enumerate}

    \item $\Gamma \vdash e(t_1 , \ldots , t_n) \existsJ \varphi \to \psi \in \mathbf{R}$ (by IH)

    \item $\Gamma \vdash t(t_1 , \ldots , t_n) : \varphi \in \mathbf{R}$ (by IH)

    \item $\Gamma \vdash  \textsf{ap}_{\textrm{j}}(e(t_1 , \ldots , t_n), t(t_1 , \ldots , t_n)) \existsJ \psi \in \mathbf{R}$ (by def. of $e \starJ \varphi \to \psi \in \mathbf{R}$ on 1., specifically we use the additional condition instantiated with 2.)

\end{enumerate}

\begin{theorem} (Strong normalization)
If $\Gamma \vdash e \starJ \varphi$, then $\Gamma \vdash e \Longrightarrow v $. 
\end{theorem}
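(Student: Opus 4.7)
The plan is to combine the two preceding lemmas via Tait's standard closing argument. Given $\Gamma \vdash e \starJ \varphi$ with $\Gamma = x_1 : \gamma_1, \ldots, x_n : \gamma_n$, I would like to apply the Fundamental Lemma (Lemma 8) using the identity substitution, i.e., taking $t_i := x_i$. Since substituting each variable for itself leaves $e$ syntactically unchanged, this would yield $\Gamma \vdash e \starJ \varphi \in \mathbf{R}$, at which point Lemma 7 (Existence of value) delivers a canonical proof expression $v$ with $\Gamma \vdash e \Longrightarrow v$, which is exactly what is to be shown.

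The real work therefore lies in an auxiliary claim: for each hypothesis $x_i : \gamma_i$ in $\Gamma$, we have $\Gamma \vdash x_i : \gamma_i \in \mathbf{R}$. I would prove this by structural induction on the proposition $\gamma_i$. For an atomic proposition $p$, the claim follows directly from \textsc{hyp} together with the base clause of the $\mathbf{R}$-judgment. For $\gamma_i = \varphi \to \psi$, I would use the $\to$CE rule to exhibit the canonical form $\lambda y.\textsf{ap}(x_i, y)$, then show that for every $a$ with $\Gamma \vdash a : \varphi \in \mathbf{R}$, the application $\textsf{ap}(x_i, a)$ is reducible at $\psi$; this reduces (via the inductive hypothesis on $\psi$ and Lemma 7 applied to $a$) to observing that a head-variable application inherits reducibility from its argument. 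The case $\gamma_i = \EXI\varphi$ is analogous using $\EXI$CE.

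The main obstacle I anticipate is reconciling this variable-reducibility lemma with the precise shape of $\mathbf{R}$ as defined in the paper, since the rule-form definition uses the reduction judgment $\Longrightarrow$ rather than the expansion judgment $\Longrightarrow_E$, whereas a bare variable only reaches a canonical form by expansion, not reduction. I would handle this either by reading $\Longrightarrow$ in the base clauses of $\mathbf{R}$ as the combined reduction-and-canonical-expansion relation (which is consistent with how the CE rules are set up in Section \ref{sec:formal_comp} and how they were shown locally complete in Section \ref{sec:reductions}), or, equivalently, by augmenting $\mathbf{R}$ with an explicit neutral-term clause so that hypotheses satisfy $\mathbf{R}$ vacuously at every type. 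Either way, once the variable case is secured, the preservation and backward-preservation lemmas invoked inside the Fundamental Lemma already take care of propagating $\mathbf{R}$ across reduction, and the theorem follows in one line from Lemmas 7 and 8.
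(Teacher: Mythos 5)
Your core decomposition is exactly the paper's: strong normalization is obtained by combining the Fundamental Lemma (Lemma 8) with the Existence-of-value Lemma (Lemma 7). Where you diverge is in how much of the stated theorem you try to establish. The paper's own proof is a two-line remark that the result ``follows from Lemmas 7 and 8'' and then explicitly retreats to the empty context, concluding only that $e$ is normalizable whenever $\vdash e \starJ \varphi$ for \emph{closed} $e$; it never confronts the open case. You do confront it, via the identity substitution $t_i := x_i$, and in doing so you put your finger on the genuine obstacle: the auxiliary claim $\Gamma \vdash x_i : \gamma_i \in \mathbf{R}$ is not derivable from the definition of $\mathbf{R}$ as literally given, because at types $\varphi \to \psi$ and $\EXI\varphi$ membership in $\mathbf{R}$ demands a \emph{reduction} $\Gamma \vdash e \Longrightarrow \lambda x.t$ or $\Gamma \vdash e \Longrightarrow \langle a \rangle$, and a bare variable reaches canonical form only by the expansion judgment $\Longrightarrow_E$. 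Your two proposed repairs --- folding canonical expansion into the relation used in the base clauses of $\mathbf{R}$, or adding a neutral-term clause so that hypotheses are reducible at every type --- are the standard moves (the latter is essentially Girard's candidates-of-reducibility condition), and either would let the identity-substitution argument go through; but be aware that neither is present in the paper, so as written your variable-reducibility lemma is a strengthening of the paper's development rather than a consequence of it. Two smaller caveats: the paper never states a clause of $\mathbf{R}$ for atomic propositions $p$, so your atomic base case is appealing to a definition that has to be supplied; and the paper's Fundamental Lemma already quietly assumes facts like $\Gamma \vdash x : \varphi \in \mathbf{R}$ in its own case analysis (e.g., step 2 of the $\to$Ij case), so your lemma is arguably needed to make the paper's Lemma 8 itself airtight. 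In short: same skeleton, but you attempt the open-context statement the theorem actually asserts, correctly diagnose why the definition of $\mathbf{R}$ as given does not support it, and propose workable fixes; the paper instead silently weakens the claim to closed expressions.
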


\begin{proof}
Follows from Lemmas 7 and 8. By choosing the context to be empty, i.e., by taking the expression $e$ to be closed (with no free variables), we obtain that $e$ is normalizable whenever $\vdash e \starJ \varphi$.    
\end{proof}

\begin{lemma} (Backward preservation)
If $\Gamma \vdash e' \starJ \varphi \in \mathbf{R}$ and $\Gamma \vdash e \Longrightarrow e'$ then $\Gamma \vdash e \starJ \varphi \in \mathbf{R}$.   
\end{lemma}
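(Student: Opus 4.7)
The natural approach is induction on the structure of the type $\varphi$, mirroring the inductive definition of the reducibility judgment $\Gamma \vdash e \starJ \varphi \in \mathbf{R}$. The key tools are subject reduction (Lemma 6), which transfers the typing from $e'$ back to $e$, and concatenation of reductions (reading $\Longrightarrow$ as its reflexive-transitive closure, as the defining clauses of $\mathbf{R}$ already require when they speak of reduction to a canonical form $\langle a\rangle$ or $\lambda x.t$).

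For the base case $\varphi = p$, unfolding $\mathbf{R}$ at an atomic type just says that $e'$ is well-typed and reduces to a value; subject reduction then gives $\Gamma \vdash e \starJ p$, and prepending $\Gamma \vdash e \Longrightarrow e'$ to the reduction sequence witnessing $e' \Longrightarrow v$ yields $\Gamma \vdash e \Longrightarrow v$, establishing $\Gamma \vdash e \starJ p \in \mathbf{R}$. For $\varphi = \EXI\psi$, unpacking the defining clause of $\mathbf{R}$ for $e'$ gives some $a$ with $\Gamma \vdash e' \Longrightarrow \langle a\rangle$ and $\Gamma \vdash a : \psi \in \mathbf{R}$. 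Chaining gives $\Gamma \vdash e \Longrightarrow \langle a\rangle$, and subject reduction gives $\Gamma \vdash e \starJ \EXI\psi$; the witness $a$ is unchanged, so we can re-apply the defining rule of $\mathbf{R}$ at $\EXI\psi$ to conclude $\Gamma \vdash e \starJ \EXI\psi \in \mathbf{R}$.

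For $\varphi = \psi \to \chi$ the structure is analogous. From $\Gamma \vdash e' \starJ \psi \to \chi \in \mathbf{R}$ we obtain a term $t$ with $\Gamma \vdash e' \Longrightarrow \lambda x.t$, together with the key implication that $\Gamma \vdash a : \psi \in \mathbf{R}$ entails $\Gamma \vdash \textsf{ap}(t,a) \starJ \chi \in \mathbf{R}$ for every $a$. Subject reduction yields $\Gamma \vdash e \starJ \psi \to \chi$, and concatenation yields $\Gamma \vdash e \Longrightarrow \lambda x.t$. Since the condition on $t$ makes no reference to $e'$ at all, it transfers verbatim, and the defining clause of $\mathbf{R}$ at arrow type is satisfied for $e$. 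Note that no inductive hypothesis at smaller types is actually needed at this step: the entire argument is by case analysis on the shape of $\varphi$, with the reducibility of the subterm witnesses ($a$ at $\psi$, $\textsf{ap}(t,a)$ at $\chi$) simply reused rather than re-proved.

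The only conceptually subtle point is the reading of $\Longrightarrow$. The one-step rules displayed in Section~\ref{sec:formal_comp} give single reduction steps, yet the clauses defining $\mathbf{R}$ require reduction all the way to a canonical form; so $\Longrightarrow$ is implicitly the reflexive-transitive closure, and transitivity of this closure is what legitimises the chaining steps above. Making this precise (either by explicitly introducing $\Longrightarrow^\ast$ or by proving a one-line concatenation lemma) is the only potential obstacle, and it is routine. Once this is settled, the proof is essentially a bookkeeping exercise with subject reduction.
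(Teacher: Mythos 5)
Your overall strategy differs from the paper's: the paper proves backward preservation by induction on the derivation of the reduction judgment $\Gamma \vdash e \Longrightarrow e'$, arguing in each case that every possible reduct of $e$ already lies in $\mathbf{R}$, whereas you do a case analysis on the structure of $\varphi$, unfolding the defining clauses of $\mathbf{R}$ and chaining reduction sequences. Your route is arguably cleaner and, as you correctly observe, needs no induction hypothesis at smaller types --- but as written it has one genuine hole.

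The hole is your use of subject reduction to obtain the typing conjunct $\Gamma \vdash e \starJ \varphi$ required by the definition of $\mathbf{R}$. Lemma 6 states: if $\Gamma \vdash e \starJ \varphi$ and $\Gamma \vdash e \Longrightarrow e'$ then $\Gamma \vdash e' \starJ \varphi$. It transfers typing \emph{forward} along a reduction, from the redex to the reduct. You need the converse direction --- from $\Gamma \vdash e' \starJ \varphi$ and $\Gamma \vdash e \Longrightarrow e'$ to $\Gamma \vdash e \starJ \varphi$ --- which is subject \emph{expansion}, a property the paper never states and which does not follow from Lemma 6. The gap is patchable in this system: every computation rule ($\to$C, $\EXI$C, $\to$Cj, $\EXI$Cj, \textsf{ap}C, \textsf{ap}Cj) carries typing premises for the components of the redex, so $\Gamma \vdash e \starJ \varphi$ can be recovered by inversion on the derivation of $\Gamma \vdash e \Longrightarrow e'$; alternatively one can add $\Gamma \vdash e \starJ \varphi$ as an explicit hypothesis, which is how the lemma is in fact used inside the fundamental lemma. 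But the step ``subject reduction then gives $\Gamma \vdash e \starJ \varphi$'' is the wrong direction and must be repaired. A smaller point: the paper defines $\mathbf{R}$ only at $\EXI\varphi$ and $\varphi \to \psi$, so your base case at atomic $p$ rests on a clause you have supplied yourself; that is a reasonable repair of an omission in the paper, but it should be flagged as such. Your reading of $\Longrightarrow$ as a reflexive--transitive closure, and the resulting need for a concatenation lemma, is correct and matches what the defining clauses of $\mathbf{R}$ implicitly require.
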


\begin{proof}
By induction on the judgment $\Gamma \vdash e \Longrightarrow e'$.    
\end{proof}

\noindent For example:

\medskip

\noindent Case $\textsf{let}$:

\begin{prooftree}
\AxiomC{$\Gamma , x : \varphi \vdash e_2 \existsJ \gamma$}
\AxiomC{$\Gamma \vdash t : \EXI\varphi$}
\RightLabel{\footnotesize $\EXI$C}
\BinaryInfC{$\Gamma \vdash \textsf{let } \langle x \rangle \textsf{ be } \langle e_1 \rangle \textsf{ in } e_2 \Longrightarrow \Sopen e_1 /x\Sclose  e_2 $}
\end{prooftree}

\medskip

\noindent Let $e = \textsf{let } \langle x \rangle \textsf{ be } \langle e_1 \rangle \textsf{ in } e_2$ and $e' = \Sopen e_1 / x \Sclose e_2$.

\medskip

\begin{enumerate}

    \item $\Gamma \vdash \textsf{let } \langle x \rangle \textsf{ be } \langle e_1 \rangle \textsf{ in } e_2 \Longrightarrow \Sopen e_1 / x \Sclose e_2$ (by assumption)
    
    \item $\Gamma \vdash \Sopen e_1 / x \Sclose e_2 \starJ \varphi \in \mathbf{R}$ (by assumption)
    
\end{enumerate}

\noindent Now, in general, to be able to claim that $\Gamma \vdash e \starJ \varphi \in \mathbf{R}$, i.e., that $e$ is reducible, we have to show that all possible reductions of $e$ result in terms that are also reducible. In other words, we need to show that to whatever $e''$ the $e$ reduces, it has the $\mathbf{R}$ property. So, let us assume: 

\begin{enumerate}
    \item[3.] $\textsf{let } \langle x \rangle \textsf{ be } \langle e_1 \rangle \textsf{ in } e_2 \Longrightarrow e''$ (by assumption)
\end{enumerate}

\noindent Since $\textsf{let}$ has only one reduction rule, i.e., $\textsf{let } \langle x \rangle \textsf{ be } \langle e_1 \rangle \textsf{ in } e_2 \Longrightarrow \Sopen e_1 / x \Sclose e_2$, it means that $e'' = e' = \Sopen e_1 / x \Sclose e_2$. And, by assumption 2, we know that $\Sopen e_1 / x \Sclose e_2 \starJ \varphi \in \mathbf{R}$. Thus, we can conclude that:

\begin{enumerate}
    \item[4.] $\Gamma \vdash \textsf{let } \langle x \rangle \textsf{ be } \langle e_1 \rangle \textsf{ in } e_2 \starJ \varphi \in \mathbf{R}$
\end{enumerate}

\begin{lemma} (Preservation)
If $\Gamma \vdash e \starJ \varphi \in \mathbf{R}$ and $\Gamma \vdash e \Longrightarrow e'$ then $\Gamma \vdash e' \starJ \varphi \in \mathbf{R}$.    
\end{lemma}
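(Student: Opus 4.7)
The plan is to induct on the structure of the type $\varphi$, mirroring the inductive definition of the reducibility judgment $\Gamma \vdash e \starJ \varphi \in \mathbf{R}$. This contrasts with the Backward preservation lemma, which inducts on the derivation of the reduction itself; here the forward direction forces us to track through the type-indexed definition.

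For the case $\varphi = \EXI\psi$, I would unfold the assumption $\Gamma \vdash e \starJ \EXI\psi \in \mathbf{R}$ to obtain some $\langle a \rangle$ with $\Gamma \vdash e \Longrightarrow \langle a \rangle$ and $\Gamma \vdash a : \psi \in \mathbf{R}$. Using Subject reduction on $\Gamma \vdash e \Longrightarrow e'$, the typing $\Gamma \vdash e' \starJ \EXI\psi$ is preserved. The remaining task is to exhibit a canonical form for $e'$: by case analysis on the reduction rule that justifies $\Gamma \vdash e \Longrightarrow e'$, I would show that $e'$ still reduces to some $\langle a' \rangle$ with $\Gamma \vdash a' : \psi \in \mathbf{R}$, invoking the induction hypothesis at the smaller type $\psi$ whenever the interior proof expression is touched by the step.

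For $\varphi = \varphi_1 \to \varphi_2$, the argument is analogous: the assumption provides $\lambda x.t$ with $\Gamma \vdash e \Longrightarrow \lambda x.t$ satisfying the application condition that $\textsf{ap}(t,a) \starJ \varphi_2 \in \mathbf{R}$ for every reducible $a$. I would show that $e'$ reduces to some $\lambda x.t^{*}$ enjoying the corresponding condition, appealing to the induction hypothesis at $\varphi_2$ on terms of the form $\textsf{ap}(t^{*}, a)$, which are reached from $\textsf{ap}(t, a)$ by a single congruence step via $\textsf{ap}\textsf{C}$ or $\textsf{ap}\textsf{Cj}$.

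The main obstacle is coordinating the one-step reduction $\Gamma \vdash e \Longrightarrow e'$ with the canonical-form-producing reduction $\Gamma \vdash e \Longrightarrow \langle a \rangle$ (or $\lambda x.t$) baked into the definition of $\mathbf{R}$. The cleanest route is a confluence-style argument: for each reduction rule of Section \ref{sec:formal_comp}, check that a step out of $e$ commutes with the reduction to canonical form, so that $e'$ still reaches a head-value of the expected shape. Subject reduction keeps the types aligned throughout, and wherever subterms of $e'$ must be certified as reducible, the Backward preservation lemma together with the induction hypothesis at the smaller type handles the bookkeeping.
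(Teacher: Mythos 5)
Your proposal follows essentially the same route as the paper: induction on the structure of the type $\varphi$, unfolding the definition of $\mathbf{R}$ at $\EXI\psi$ and $\varphi_1 \to \varphi_2$, using the congruence rule $\textsf{ap}$C to push the given step under an application, and invoking the induction hypothesis at the smaller type. You are in fact more explicit than the paper about the need to reconcile the assumed step $\Gamma \vdash e \Longrightarrow e'$ with the canonical-form-producing reduction built into the definition of $\mathbf{R}$ (a confluence-style commutation the paper's case analysis leaves implicit), but this is an elaboration of the same argument rather than a different one.
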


\begin{proof}
By induction on the structure of the type $\varphi$.   
\end{proof}

\noindent For example:

\medskip

\noindent Case $\varphi = \EXI\varphi $:

\begin{enumerate}
    \item $\Gamma \vdash e : \EXI\varphi \in \mathbf{R}$ (by assumption)

    \item $\Gamma \vdash e \Longrightarrow \langle a \rangle $ for some $\Gamma \vdash a : \varphi \in \mathbf{R}$ (by def. of $\EXI\varphi \in \mathbf{R}$ from 1.)

    \item $\Gamma \vdash a \Longrightarrow v $ (by def. of $\mathbf{R}$ from 2.)

    \item $\Gamma \vdash v : \varphi \in \mathbf{R}$ (by IH from 2. and 3.)

    \item $\Gamma \vdash e' : \EXI\varphi \in \mathbf{R}$ (by def. of $\EXI\varphi \in \mathbf{R}$ from 4.) 
    
\end{enumerate}

\noindent Case $\varphi = \varphi \to \psi$:

\begin{enumerate}

    \item $\Gamma \vdash t : \varphi \to \gamma \in \mathbf{R} $ (by assumption)

    \item $\Gamma \vdash t \Longrightarrow t'$ (by assumption)

    \item $\Gamma \vdash a : \varphi \in \mathbf{R}$ for some arbitrary $a$ (by assumption)

    \item $\Gamma \vdash \textsf{ap}(t, a) : \psi \in \mathbf{R}$  (by def. of $\varphi \to \psi \in \mathbf{R}$ from 1. and 3.) % and 2.

    \item $\Gamma \vdash \textsf{ap}( t , a) \Longrightarrow  \textsf{ap}( t' , a)  $ (by \textsf{ap}C from 2. and 4.)

    \item $\Gamma \vdash \textsf{ap}( t' , a)  : \psi \in \mathbf{R}$ (by IH on 4. and 5.)

    \item $\Gamma \vdash t' : \varphi \to \gamma \in \mathbf{R} $ (by def. of $\varphi \to \psi \in \mathbf{R}$ from 6., since it holds for arbitrary $a$)
   
\end{enumerate}

\section{Truncation}
\label{sec:truncation}

% reference Split paper? via a link between mere propositions and to Harrop propositions?
Truncation, or more specifically, propositional truncation, is a fundamental concept in many type theories, most notably in homotopy type theory (HoTT), for obtaining controlled propositional proof irrelevance in otherwise proof-relevant frameworks. It is commonly treated as a modality $\TRU$ (see, e.g., \cite{hottbook2013}, \cite{awodey2004}, \cite{kraus2015}, \cite{corfield2020}, \cite{rijke2022}) that allows us to turn an arbitrary type $\varphi$ into a type $\TRU\varphi$ of \textit{mere propositions}, i.e., propositions for which all proofs are considered equal. In other words, truncation hides all the specific information about proofs of a particular proposition except for their existence. Thus, $\TRU\varphi$ represents the proposition that $\varphi$ is just true, i.e., that there exists a proof for it without revealing any more information about it. Or, to use type-theoretic terminology, it represents a proposition that $\varphi$ is inhabited while keeping the inhabitants anonymous.

\subsection{An example} 

How do we use truncation in practice? Let us consider an example originally presented by de Bruijn to highlight the distinctions between proof relevance and proof irrelevance:

\begin{quote}
\small 
the logarithm of a real number is defined for positive numbers only. So actually [from the view of constructive proof-relevant reasoning] the $\textsf{log}$ is a function of \textit{two} variables; and if we use the expression $\textsf{log}(p, q)$, we have to check that $p$ is a real number and $q$ is a proof for the proposition ``$p > 0$''. 
(\cite{debruijn1974}, reprinted in \cite{debruijn1974reprint}, p. 286)
\end{quote}

\noindent So, the type of this common logarithm function $\textsf{log}$ would be:

$$\textsf{log} : (\forall x : \mathbb{R}) \, \textsf{isPositive}(x) \to \mathbb{R}$$

\noindent However, when using the $\textsf{log}$ function, we typically are not concerned with the specific proofs demonstrating the positivity of its arguments, i.e., that $\textsf{isPositive}(x)$ holds of them, we just care about the values of the function. Or to put it differently, we would not want $\textsf{log}$ returning different values for the same argument $p$ just because its proofs of positivity might have been different. The straightforward way to express this with truncation is to simply truncate the corresponding type $\textsf{isPositive}(x)$, and thus obtain a new function $\textsf{log}_\TRU$ with slightly tweaked type:\footnote{This option was, of course, not yet available to de Bruijn in the 70s, as the idea of truncation first appears in the 80s as far as we know (see, e.g., \cite{constable1986}).}

$$\textsf{log}_\TRU : (\forall x : \mathbb{R}) \, \TRU(\textsf{isPositive}(x)) \to \mathbb{R}$$

\noindent Now, what the truncation gives us is that while $\textsf{log}(p, q_1)$ might not be necessarily equal to $\textsf{log}(p, q_2)$ since the proofs $q_1$ and $q_2$ of $\textsf{isPositive}(x)$ might be different, $\textsf{log}_\TRU(p, q_1)$ will be equal $\textsf{log}_\TRU(p, q_1)$ no matter what $q_1$ and $q_2$ we choose as the function $\textsf{log}_\TRU$ does not care about their inner structure only about their existence. In other words, the function $\textsf{log}_\TRU$ would return the same values for the same argument $p$ no matter what its proofs of positivity might look like.

\medskip

It is important to emphasize, however, that truncation is not just a technical tool for simplification as the example above might suggest. It can greatly extend the capabilities of the underlying type-theoretic systems (e.g., it allows us to define subsets and properly formulate the axiom of choice, see, e.g., \cite{hottbook2013}) but, most importantly, it sheds new light on the notions of truth, existence, and proof in a constructive setting. Specifically, it can help us conceptualize the distinction between the existence of proof and the structure of the proof, and thus ultimately its constructive content, a distinction that is often blurred in classical, i.e. non-constructive setting, if present at all. Most importantly, by acknowledging the existence of a proof without committing to any specific construction of it, truncation provides a bridge to classical reasoning. Specifically, with the help of truncation, we can recapture classical reasoning within an otherwise constructive setting. For example, the law of excluded middle can be applied to truncated propositions since they represent just simple truths without additional information, constructive content, or structure. In other words, analogously to a classical proposition, which is either true or false, for a truncated proposition there either exists a proof for it (= it is inhabited) or it does not (= it is empty), there is nothing in between.\footnote{The general need for truncation, or proof irrelevance in general, also indicates that identifying propositions and types, according to the famous dicto ``propositions as types'', might be simply too heavy-handed from some perspectives and that a more careful and discerning approach should be chosen.}

Finally, note that in the above explanation of truncation $\TRU$, the notion of proof existence was repeatedly used. However, this was always done at the metalevel, never directly at the object level. This approach is not unique to our explanation. For example, for Kraus ``$\TRU\varphi$ [in HoTT] intuitively represent[s] the statement that $\varphi$ is inhabited'' (\cite{kraus2015} p. 36) where ``$\varphi$ is inhabited'' is just a different way of saying that ``there exists a proof/construction of $\varphi$''. But HoTT contains no object-level judgment corresponding to the statement ``$\varphi$ is inhabited'', it always appears only as a metalevel informal judgment.\footnote{Later we will see that this causes some inconsistencies with their intended interpretation of the truncation introduction rule (see also Section \ref{sec:related}).}  We only get the propositional approximation of the form $\TRU\varphi$.

\subsection{Truncation and existence} 

Now, let us explore the link between the truncation modality $\TRU$ and the existence modality $\EXI$. In HoTT, propositional truncation is defined via the following three clauses (\cite{hottbook2013}, Sec. 3.7, see also \cite{kraus2015}, Def. 232): 

\begin{enumerate}
\item[(a)] for any $t : \varphi$ we have $ \vert \, t \, \vert : \TRU\varphi$
\item[(b)] if $\psi$ is a mere proposition and we have $f : \varphi \to \psi$, then we have $g : \TRU\varphi \to \psi$
\item[(c)] $\TRU\varphi$ is a mere proposition
\end{enumerate}

\noindent First, note that clause (a) can be regarded as truncation introduction rule and (b) as truncation elimination rule (with added context $\Gamma$):

\begin{center}
\AxiomC{$\Gamma \vdash t : \varphi$}
\UnaryInfC{$\Gamma \vdash \vert \, t \, \vert : \TRU\varphi$}
\DisplayProof
\qquad
\AxiomC{$\Gamma \vdash f : \varphi \to \psi$}
\UnaryInfC{$\Gamma \vdash g : \TRU\varphi \to \psi$}
\DisplayProof
\end{center}

\noindent Now, we will show that our system's direct notational correlates of these rules, that is:\footnote{The symbol ``$\TRU$'' is replaced with ``$\EXI$'' and in (a) ``$\vert \, t \, \vert$'' is replaced with ``$\langle t \rangle$'' and in (b) ``$:$'' is replaced with ``$\existsJ$'' (due to the (b)'s prerequisite that $\psi$ is a mere proposition, we explain this change later when discussing (c)).} 

\begin{center}
\AxiomC{$\Gamma \vdash t : \varphi$}
\UnaryInfC{$\Gamma \vdash \langle t \rangle : \EXI\varphi$}
\DisplayProof
\qquad
\AxiomC{$\Gamma \vdash f \existsJ \varphi \to \psi$}
\UnaryInfC{$\Gamma \vdash g \existsJ \EXI\varphi \to \psi$}
\DisplayProof
\end{center}

\noindent are actually derivable rules. In other words, the I/E rules for truncation modality can be derived from the I/E rules for existence modality, thus showing that truncation $\TRU$ can be reduced to existence $\EXI$. Below we will see that even a judgment corresponding to clause (c) can be established in our system.

Let us start with (a). In our system, we can derive the corresponding rule simply as follows:

\begin{center}
\AxiomC{$\Gamma \vdash t : \varphi $}
\RightLabel{\footnotesize \textsc{just}}
\UnaryInfC{$\Gamma \vdash t \existsJ \varphi $}
\RightLabel{\footnotesize $\EXI$I}
\UnaryInfC{$\Gamma \vdash \langle t \rangle : \EXI\varphi  $}
\DisplayProof
\end{center}

\noindent The rule corresponding to (b) can be derived in the following way:\footnote{Note that the requirement from (b) that $\psi$ has to be a mere proposition corresponds to the fact that $\psi$ has to be a just true proposition as evidenced by the judgment $\ldots \vdash \textsf{ap}(f,x) \existsJ \psi$.}

\begin{prooftree}
\small
    \AxiomC{}
    \RightLabel{\footnotesize \textsc{hyp}}
    \UnaryInfC{$\Gamma , y : \EXI\varphi \vdash y : \EXI\varphi$}

    \AxiomC{$\Gamma \vdash f \existsJ \varphi \to \psi$}
    \RightLabel{\footnotesize \textsc{w}}
    \UnaryInfC{$\Gamma , x : \varphi \vdash f \existsJ \varphi \to \psi$}
    
    \AxiomC{}
    \RightLabel{\footnotesize \textsc{hyp}}
    \UnaryInfC{$\Gamma , x : \varphi  \vdash x : \varphi $}

    \RightLabel{\footnotesize $\to$Ej}
    \BinaryInfC{$\Gamma , x : \varphi  \vdash \textsf{ap}_{\textrm{j}}(f,x) \existsJ \psi $}

    \RightLabel{\footnotesize \textsc{w}}
    \UnaryInfC{$\Gamma , x : \varphi , y : \EXI\varphi \vdash \textsf{ap}_{\textrm{j}}(f,x) \existsJ \psi $}

    \RightLabel{\footnotesize \textsc{ex}}
    \UnaryInfC{$\Gamma , y : \EXI\varphi, x : \varphi  \vdash \textsf{ap}_{\textrm{j}}(f,x) \existsJ \psi $}
    \RightLabel{\footnotesize $\EXI$E}
    \BinaryInfC{$\Gamma , y : \EXI\varphi \vdash \textsf{let } \langle x \rangle \textsf{ be } y \textsf{ in } \textsf{ap}_{\textrm{j}}(f,x) \existsJ \psi $}
    \RightLabel{\footnotesize $\to$Ij}
    \UnaryInfC{$\Gamma \vdash \lambda_{\textrm{j}} y . \textsf{let } \langle x \rangle \textsf{ be } y \textsf{ in } \textsf{ap}_{\textrm{j}}(f,x) \existsJ \EXI\varphi \to \psi $}
\end{prooftree}

\noindent where $g = \lambda_{\textrm{j}} y . \textsf{let } \langle x \rangle \textsf{ be } y \textsf{ in } \textsf{ap}_{\textrm{j}}(f,x)$.

\medskip \noindent \textit{Note.}
Consider the provided informal explanation for clause (a) in \cite{hottbook2013}:

\begin{quote}
\small 
``The first constructor means that if $\varphi$ is inhabited, so is $\TRU\varphi$.'' (\cite{hottbook2013}, p. 117)
\end{quote}

\noindent But note that this is not a precise interpretation: the meaning explanation of the constructor/introduction rule for truncation does not fit the rule itself (observe, however, that it fits exactly our introduction rule for existence $\EXI$I). More specifically, the actual premise of the constructor does not say merely that ``$\varphi$ is inhabited'', i.e., that there exists a proof of it, but there is a stronger claim saying that we actually have a concrete proof for $\varphi$, i.e., ``$t : \varphi$''. In other words, ``$\varphi$ is inhabited'' (= there exists a proof for it) is a different judgment from ``$t : \varphi$'' (= we have a proof $t$ for it).\footnote{Note that this is exactly the ``common mistake which intuitionists are particularly prone to'' observed by \cite{martinlof1993} (p. 141), namely, saying that $\varphi \textit{ exists}$ means that an object of type $\varphi$ has been found. In other words, to borrow Martin-L\"{o}f's terminology, the mistake lies in conflating \textit{bare} existence and \textit{actual} existence.}

Of course, it is easy to diagnose why this imprecise explanation occurs: it is because HoTT has no object-level equivalent of the metalevel judgment ``$\varphi$ is inhabited''. In our system, however, we have an appropriate judgment to capture this: $\varphi \textit{ just true}$ or the typed variant $t \existsJ \varphi$.\footnote{Of course, purely syntactically, the term $t$ is still retrievable from $\langle t \rangle$, however, it is no longer accessible from within the system, i.e., we cannot inspect it by, e.g., running case distinction on it, etc. For all computational/constructive intents and purposes, it is a (non-empty) black box.}

\medskip

Now, let us finally take a closer look at the last clause:

\begin{itemize}
    \item[(c)] $\TRU\varphi$ is a mere proposition.
\end{itemize}

\noindent where a mere proposition is a proposition for which all proofs are equal. As hinted earlier, the requirement that $\TRU\varphi$ is a mere proposition corresponds in our system to the requirement that $\TRU\varphi$  is a just true proposition, i.e., $\Gamma \vdash e \existsJ \TRU\varphi $. The reason for this is that from the perspective of mere propositions, our judgment $e \existsJ \varphi$ (as well as $\varphi \textit{ just true}$) plays double duty. Specifically, it captures two facts about $\varphi$ at the same time: that $\varphi$ is true and that $\varphi$ is also a ``mere proposition'' since it is not only true but specifically \textit{just true} and all \textit{just true} propositions can be shown to be mere propositions. Informally, since $\varphi$ is a just true proposition, it means there exists some proof for it (according to initial definition (2)), let us call it $e$. Now, let us assume there exists another proof for $\varphi$ called $e'$. So, we only know that proofs $e$ and $e'$ of $\varphi$ exist but we have no means of distinguishing between them since we do not know their structure. But having no means of distinguishing between them is just another way of saying that they are effectively equal, which is what we wanted to show.\footnote{As a concrete example of this double duty, in our system we can show that $\Gamma \vdash \varphi \leftrightarrow \EXI\varphi \textit{ just true}$, i.e., that both $\Gamma \vdash \varphi \to \EXI\varphi \textit{ just true}$ and $\Gamma \vdash \EXI\varphi \to \varphi \textit{ just true}$. This essentially corresponds to the HoTT Lemma 3.9.1 (\cite{hottbook2013}, p. 120) stating that if $\varphi$ is a mere proposition, then $\varphi$ and $\TRU\varphi$ are (homotopy) equivalent.} Thus, saying that $\varphi$ is a just true proposition, i.e., that there exists a proof of $\varphi$, formally, $\Gamma \vdash e \existsJ \varphi$, entails that $\varphi$ is a mere proposition. 

It is worth noting that the axiomatic stipulation (c) often comes in a rule form explicitly utilizing (propositional) identity:\footnote{Judgmental identity can be used as well, then leading to judgmental proof irrelevance, but as mentioned earlier, in this paper we focus only on the propositional proof irrelevance.}

\begin{prooftree}
    \AxiomC{$\Gamma \vdash t : \TRU\varphi$}
    \AxiomC{$\Gamma \vdash t' : \TRU\varphi$}
    \BinaryInfC{$\Gamma \vdash t = t' $}
\end{prooftree}

\noindent which states that any two terms $t$ and $ t'$ of type $\TRU\varphi$ are propositionally equal (in other words, this rule effectively stipulates that $\TRU\varphi$ is a proof-irrelevant type).

However, since we do not currently have judgments of the form $\Gamma \vdash e = e' $, i.e., we do not have propositional identity amongst our logical constant, there is no direct correlate in our present system for the rule variant of the stipulation.  For now, we just say that if we were to extend our system with propositional identity (i.e., introduce $e = e'$ as a new type of a proposition together with its corresponding introduction rules, elimination rules, etc.), our correlate of the rule above, i.e., 

\begin{prooftree}
    \AxiomC{$\Gamma \vdash e \existsJ \varphi$}
    \AxiomC{$\Gamma \vdash e' \existsJ \varphi$}
    \BinaryInfC{$\Gamma \vdash e = e' $}
\end{prooftree}

\noindent would not need to be added as an extra stipulation but could be actually derived just as the truncation I/E rules corresponding to clauses (a) and (b) since, as discussed above, any two proofs of \emph{just true} propositions are naturally indistinguishable. Further exploration of the notion of a mere proposition from the perspective of logic of judgemental existence would be interesting but it is beyond the scope of the present paper.

\section{Related work}
\label{sec:related}

\subsection{General remarks}

The investigation of judgmental existence is directly motivated by \cite{martinlof1993} who informally considers a new judgment of the form $\varphi \textit{ exists}$ as expressing the notion of ``bare existence'' (or perhaps more fittingly, ``mere existence''). This notion is explained via the following rule:

\begin{prooftree}
\AxiomC{$a : \varphi$}
\UnaryInfC{$\varphi \textit{ exists}$}
\end{prooftree}

\noindent or, if $\varphi$ is assumed to be a proposition:\footnote{\cite{nordstrom1990} also consider this rule as the ``Proposition as set'' rule: ``If we have an element in a set, then we will interpret that set as a true proposition.'' (p. 37)} 

\begin{prooftree}
\AxiomC{$a : \varphi$}
\UnaryInfC{$\varphi \textit{ true}$}
\end{prooftree}

\noindent That is, $\varphi \textit{ exists }$ has the same meaning as $\varphi \textit{ true}$ if $\varphi$ is a proposition (\cite{martinlof1993}, p. 140). 
To our knowledge, however, he does not ever adopt judgments of the form $\varphi \textit{ exists}$ into a particular type-theoretic system (only in the form of the ``abbreviation'' judgment $\varphi \textit{ true}$ in \cite{martin-lof1984}, which will be discussed below), nor does he consider its internalization into a propositional level via a modality.\footnote{We want to emphasize that our intention is not to reconstruct Martin-L\"{o}f's notion of bare existence or to reduce it into our notion of just truth (i.e., proof existence), we only take it as a source of inspiration.}

In the computational variant of our system, the rule above corresponds to the following rule:

\begin{center}
\AxiomC{$\Gamma \vdash a : \varphi $}
\RightLabel{\footnotesize \textsc{just}}
\UnaryInfC{$\Gamma \vdash a \existsJ \varphi$}
\DisplayProof
\end{center}

\noindent only instead of syntactically hiding the proof expression $a$ we render it ``invisible'' via the proof-irrelevant typing $\existsJ$. That is, instead of $\varphi \textit{ exists}$ we use the more explicit variant $a \existsJ \varphi$. In the logical variant of our system, it would correspond to the rule:

\begin{center}
\AxiomC{$\Gamma \vdash \varphi \textit{ true}$}
\RightLabel{\footnotesize \textsc{just}}
\UnaryInfC{$\Gamma \vdash \varphi \textit{ just true}$}
\DisplayProof
\end{center}

\noindent where $\textit{true}$ captures the proof-relevant notion of truth, and $\textit{just true}$  the proof-irrelevant one (recall Section \ref{sec:intro}).

As mentioned above, in Martin-L\"{o}f's constructive type theory (CTT) (\cite{martin-lof1984}) we can encounter judgments of the form $\varphi \textit{ true}$ which more or less correspond to our judgments of the form $\varphi \textit{ just true}$ from the pure logical variant of our system. However, the judgments of the form  $\varphi \textit{ true}$ from \cite{martin-lof1984} do not seem to be fully incorporated into the system itself. For example, they are missing from the initial specification of the basic judgments of the system and the substitution rules make no use of them. This, of course, makes sense if we regard $\varphi \textit{ true}$ just as an \textit{abbreviation} of the basic judgment $a : \varphi$ which, going by the textual evidence alone, seems to be what Martin-L\"{o}f intented (see \cite{martin-lof1984}, p. 33; see also \cite{nordstrom1990},  p. 25). However, this explanation of the judgment $\varphi \textit{ true}$ is somewhat inconsistent with the one provided in \cite{martinlof1993}. There $\varphi \textit{ true}$ is regarded as having the same meaning as the judgment $\varphi \textit{ exists}$ which is treated as a basic form of a judgment.\footnote{As pointed out by Ansten Klev in personal communication, $\varphi \textit{ exists}$ does not seem to be an abbreviation in any reasonable sense and has to be really understood as a basic form of judgment. We can contrast this, e.g., with the judgment of the form  $\varphi \textit{ false}$, where $\varphi$ is a proposition. Constructively, we can define it as $\varphi \textit{ false} =_{df} \neg \varphi \textit{ true}$ and unlike the meaning explanation of judgment $\varphi \textit{ exists}$ provided in \cite{martinlof1993} this has the form of an explicit definition.} Assuming $\varphi$ is a proposition, this would then mean that all judgments $\varphi \textit{ true}$, $\varphi \textit{ exists}$, and $a : \varphi$ have the same meaning, which cannot be correct.
% Ansten Klev

Note that our proof-relevant judgment $\varphi \textit{ true}$ is essentially just an abbreviation or coding of $a : \varphi$ with no information lost: from $\varphi \textit{ true}$ we should still be able, at least in principle, obtain back the information about the proof $a$ of $\varphi$. This is, however, not the case for proof-irrelevant judgment $\varphi \textit{ just true}$ where the information about $a$ is forgotten and cannot be obtained back. From this perspective, an interesting middle ground was explored by \cite{valentini1998} who proposed a system for ``soft forgetting'': he works with judgments $\varphi \textit{ true}$ which are interpreted similarly to our $\varphi \textit{ just true}$\footnote{``The meaning of $\varphi \textit{ true}$ is that there exists an element $t$ such that $t : \varphi$ but it does not matter which particular element $t$ is'' (\cite{valentini1998}, p. 275) or ``$\varphi$ is inhabited'' (\cite{valentini1998}, p. 278)} but the system he devises allows us to get not only from $a : \varphi$ to $\varphi \textit{ true}$ but also from $\varphi \textit{ true}$ back to $a : \varphi$. 

Our proof-irrelevant judgment $\varphi \textit{ just true}$ is also related to the similar judgment ``$\varphi$ is inhabited'' from Homotopy Type Theory (HoTT, \cite{hottbook2013}). However, we should be careful about conflating these two: in our system, the metalevel judgment ``there exists a proof of $\varphi$'' has a direct object level correlate in the form of the judgment $\varphi \textit{ just true}$, however, in HoTT the metalevel judgment ``$\varphi$ is inhabited'' has no such object level correlate in the form of a judgment.\footnote{From this perspective, it should be evident that our logic of judgmental existence could be also considered as a logic of judgmental inhabitance just by interpreting $\varphi \textit{ just true}$ as ``$\varphi$ is inhabited'' instead of ``there exists a proof of $\varphi$''.}

The notion of existential judgment as such is, of course, much older and can be found in the literature under different names, e.g., ``incomplete communication'' (\textit{unvollst\"{a}ndige Mitteilung}) or ``partial judgment'' (\textit{Partialurteil}) (\cite{hilbert1968}, \cite{kleene1945}) or ``judgment abstract'' (\textit{Urteilsabstrakt}) (\cite{weyl1921}, English translation in \cite{mancosu1998}). In contrast to Weil, however, our existential judgments are considered as proper judgments that describe existential state of affairs which are not regarded just as ``an empty invention of logicians'' (\cite{weyl1921eng}, p. 97).\footnote{Although, as \cite{vandalen1995} points out, Weyl's use of the term ``judgment'' is not in complete agreement with our present-day use of the term.} 

Concerning the relation of our system to other proof-irrelevant ones, \cite{pfenning2001b}, \cite{reed2002}, \cite{abel2012} work with a judgment of the form $a \div \varphi$ that roughly corresponds to our $a \existsJ \varphi$, however, their judgment $a \div \varphi$ itself is not a basic judgment of the system in contrast to our $a \existsJ \varphi$. For them, it is a notational abbreviation for a judgment $a : \varphi$ made under  ``irrelevant'' $\textit{just true}$ assumptions of the form $x \div \varphi$ which, on the other hand, we do not consider. Yet, similar to \cite{pfenning2001b}, we have a two-step truncation:  $\langle a \rangle: \EXI\varphi$ cannot be derived directly from $a : \varphi$, we need the ``pretruncation'' middle step $a \existsJ \varphi$. This is in contrast to \cite{awodey2004}, \cite{hottbook2013}, \cite{kraus2015}. Furthermore, unlike Hofmann's approach (\cite{hofmann1995b}, Proposition 5.3.12), the truncation/existence modality $\EXI$ is derived directly from the judgment level (similarly to \cite{pfenning2001b}) without the need to invoke propositional identity type.

\subsection{Relation to Pfenning and Davies's lax logic}

Our calculus of judgmental existence is directly inspired by \cite{pfenning2001} system of judgmental reconstruction of modal logic,\footnote{The fact that our logic for judgments of the general form ``$\varphi$ is just true'' is similar to the logic for judgments of the general form ``$\varphi$ is possible'' can be made clearer if we consider the following Martin-L\"{o}f's observation: ``The important point to observe here is the change from \textit{is} in $A$ is true to \textit{can} in $A$ can be verified, or $A$ is verifiable. Thus what is expressed in terms of being in the first formulation really has the modal character of possibility'' (\cite{martinlof1996}, p. 25).} which is in turn inspired by Martin-L\"{o}f's general methodology of distinguishing between judgments and propositions. Formally, our calculus shares the most resemblance with their possibility logic (although our system does not contain necessity modality and no corresponding validity assumptions of the form $\varphi \textit{ valid}$), and especially their lax logic: our I/E rules for existence modality $\EXI$ directly correspond to their I/E rules for lax modality $\medcircle$. Actually, we can show that our $\EXI$ modality obeys the lax modality $\medcircle$ axioms (\cite{fairtlough1997}):

\begin{itemize}
    \item[i)] $\varphi \to \medcircle \varphi$
    \item[ii)] $\medcircle  \medcircle  \varphi \to \medcircle \varphi$
    \item[iii)] $ (\varphi \to \psi) \to  \medcircle \varphi \to  \medcircle \psi $
\end{itemize}

\begin{proposition} \phantom{}

\begin{enumerate}
    \item[i)] $\Gamma \vdash \varphi \to \EXI \varphi \textit{ true}$
    \item[ii)] $\Gamma \vdash \EXI \EXI \varphi \to \EXI\varphi \textit{ true}$
    \item[iii)] $\Gamma \vdash  (\varphi \to \psi ) \to  \EXI\varphi  \to \EXI\psi \textit{ true}$
\end{enumerate}    

\end{proposition}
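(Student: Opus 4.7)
The plan is to prove each of the three axioms by a direct natural-deduction derivation in the logical calculus of Section \ref{sec:logical}, since (ii) has already been established as Proposition \ref{prop:6} and the other two require only short derivations of a similar flavor.

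For (i), I would start from the hypothesis $\varphi \textit{ true}$, apply the \textsc{just} rule to obtain $\varphi \textit{ just true}$, follow with $\EXI$I to get $\EXI\varphi \textit{ true}$, and discharge the assumption with $\to$I. This is essentially the defining chain that takes a proof-relevant truth to the internalization of its existence, and it uses exactly the interplay between \textsc{just} and $\EXI$I that the meaning explanations for these rules in Section \ref{sec:logical} already advertise.

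For (ii), I would simply invoke Proposition \ref{prop:6}, whose derivation gives the required judgment $\Gamma \vdash \EXI\EXI\varphi \to \EXI\varphi \textit{ true}$ directly.

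For (iii), the derivation closely follows the proof that $\EXI$ distributes over $\to$ (the proposition immediately preceding Proposition \ref{prop:6}). I would assume $\varphi \to \psi \textit{ true}$ and then $\EXI\varphi \textit{ true}$, aiming to derive $\EXI\psi \textit{ true}$ and close off both assumptions with two applications of $\to$I. The crux is to produce $\psi \textit{ just true}$ under the hypothesis $\EXI\varphi \textit{ true}$: this is achieved by $\EXI$E, where the minor premise is obtained by weakening $\varphi \to \psi \textit{ true}$ into the context $\Gamma, \varphi \textit{ true}$, applying $\to$E with the $\varphi \textit{ true}$ hypothesis to get $\psi \textit{ true}$, and then applying \textsc{just} to descend to $\psi \textit{ just true}$. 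One application of $\EXI$I on the resulting $\psi \textit{ just true}$ gives $\EXI\psi \textit{ true}$, after which the two $\to$I steps complete the derivation. The only bookkeeping subtlety, and the one place I would be careful, is making sure the exchange and weakening lemmas are applied so the contexts line up for the $\EXI$E step; this is a routine use of the structural lemmas from Section \ref{sec:logical} and presents no real obstacle.
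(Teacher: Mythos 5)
Your proposal is correct and matches the paper's own proof essentially step for step: (i) is the \textsc{hyp}--\textsc{just}--$\EXI$I--$\to$I chain, (ii) is a citation of Proposition~\ref{prop:6}, and (iii) is the same $\EXI$E-based derivation with the minor premise built from $\to$E and \textsc{just}, followed by $\EXI$I and two $\to$I steps, with weakening and exchange used exactly where the paper uses them. Nothing further is needed.
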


\noindent \textit{Proof}.

\medskip

i)

\begin{prooftree}
    \AxiomC{}
    \RightLabel{\footnotesize \textsc{hyp}}
    \UnaryInfC{$\Gamma , \varphi \textit{ true} \vdash \varphi \textit{ true}$}
    \RightLabel{\footnotesize \textsc{just}}
    \UnaryInfC{$\Gamma , \varphi \textit{ true} \vdash \varphi \textit{ just true}$}
    \RightLabel{\footnotesize $\EXI$I}
    \UnaryInfC{$\Gamma , \varphi \textit{ true} \vdash \EXI \varphi \textit{ true}$}
    \RightLabel{\footnotesize $\to$I}
    \UnaryInfC{$\Gamma \vdash \varphi \to \EXI \varphi \textit{ true}$}
\end{prooftree}

\medskip

ii) See Proposition \ref{prop:6}.

\medskip

iii) 

\begin{prooftree}
\small
\AxiomC{}
\RightLabel{\footnotesize \textsc{hyp}}
\UnaryInfC{$\Gamma , \EXI\varphi \vdash \EXI\varphi $}
\RightLabel{\footnotesize \textsc{w}}
\UnaryInfC{$\Gamma , \EXI\varphi , \varphi \to \psi \vdash \EXI\varphi $}

\AxiomC{}
\RightLabel{\footnotesize \textsc{hyp}}
\UnaryInfC{$\Gamma , \varphi \to \psi \vdash \varphi \to \psi $}

\RightLabel{\footnotesize \textsc{w}}
\UnaryInfC{$\Gamma , \varphi \to \psi , \varphi \vdash \varphi \to \psi$}

\AxiomC{}
\RightLabel{\footnotesize \textsc{hyp}}
\UnaryInfC{$\Gamma , \varphi \vdash \varphi $}
\RightLabel{\footnotesize \textsc{w}}
\UnaryInfC{$\Gamma , \varphi , \varphi \to \psi \vdash \varphi $}
\RightLabel{\footnotesize \textsc{ex}}
\UnaryInfC{$\Gamma ,  \varphi \to \psi , \varphi \vdash \varphi $}

\RightLabel{\footnotesize $\to$E}
\BinaryInfC{$\Gamma ,  \varphi \to \psi , \varphi \vdash \psi $}

\RightLabel{\footnotesize \textsc{just}}
\UnaryInfC{$\Gamma ,  \varphi \to \psi , \varphi  \vdash \psi \textit{ just}$}

\RightLabel{\footnotesize \textsc{w}}
\UnaryInfC{$\Gamma ,  \varphi \to \psi , \varphi , \EXI\varphi  \vdash \psi \textit{ just}$}

\RightLabel{\footnotesize \textsc{ex}}
\UnaryInfC{$\Gamma ,  \EXI\varphi , \varphi \to \psi , \varphi \vdash \psi \textit{ just}$}

\RightLabel{\footnotesize $\EXI$E}
\BinaryInfC{$\Gamma , \EXI\varphi , \varphi \to \psi \vdash \psi \textit{ just}$}

\RightLabel{\footnotesize $\EXI$I}
\UnaryInfC{$\Gamma , \EXI\varphi , \varphi \to \psi \vdash  \EXI\psi $}

\RightLabel{\footnotesize \textsc{ex}}
\UnaryInfC{$\Gamma , \varphi \to \psi , \EXI\varphi \vdash  \EXI\psi $}

\RightLabel{\footnotesize $\to$I}
\UnaryInfC{$\Gamma , \varphi \to \psi \vdash  \EXI\varphi  \to \EXI\psi $}

\RightLabel{\footnotesize $\to$I}
\UnaryInfC{$\Gamma \vdash  (\varphi \to \psi ) \to  \EXI\varphi  \to \EXI\psi $}

\end{prooftree}

\noindent Thus, we can conclude that $\EXI$ is a lax modality.

However, at first glance, there appears to be a clear difference between Pfenning and Davies's system of lax logic and ours: their system does not utilize the just true variants (or rather in their case, the lax variants) of I/E rules (see Section \ref{sec:formal_comp}).

First, let us show a fragment of a logical variant of our system without these just true variants of I/E rules (i.e., without the j-variants $\to$Ij, $\to$Ej, $\EXI$Ij, $\EXI$Ej), is actually equivalent to Pfenning and Davies's system of lax logic. First, let us specify a translation function $\ell$ for propositions and contexts of our system into Pfenning and Davies's lax logic (\cite{pfenning2001}): $ \ell ( \varphi \to \psi )  = \ell (  \varphi ) \Rightarrow \ell ( \psi ) $, $ \ell ( \EXI\varphi ) = \medcircle \ell ( \varphi ) $, $  \ell (  p )  = p $, $  \ell ( \Gamma , \varphi \textit{ true}  ) =  \ell ( \Gamma ) ,  \ell ( \varphi ) \textit{ true}$.

Let us denote judgments derivable in our (logical) fragment as $\Gamma \vdash_{E} \varphi \textit{ true}$ and judgments derivable in Pfenning and Davies's lax logic as $\Gamma \vdash_L \varphi \textit{ true}$.

\begin{proposition} \label{prop:lax_correctness} (Correctness of translation)
 \begin{enumerate}

    \item[] If $\Gamma \vdash_{E} \varphi \textit{ true}$ then $\ell ( \Gamma ) \vdash_L \ell (  \varphi ) \textit{ true}$.
    \item[] If $\Gamma \vdash_{E} \varphi \textit{ just true}$ then $\ell ( \Gamma ) \vdash_L \ell ( \varphi ) \textit{ lax}$.

\end{enumerate}   
\end{proposition}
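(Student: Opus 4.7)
The plan is to prove both implications simultaneously by structural induction on the derivation of the given judgment in the fragment of the calculus (i.e., the system without $\to$Ij, $\to$Ej, $\EXI$Ij, $\EXI$Ej). Since the fragment has only two derivable judgment forms and they interact through the rules \textsc{just} and $\EXI$I/$\EXI$E, neither statement can be proved in isolation; mutual induction is essential.

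For the first implication, I would consider each rule whose conclusion has the form $\Gamma \vdash_E \varphi \textit{ true}$: the \textsc{hyp} case is immediate from the definition of $\ell$ on contexts; for $\to$I and $\to$E one applies the induction hypothesis to the premises and then uses the corresponding $\Rightarrow$I and $\Rightarrow$E rules of lax logic, recalling that $\ell(\varphi \to \psi) = \ell(\varphi) \Rightarrow \ell(\psi)$; for $\EXI$I the premise $\Gamma \vdash_E \varphi \textit{ just true}$ yields by the \emph{second} induction hypothesis that $\ell(\Gamma) \vdash_L \ell(\varphi) \textit{ lax}$, and one then applies $\medcircle$I to conclude $\ell(\Gamma) \vdash_L \medcircle \ell(\varphi) \textit{ true}$, which by definition of $\ell$ is $\ell(\EXI \varphi) \textit{ true}$.

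For the second implication, I would consider the two rules with conclusion $\Gamma \vdash_E \varphi \textit{ just true}$ available in the fragment. For \textsc{just}, the premise $\Gamma \vdash_E \varphi \textit{ true}$ gives by the \emph{first} induction hypothesis $\ell(\Gamma) \vdash_L \ell(\varphi) \textit{ true}$, and the corresponding promotion from $\textit{true}$ to $\textit{lax}$ in Pfenning and Davies's system delivers $\ell(\Gamma) \vdash_L \ell(\varphi) \textit{ lax}$. For $\EXI$E, the first premise $\Gamma \vdash_E \EXI\varphi \textit{ true}$ gives, by the first IH, $\ell(\Gamma) \vdash_L \medcircle\ell(\varphi) \textit{ true}$; the second premise $\Gamma, \varphi \textit{ true} \vdash_E \gamma \textit{ just true}$ gives, by the second IH, $\ell(\Gamma), \ell(\varphi) \textit{ true} \vdash_L \ell(\gamma) \textit{ lax}$; combining via $\medcircle$E yields $\ell(\Gamma) \vdash_L \ell(\gamma) \textit{ lax}$ as required.

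I expect no serious obstacle: the rules of the fragment are in one-to-one correspondence with the rules of lax logic, so each case reduces to the matching lax-logic rule after invoking the appropriate induction hypothesis. The only subtlety worth being explicit about is that the induction must be mutual, because \textsc{just} and $\EXI$I/$\EXI$E cross between the two judgment forms; a proof attempted by induction on just one of them would fail exactly at those rules. A minor bookkeeping point is that the clause $\ell(\Gamma, \varphi \textit{ true}) = \ell(\Gamma), \ell(\varphi)\textit{ true}$ guarantees that the contexts line up correctly when applying the induction hypothesis in the \textsc{hyp} and $\EXI$E cases.
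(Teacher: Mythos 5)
Your proposal is correct and follows essentially the same route as the paper, which proves the statement by simultaneous induction on the structure of the given derivations, relying on the close rule-by-rule correspondence between the fragment and Pfenning and Davies's lax logic (replacing $\EXI$ with $\medcircle$ and \textit{just true} with \textit{lax}). You simply spell out the individual cases that the paper leaves implicit, and your emphasis on the mutual character of the induction matches the paper's ``simultaneous induction.''
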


\begin{proof}
By simultaneous induction on the structure of the given derivations. This is straightforward, as the logical rules of our fragment and of Pfenning and Davies's lax logic closely correspond to each other (we essentially just need to replace $\EXI$ with $\medcircle$ and $\textit{just true}$ with $\textit{lax}$).   
\end{proof}

Now, let us specify a translation function $\varepsilon$ for propositions and contexts of Pfenning and Davies's lax logic into our system: $ \varepsilon( \varphi \Rightarrow \psi ) = \varepsilon( \varphi ) \to \varepsilon( \psi ) $, $\varepsilon( \medcircle\varphi ) = \EXI \varepsilon( \varphi )$, $\varepsilon( p ) = p $, $ \varepsilon( \Gamma , \varphi \textit{ true} ) =  \varepsilon( \Gamma ) , \varepsilon( \varphi ) \textit{ true}$. It holds that $ \varepsilon (  \ell ( \varphi ) ) = \varphi$.

\begin{proposition} (Completness of translation) \label{prop:lax_completness}
\begin{enumerate}
    \item[] If $\Gamma \vdash_L \varphi \textit{ true}$ then $\varepsilon ( \Gamma ) \vdash_{E} \varepsilon ( \varphi ) \textit{ true}$.
    \item[] If $\Gamma \vdash_L \varphi \textit{ lax}$ then $\varepsilon ( \Gamma ) \vdash_{E} \varepsilon ( \varphi ) \textit{ just true}$.
\end{enumerate}    
\end{proposition}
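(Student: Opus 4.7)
The plan is to proceed by simultaneous structural induction on the derivations of $\Gamma \vdash_L \varphi \textit{ true}$ and $\Gamma \vdash_L \varphi \textit{ lax}$, in direct parallel with the strategy used for Proposition \ref{prop:lax_correctness}. The claim is that the translation $\varepsilon$ sends each inference rule of Pfenning and Davies's lax logic to an admissible derivation pattern in our fragment $\vdash_E$, so the two systems are inter-derivable once the lexicon of $\medcircle$, $\Rightarrow$, and $\textit{lax}$ is swapped for $\EXI$, $\to$, and $\textit{just true}$.

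Rule by rule: the hypothesis rule and the weakening/exchange/contraction steps of $\vdash_L$ translate to their counterparts in $\vdash_E$ (the structural ones being admissible by the earlier Exchange, Weakening and Contraction lemmas). The rules $\Rightarrow$I and $\Rightarrow$E translate cleanly to $\to$I and $\to$E because $\varepsilon$ commutes with $\Rightarrow$. For the interesting modal cases, the $\medcircle$I rule of lax logic (``$\Gamma \vdash \varphi \textit{ lax}$ implies $\Gamma \vdash \medcircle \varphi \textit{ true}$'') is handled by applying the second induction hypothesis to obtain $\varepsilon(\Gamma) \vdash_E \varepsilon(\varphi) \textit{ just true}$ and then firing our $\EXI$I rule to get $\varepsilon(\Gamma) \vdash_E \EXI \varepsilon(\varphi) \textit{ true}$, which equals $\varepsilon(\medcircle\varphi) \textit{ true}$ by definition of $\varepsilon$. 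The $\medcircle$E rule (``from $\Gamma \vdash \medcircle\varphi \textit{ true}$ and $\Gamma, \varphi \textit{ true} \vdash \gamma \textit{ lax}$ conclude $\Gamma \vdash \gamma \textit{ lax}$'') is handled analogously: both induction hypotheses supply exactly the premises required by our $\EXI$E rule, whose conclusion is $\varepsilon(\Gamma) \vdash_E \varepsilon(\gamma) \textit{ just true}$. Finally, the lax logic rule promoting $\textit{true}$ to $\textit{lax}$ maps directly to our \textsc{just} rule.

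The fact that our fragment excludes the j-variants of the I/E rules is not an obstacle here, because Pfenning and Davies's lax logic likewise has no lax-typed introduction or elimination rules for its connectives; every $\textit{lax}$ judgment on their side is produced either by promotion from $\textit{true}$ or by $\medcircle$E, both of which are directly available in our fragment. I expect the main bookkeeping point, rather than a genuine obstacle, to be verifying that the base cases involving contexts go through smoothly — i.e., checking that $\varepsilon$ behaves uniformly on hypothetical judgments and that translating a hypothesis rule $\Gamma, \varphi \textit{ true} \vdash_L \varphi \textit{ true}$ yields the corresponding hypothesis rule in our system, which is immediate from the clause $\varepsilon(\Gamma, \varphi \textit{ true}) = \varepsilon(\Gamma), \varepsilon(\varphi) \textit{ true}$. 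Combined with Proposition \ref{prop:lax_correctness}, this establishes that the fragment of $\vdash_E$ without the j-variants is equivalent to lax logic under the mutually inverse translations $\ell$ and $\varepsilon$.
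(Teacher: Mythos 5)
Your proposal is correct and matches the paper's proof, which is exactly a simultaneous induction on the structure of the two given derivations; you have simply spelled out the rule-by-rule cases (including the key observation that lax logic has no lax-typed I/E rules for $\Rightarrow$, so the missing j-variants cause no trouble) that the paper leaves implicit.
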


\begin{proof}
By simultaneous induction on the structure of the given derivations.

Putting together Propositions \ref{prop:lax_correctness} and \ref{prop:lax_completness}, from  
$ \ell(\Gamma) \vdash_L  \ell (\varphi)\textit{ true}$ we can derive $\varepsilon ( \ell( \Gamma  )) \vdash_E \varepsilon ( \ell( \varphi  ))  \textit{ true}$, and thus $\Gamma \vdash_{E} \varphi \textit{ true}$, and finally conclude that $\Gamma \vdash_{E} \varphi \textit{ true}$ iff $\ell(\Gamma) \vdash_L  \ell (\varphi)\textit{ true}$.

\end{proof}

Thus, a fragment of our logical system without the rules $\to$Ij, $\to$Ej, $\EXI$Ij, $\EXI$Ej can be translated into Pfenning and Davies's lax logic. However, we can go a little bit further. As it turns out, all the \textit{just true} variants of rules $\to$Ij, $\to$Ej, $\EXI$Ij, $\EXI$Ej can be actually derived in our system from the standard \textit{true} variants $\to$I, $\to$E, $\EXI$I, $\EXI$E if we add the following rule (note that it corresponds to the right-to-left direction of Proposition \ref{prop:1}): 

\begin{prooftree}
    \AxiomC{$\Gamma \vdash  \varphi \to  \EXI \psi \textit{ true}$}
    \RightLabel{\footnotesize \textsc{r}}
    \UnaryInfC{$\Gamma \vdash  \varphi \to  \psi \textit{ just true}$}
\end{prooftree}

\noindent This also means that we can reduce a full logical variant of our system into Pfenning and Davies's lax logic extended by the rule \textsc{r}. 

\begin{proposition} \label{prop:lax_derivability}
The rules $\to$Ij, $\to$Ej, $\EXI$Ij, $\EXI$Ej are derivable from $\to$I, $\to$E, $\EXI$I, $\EXI$E together with the structural rules and the \textsc{r} rule.    
\end{proposition}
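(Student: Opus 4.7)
The plan is to exhibit an explicit derivation of each of the four j-variant rules using only the true variants $\to$I, $\to$E, $\EXI$I, $\EXI$E, the structural rules (weakening and exchange), the \textsc{just} rule, and the \textsc{r} rule. As will become apparent, three of the four rules can already be derived without appealing to \textsc{r}; only $\to$Ij genuinely requires it.

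For $\EXI$Ij, I would simply compose $\EXI$I with \textsc{just}: given $\Gamma \vdash \varphi \textit{ just true}$, $\EXI$I yields $\Gamma \vdash \EXI\varphi \textit{ true}$ and \textsc{just} then delivers $\Gamma \vdash \EXI\varphi \textit{ just true}$. The rules $\EXI$Ej and $\to$Ej are handled by the same trick of \emph{lifting} the just-true premise into a true judgment one $\EXI$ level up. For $\EXI$Ej, given $\Gamma \vdash \EXI\varphi \textit{ just true}$ and $\Gamma, \varphi \textit{ true} \vdash \gamma \textit{ just true}$, I first apply $\EXI$I to obtain $\Gamma \vdash \EXI\EXI\varphi \textit{ true}$, and then use a two-layered $\EXI$E: the inner $\EXI$E, performed under the additional assumption $\EXI\varphi \textit{ true}$ (introduced by \textsc{hyp}), combines that hypothesis with a suitable weakening of the second premise to conclude $\Gamma, \EXI\varphi \textit{ true} \vdash \gamma \textit{ just true}$; the outer $\EXI$E then discharges the $\EXI\varphi \textit{ true}$ assumption using the lifted derivation. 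The case $\to$Ej proceeds analogously: lift $\Gamma \vdash \varphi \to \psi \textit{ just true}$ to $\Gamma \vdash \EXI(\varphi \to \psi) \textit{ true}$ via $\EXI$I, and under the assumption $\varphi \to \psi \textit{ true}$ combine it by $\to$E with a weakening of the given $\Gamma \vdash \varphi \textit{ true}$, convert the resulting $\psi \textit{ true}$ to $\psi \textit{ just true}$ via \textsc{just}, and apply $\EXI$E to discharge the intermediate hypothesis.

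The remaining case $\to$Ij is where \textsc{r} is essential. From $\Gamma, \varphi \textit{ true} \vdash \psi \textit{ just true}$, applying $\EXI$I and then $\to$I produces $\Gamma \vdash \varphi \to \EXI\psi \textit{ true}$, and one application of \textsc{r} then yields $\Gamma \vdash \varphi \to \psi \textit{ just true}$. The main obstacle — and the reason \textsc{r} cannot be dispensed with here although it was not needed above — is that the other three rules can all be simulated by routing through the $\EXI$I/$\EXI$E pair, which provides an object-level conduit for a just-true premise; by contrast, introducing $\varphi \to \psi \textit{ just true}$ when no $\EXI$ already appears at the head of the conclusion forces us to cross from $\varphi \to \EXI\psi \textit{ true}$ back to $\varphi \to \psi \textit{ just true}$, and this is precisely the gap bridged by \textsc{r}.
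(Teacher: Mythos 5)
Your proposal is correct and follows essentially the same route as the paper: $\EXI$Ij via $\EXI$I followed by \textsc{just}; $\to$Ej and $\EXI$Ej by lifting the just-true premise with $\EXI$I and then eliminating with $\EXI$E (doubled in the $\EXI$Ej case) after weakening/exchange; and $\to$Ij via $\EXI$I, $\to$I, and a single application of \textsc{r}. Your added observation that only $\to$Ij genuinely needs \textsc{r} agrees with the paper's own remarks in Section~\ref{sec:logical}.
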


\noindent \textit{Proof}.

\medskip

$\to$Ij: 

\begin{prooftree}
    \AxiomC{$\Gamma , \varphi \textit{ true} \vdash \psi \textit{ just true}$}
    \RightLabel{\footnotesize $\EXI$I}
    \UnaryInfC{$\Gamma , \varphi \textit{ true} \vdash \EXI \psi \textit{ true}$}
    \RightLabel{\footnotesize $\to$I}
    \UnaryInfC{$\Gamma \vdash  \varphi \to  \EXI \psi \textit{ true}$}
    \RightLabel{\footnotesize \textsc{r}}
    \UnaryInfC{$\Gamma \vdash  \varphi \to  \psi \textit{ just true}$}

\end{prooftree}

$\to$Ej:

\begin{prooftree}
    \AxiomC{$\Gamma \vdash \varphi \to \psi \textit{ just}$}
    \RightLabel{\footnotesize $\EXI$I}
    \UnaryInfC{$\Gamma \vdash \EXI (\varphi \to \psi)$}

    \AxiomC{}
    \RightLabel{\footnotesize \textsc{hyp}}
    \UnaryInfC{$\Gamma , \varphi \to \psi \vdash \varphi \to \psi $}

    \AxiomC{$\Gamma \vdash \varphi$}
    \RightLabel{\footnotesize \textsc{w}}
    \UnaryInfC{$\Gamma , \varphi \to \psi  \vdash \varphi$}
    \BinaryInfC{$\Gamma , \varphi \to \psi  \vdash \psi $}
    \RightLabel{\footnotesize \textsc{just}}
    \UnaryInfC{$\Gamma , \varphi \to \psi \vdash \psi \textit{ just}$}

    \RightLabel{\footnotesize $\EXI$E}
    \BinaryInfC{$\Gamma \vdash  \psi \textit{ just}$}
\end{prooftree}

$\EXI$Ij:

\begin{prooftree}
    \AxiomC{$\Gamma \vdash \varphi \textit{ just true}$}
    \RightLabel{\footnotesize $\EXI$I}
    \UnaryInfC{$\Gamma \vdash \EXI\varphi \textit{ true}  $}
    \RightLabel{\footnotesize \textsc{just}}
    \UnaryInfC{$\Gamma \vdash \EXI\varphi \textit{ just true}  $}
\end{prooftree}

$\EXI$Ej:

\begin{prooftree}
    \AxiomC{$\Gamma \vdash \EXI \varphi \textit{ just}$}
    \RightLabel{\footnotesize $\EXI$I}
    \UnaryInfC{$\Gamma \vdash \EXI  \EXI \varphi $}

    \AxiomC{}
    \RightLabel{\footnotesize \textsc{hyp}}
    \UnaryInfC{$\Gamma , \EXI \varphi   \vdash \EXI \varphi  $}
    
    \AxiomC{$\Gamma , \varphi \vdash \gamma \textit{ just}$}

    \RightLabel{\footnotesize \textsc{w}}
    \UnaryInfC{$\Gamma , \varphi , \EXI \varphi \vdash \gamma \textit{ just}$}
    \RightLabel{\footnotesize \textsc{ex}}
    \UnaryInfC{$\Gamma , \EXI \varphi , \varphi  \vdash \gamma \textit{ just}$}
    \RightLabel{\footnotesize $\EXI$E}
    
    \BinaryInfC{$\Gamma , \EXI \varphi \vdash \gamma \textit{ just} $}
    \RightLabel{\footnotesize $\EXI$E}
    \BinaryInfC{$\Gamma \vdash \gamma   \textit{ just} $}
    
\end{prooftree}

\section{Concluding remarks}

The main novelty of our system in comparison to others found in the literature is that it allows a rigorous treatment of basic judgments of the form ``there exists a proof of $\varphi$'' which are in other systems (e.g., HoTT, CTT) treated only as metalevel judgments or internalized as modal propositions. Furthermore, our system distinguishes between \textit{judgmental} and \textit{propositional existence}, while the latter is derived from the former. The link between propositional existence $\EXI$ and truncation $\TRU$ thus suggests that the notion of truncation is ultimately based on the notion of judgmental existence. From this perspective, it would be interesting to explore a weaker variant of truncation introduced by \cite{kraus2015} and called anonymous existence.\footnote{An unexpected connection between truncation and the notion of presupposition of inquisitive logic was recently observed by \cite{puncocharpezlar2024} (forthcoming), which also suggests a link to our calculus of judgmental existence.} Similarly, the links between the existence modality $\EXI$ and lax modality $\medcircle$ of \cite{fairtlough1997} and intuitionistic knowledge modality $\textsf{K}$ of \cite{artemov2016}\footnote{We thank Szymon Chlebowski for bringing our attention to this paper.} seem intriguing and would deserve further investigation. It also remains an open question what exactly is the relationship between the proof-relevant judgments $\varphi \textit{ true}$ and proof-irrelevant existential judgments $\varphi \textit{ just true}$. We have mentioned that the former entails the latter but not vice versa, however, more can be definitely said about this relationship.
% cite: V\'it Pun\v{c}och\'a\v{r} and Ivo Pezlar - Informative Presupposition in Inquisitive Logic, manuscript.

The main limitation of the present calculus is that we are considering only a propositional fragment with $\to$ and $\EXI$. Extending the system with other connectives (including identity), then generalizing it towards first-order logic, and finally towards dependent type theory are the most straightforward directions for future development.\footnote{Or, alternatively, extending systems like HoTT or CTT with judgments of the form $\varphi \textit{ inhab}$/$a \existsJ \varphi$. It is also worth noting that in impredicative versions of HoTT, it is possible to define truncation $\TRU$ without introducing a new type constructor. This suggests that existence $\EXI$ might also be definable in such a way. Furthermore, the ability to define truncation as a higher-inductive type opens the question whether existence $\EXI$ could be defined in a similar manner.} The addition of propositional identity (so that we could form new propositions such as $p : a = b$ and thus also $p \existsJ a = b$) would be especially interesting as it would allow us to fully explore the notion of propositional proof irrelevance and related notions such as mere propositions. A natural next step would be to also adopt judgmental identity (so that we could form new judgments such as $a \equiv b : \varphi$ and thus also $a \equiv b \existsJ \varphi$) as it would allow us to start exploring the notion of judgmental proof irrelevance as well.

% Acknowledgments
% Vitek, Ansten, and conferences/audiences of: S. Poreba + LC 2024 + Vitek's workshop?
% V\'it Pun\v{c}och\'a\v{r}, Ansten Klev,

{%\footnotesize
%\small
%\bibliographystyle{plainnat}
%\bibliography{ref.bib}
}

\end{document}